\newtheorem{theorem}{Theorem}%
\newtheorem{corollary}{Corollary}
\newtheorem{lemma}{Lemma}
\newtheorem{proposition}[theorem]{Proposition}%
\newtheorem{remark}{Remark}%
\newtheorem{definition}{Definition}
\renewcommand{\algocf@captiontext}[2]{#1\algocf@typo. \AlCapFnt{}#2} 
\def\@algocf@capt@plain{top}
\renewcommand{\algocf@makecaption}[2]{%
	\addtolength{\hsize}{\algomargin}%
	\sbox\@tempboxa{\algocf@captiontext{#1}{#2}}%
	\ifdim\wd\@tempboxa >\hsize
	\hskip .5\algomargin%
	\parbox[t]{\hsize}{\algocf@captiontext{#1}{#2}}
	\else%
	\global\@minipagefalse%
	\hbox to\hsize{\box\@tempboxa}
	\fi%
	\addtolength{\hsize}{-\algomargin}%
}
\def\R{\mathds{R}}
\newcommand{\Wcal}{{\mathcal W}}
\newcommand{\push}{\#}
\renewcommand{\S}{\mathbb S}
\newcommand{\dd}{\mathrm d}
\newcommand{\quant}{F^{-}}
\newcommand{\mutilde}{{\widetilde{\mu}}}
\newcommand{\nutilde}{{\widetilde{\nu}}}
\newcommand{\Ttilde}{\widetilde{T}}
\newcommand{\mubar}{\bar{\mu}}
\newcommand{\virgolette}[1]{{``#1''}}
\newcommand{\dist}{\mathrm d_R}
\DeclareMathOperator*{\argmin}{argmin}
\title{Wasserstein Principal Component Analysis for Circular Measures}
\author[1]{Mario Beraha}
\author[2]{Matteo Pegoraro}
\affil[1]{Department of Economics and Statistics, University of Torino}
\affil[2]{Department of Mathematical Sciences, Aalborg University}
\begin{document}

\maketitle


\begin{abstract}
We consider the 2-Wasserstein space of probability measures supported on the unit-circle, and propose a framework for Principal Component Analysis (PCA) for data living in such a space. We build on a detailed investigation of the optimal transportation problem for measures on the unit-circle which might be of independent interest. In particular, we derive an expression for optimal transport maps in (almost) closed form and propose an alternative definition of the tangent space at an absolutely continuous probability measure, together with the associated exponential and logarithmic maps.
PCA is performed by mapping data on the tangent space at the Wasserstein barycentre, which we approximate via an iterative scheme, and for which we establish a sufficient a posteriori condition to assess its convergence. Our methodology is illustrated on several simulated scenarios and a real data analysis of measurements of optical nerve thickness.
\end{abstract}
\textbf{Keywords}:
Optimal Transport, Circular Measures, PCA, Weak Riemannian Structure, Distributional Data Analysis

\maketitle

\section{Introduction}

The analysis of complex data, such as high-dimensional, functional, compositional, or manifold-valued data, is an emerging trend in the statistical literature.
Such complex data are in fact routinely collected by medical imaging, genomic analyses, earth sciences etc..
To achieve meaningful analyses, it is fundamental that the space in which data take values is endowed with the right mathematical structure to capture the variability of the phenomena under investigation.
For instance, when analysing functional data, it is often the case that one wants to consider functions defined up to reparametrisation of the domain. This leads to the problem of ``alignment'' \citep[see, e.g.,][]{aneurisk_jasa}. Similarly, when analysing compositions, it is not suitable to embed the analysis in the usual euclidean space since the data are constrained on the unit-dimensional simplex, which makes operations such as addition and scalar multiplication meaningless unless care is taken \citep{menafoglio}.
At the same time, some metrics might be more suitable than others to compare two datapoints, think for instance at the sup or $L_2$ norm for functions.

In this paper, we focus on distributional data analysis, that is, a particular case of analysis of complex data, where datapoints are probability measures. Specifically,
we consider the 2-Wasserstein space of probability measures supported on the unit-circle $\S_1 := \{(x, y) \in \R^2: x^2 + y^2 = 1\}$, and propose a framework for Principal Component Analysis (PCA) for data living in such a space.
PCA is popular among practitioners as it produces both a set of orthogonal directions, usually interpreted as the main directions of variability in the dataset, and a map from the space where data live onto the space generated by such directions.
Hence, PCA can be used to visually interpret the variability in the dataset and to reduce the dimensionality of the data, by projecting data on their scores. In particular, classical multivariate statistical tools, such as linear regression or clustering, can be carried out by working on the scores.
In the context of data living on nonlinear spaces, this latter feature is particularly appealing, as it allows using out-of-the-box tools directly on the PCA scores.

Our investigation stems from the analysis in \cite{ali2021circular}, where measurements of the optical nerve head, obtained via Optical Coherence Tomography (OCT), are studied in connection to the development and progression of optic neuropathies such as glaucoma. 
The OCT produces a circular scan of the eye measuring neuroretinal rim (NRR) thickness, so that
each datapoint can be considered as a function supported on $\S_1$. These are then normalised to eliminate undesired variability introduced by different magnitudes, so that data can be considered as probability densities on $\S_1$.
A clustering pipeline on the coefficients of the Fourier series expansion of the densities is then developed, thus taking into account the circular nature of the support but overlooking the compositional nature of the data.

Analysing probability densities with methodologies from functional data analysis has been questioned in recent years, as this overlooks the constrained nature of such objects. See, e.g., \cite{menafoglio} and the references therein. Similarly, dealing with data supported on $\mathbb \S_1$ by mapping the circle to a subset of the real line might produce misleading results, as they depend on the map chosen. See, for instance, \Cref{fig:ex_data}, where we compare the Wasserstein distances between two datapoints in the dataset in \cite{ali2021circular} when seen as measures on the real line, for two different choices of the maps that ``unroll'' the circle onto $[0, 2\pi)$.
In \Cref{sec:eye} we discuss how the tools developed in this paper lead can be used for analysing the OCT measurements data.
In particular, by means of our Wasserstein PCA for measures on $\S_1$, we show how the principal directions lead to interpreting the main sources of variability in the data. Moreover, we assess the effectiveness of the dimensionality reduction pipeline by showing how a hierarchical clustering algorithm fit on the scores can effectively divide data into groups with different shapes of the optical nerve.

\begin{figure}
	\centering
		\includegraphics[height=6cm]{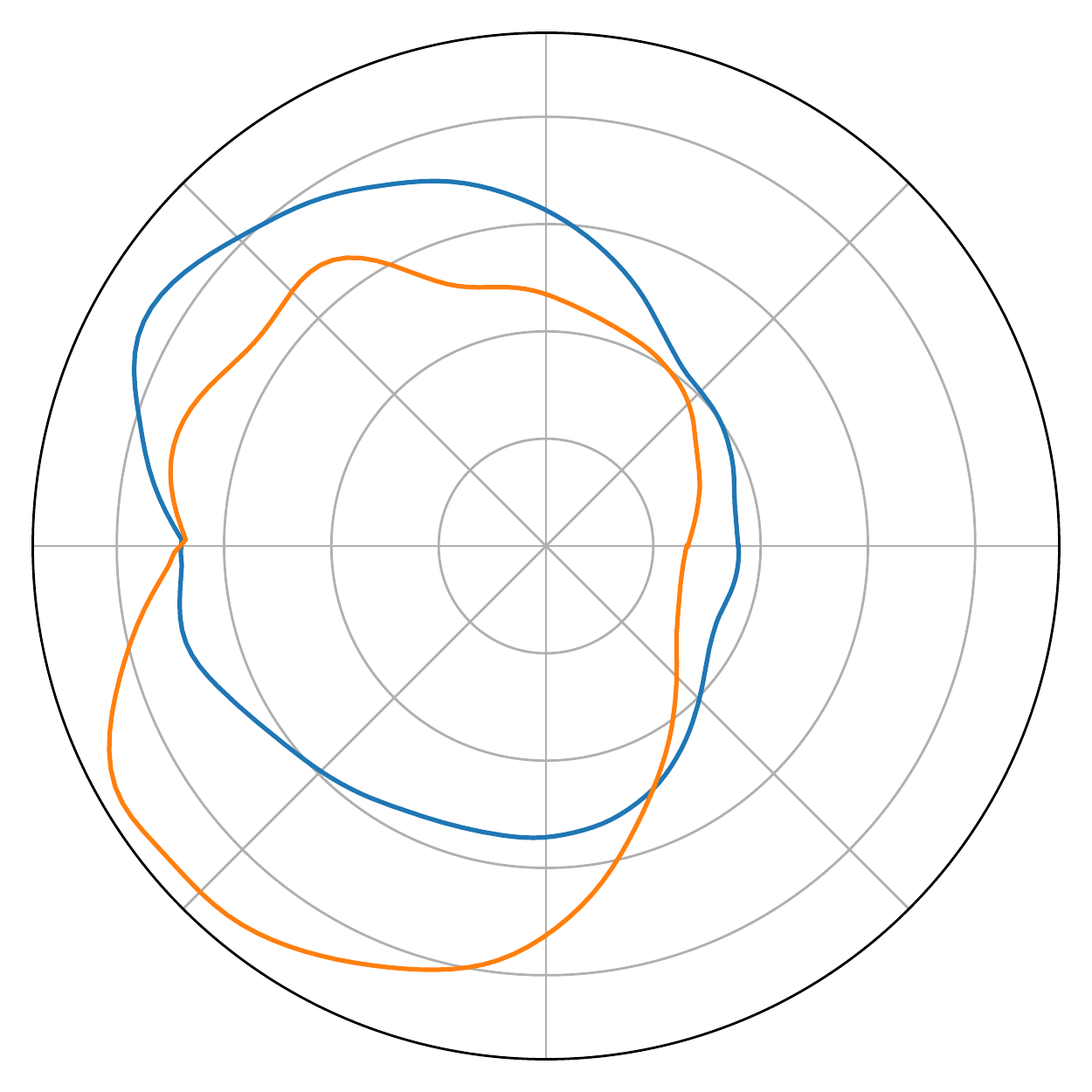}
		\includegraphics[height=6cm]{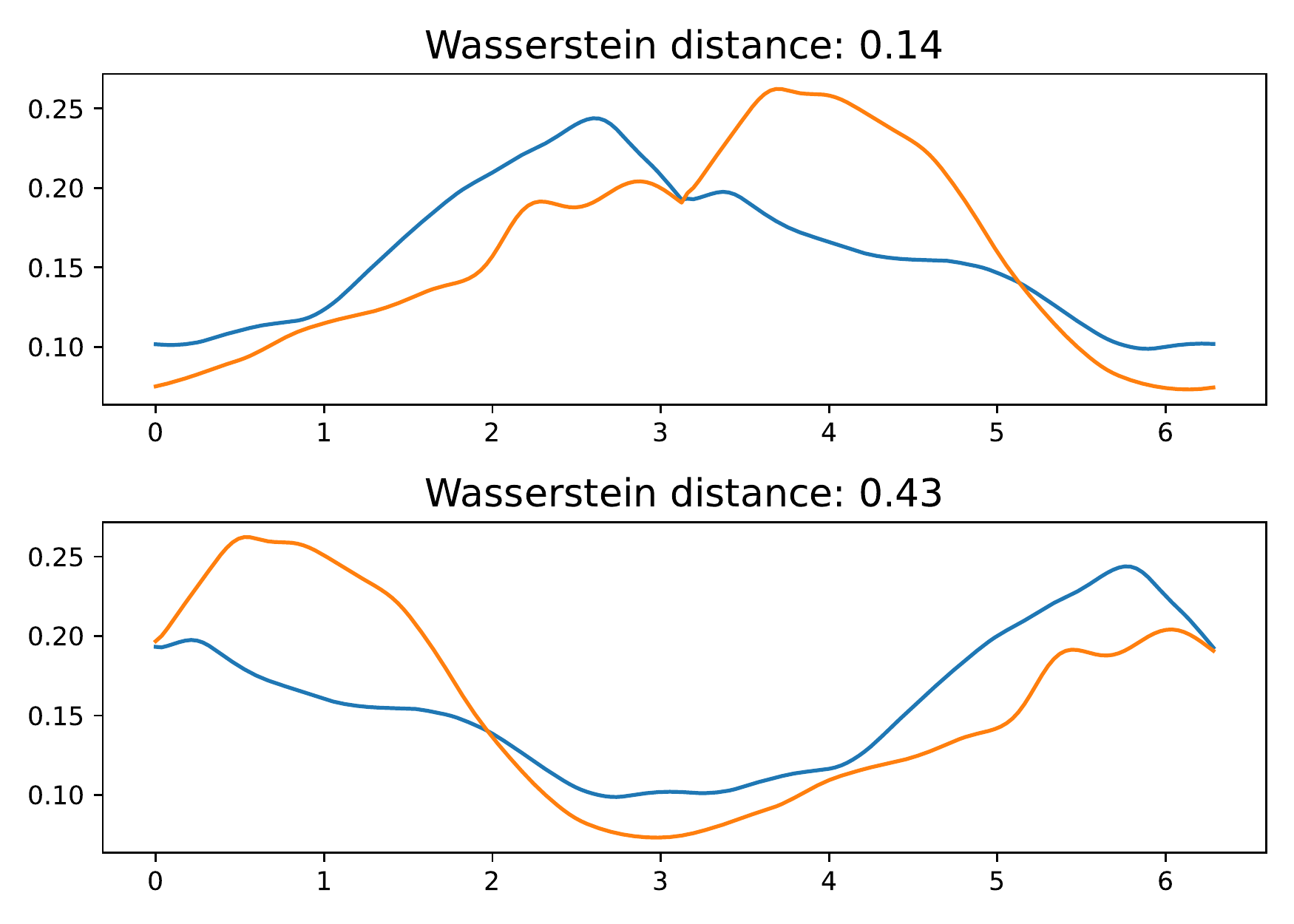}
		\caption{Two OCT samples on $\S_1$ (left) and when unrolled on $[0, 2\pi]$ (left) starting from 0 (top) or from $\pi$ (bottom) and the associated Wasserstein distances computed between the probability measures on $[0, 2\pi]$.}
		\label{fig:ex_data}
\end{figure}

\subsection{Related Works}

PCA for probability measures has been framed in different contexts, but, to the best of our knowledge, the focus has been either on analysing histograms (or discrete measures) or measures supported in $\mathbb R$.

Different definitions of PCA (and related algorithms) for distributions under the Wasserstein metric have been proposed in \cite{geodesic}, \cite{geod_vs_log} and \cite{projected}. 
In these works, the space of square-integrable probability measures on the real line, endowed with the $2$-Wasserstein metric (also called the Wasserstein space), is considered in close analogy to a ``Riemannian'' manifold and the characterisation of the tangent space at an absolutely continuous probability measure \citep{ambrosio2008gradient} is exploited to perform statistical analysis.

When the statistical units are not embedded in a linear space, classical tools from multivariate statistics need to be generalised to take into account the nonlinearity of the space. Think, for instance, on how the Frech\'aet mean generalises the notion of sample mean.
For data supported on manifolds, the statistical tools can be subdivided into \emph{extrisinc} or \emph{intrinsic} \citep{extrinsic, intrinsic, pennec_manifolds, huck_geod, patra, geod_regression_flet, geod_regression_chakra}. 
The extrinsic approach consists of finding a linear space (usually a tangent space at a suitable centring point) that approximates the manifold (or the region of the manifold where data are located), and performing standard (Euclidean) PCA on the projection of data onto the linear space.
In the intrinsic case, instead, the geodesic structure of the manifold is exploited to define a PCA based on the distance between datapoints and (geodesically) convex subsets of the manifold, whereby one considers convex subsets as the natural generalisation of linear subspaces. Note that extrinsic techniques introduce an approximation that might significantly impact the results if the manifold is not well approximated, while intrinsic techniques are usually computationally intensive and not suitable to analyse large datasets.

Focusing on the case of data in the 2-Wasserstein space of measures supported on $\R$, we can label the the \emph{geodesic}-PCA in \cite{geodesic} as an intrinsic method, while the \emph{log} PCA in \cite{geod_vs_log} and the \emph{projected} one in \cite{projected} are extrinsic tools.
These approaches are based on the explicit knowledge of optimal transport maps from an absolutely continuous measure to any other measure, which is a peculiarity of this particular setting. Moreover, \cite{geodesic, geod_vs_log, projected} exploit well known isometric isomorphisms between the 2-Wasserstein space and closed convex cones in suitably defined $L_2$ spaces. Thus, the ``manifold'' nature of the space of probability measure is taken into account by considering the ``cone constraints''.
The \emph{log}-PCA in \cite{geod_vs_log} can be, in principle, applied to distributions over more complex domains. 
However, as discussed in \cite{projected}, the \emph{log}-PCA results in poor interpretability of the components and does not allow to work on the scores, which is usually a standard requirement for PCA.

\subsection{Our Contribution and Outline}

In extending the previously proposed approaches for Wasserstein PCA to measures on $\S_1$ we face several nontrivial issues.
These have to do with the non-Euclidean nature of $\S_1$, which cannot be ignored. Indeed, consider the following example: fix a point $\theta \in \S_1$ and ``unroll'' the circle starting from $\theta$, which results in a bijection between $\S_1$ and $[0, 2\pi]$. It might be tempting to treat distributions on $\S_1$ as distributions on an interval of the real line. 
However, the Wasserstein metric is then dependant on the chosen $\theta$, as shown for instance in Figure \ref{fig:ex_data}.

Optimal transport for measures supported on Riemannian manifolds is an active area of research \citep{mccann_polar, gigli2011inverse,kim_barycenter}.
In particular, \cite{mccann_polar} provides a characterisation of optimal transport maps while \cite{gigli2011inverse} proposes different definition of tangent spaces based on the transport maps and plans.
Due to the generality of their framework, the resulting expressions are not amenable for computations.

The first main contribution of this paper is to provide a detailed investigation of optimal transport for measures supported on $\S_1$. In particular, we derive an expression for optimal transport maps in (almost) closed form and propose an alternative definition of tangent space at any absolutely continuous probability measure.
Contrary to the general definition in \cite{gigli2011inverse}, it is possible to characterise explicitly the image of the ``logarithmic map'' (i.e., the map from the Wasserstein to the tangent spaces). Moreover, we establish an homeomorphism between the Wasserstein space and the image of the logarithmic map.

Then, building on these important results, we propose a framework for PCA for measures on $\S_1$.
Our approach consists in choosing a suitable tangent space at a point $\bar \mu$, and analyse the transformed data obtained by mapping the observations to the tangent via the logarithmic map.
The tangent space is a Hilbert space, so that standard PCA could be carried out on the transformed data. However, the image of the logarithmic map is a convex cone inside the tangent space. We argue that such a constraint should be considered when performing PCA to obtain interpretable results. 
Indeed, as discussed in \cite{geod_vs_log,projected}, failing to do so results in poor interpretability of the directions, and the impossibility to work in the scores. Essentially, both issues are due to the fact that the principal directions might not be orthogonal (or even geodesics) when seen as curves in the Wasserestein space.
Following \cite{geodesic}, we propose a nested PCA, that requires solving a variational problem over the space of probability measures to find the principal directions.
Introducing a suitable B-spline approximation, we show how such an optimisation problem can be translated into a finite-dimensional constrained optimisation problem, whose solution can be approximated numerically using standard software for constrained optimisation. 

Finally, we discuss an algorithm to approximate the Wasserstein barycentre and propose to use the output of such an algorithm as the centring point $\bar \mu$ for the PCA.
Our algorithm follows the one in \cite{zemel2019frechet}, which requires explicit knowledge of optimal transport maps. We derive a sufficient \emph{a posteriori} condition to assess its convergence to the barycentre, and validate it on several simulations, leaving a theoretical analysis for future works.

The paper is structured as follows. In \Cref{sec:opt_on_manifolds} we cover the necessary background material on optimal transport. \Cref{sec:opt_s1} contains the main results related to optimal transport for measures on $\S_1$ and \Cref{sec:PCA} discusses our PCA framework and the numerical approximation of the Wasserestin barycentre. Numerical illustrations are presented in \Cref{sec:numerical_ill}, where we discuss a simulation study for the PCA on location-scale families of distributions, highlighting the differences between the case of measures on $\mathbb \R$ and $\mathbb \S_1$.
In \Cref{sec:eye} we present our analysis of the OCT measurements. Finally, we conclude the paper with a discussion on open problems and future work in \Cref{sec:conclusion}.
Proofs, further background material, and complementary results are deferred to the appendix.

\section{Background on Optimal Transport and on Manifold-valued Data Analysis}
\label{sec:opt_on_manifolds}

In this section, we provide a brief account of optimal transport and the Wasserstein distance for measures on compact manifolds. See, e.g., \cite{ambrosio2008gradient} for a detailed treatment. Technical details are deferred to Appendix \ref{sec:app_preliminaries}.

\subsection{Riemannian Manifolds.}
Informally, one can think of an $n$-dimensional smooth manifold $M$ as a set which locally behaves like a Euclidean space: it can be covered with a collection of open sets $(U_i)_{i \geq 1}$ for which there exist homeomorphisms $\varphi: U_i \rightarrow \varphi(U_i) \subset \mathbb R^n$, called coordinate charts, which satisfy some compatibility conditions.
We may refer to $(U_i, \varphi(U_i))$ as a \emph{local parametrisation} of the manifold.
A Riemannian manifold $(M, g)$ of dimension $n$ is a smooth manifold $M$ endowed with (a smooth family of) inner products $g = (g_x)_{x \in M}$ on the tangent space $T_x M$ at each point $x \in M$.
Its tangent bundle $TM$ is defined as 
\begin{equation}\label{eq:tang_bundle}
	TM := \coprod_{x \in M} T_x M =  \bigcup_{x \in M} \{x\} \times T_x M.
\end{equation}
Each $T_x M$ is a vector space of dimension $n$. The tangent bundle is itself a smooth manifold of dimension $2n$ with a standard smooth structure. See \cite{lee_intro_manifolds} for an introduction to Riemannian manifolds.

The \emph{exponential} map at $z \in M$ denoted by $\exp_z : TM \rightarrow M$ allows us to map a tangent vector $v \in T_x M$ onto the manifold itself. Informally, $\exp_z(v)$ is the arrival point of the geodesic starting at $z$ with direction $v$ travelled for a unit of time.
The \emph{logarithmic} map $\log_z:  M \rightarrow TM$, where it is defined, satisfies $\exp_z \circ \log_z(x) = x$.
The inner product $g$ induces the volume measure $\omega$, which is locally (i.e., on a chart $(U, \varphi)$) given by
\begin{equation}\label{eq:vol_meas}
	\mathcal{L}_M(A)=\int_{\varphi(A)} \mid \text{det}(g(\varphi^{-1}(x)))\mid^{1/2} d \mathcal{L}(x)
\end{equation}
for any measurable $A \subset U$ and with $\mathcal{L}$ being the Lebesgue measure. See  \Cref{sec:app_preliminaries} for measure theoretical details.

\subsection{Wasserstein space.}
To define the Wasserstein metric, denote by $\mathcal{P}(M)$ the space of probability measures on $M$ and let $c: M \times M \rightarrow \R_+$ be a cost function. The $p$-Wasserstein distance between two probability measures on $M$, say $\mu$ and $\nu$,  is 
\begin{equation}\label{eq:w_dist}
	W_p(\mu, \nu)^p = \min_{\gamma \in \Gamma(\mu, \nu)} \int_{M \times M} c(x, y)^p d\gamma(x,y), \qquad \mu, \nu \in \mathcal{P}(M)
\end{equation}
where $\Gamma(\mu, \nu)$ is the set of all probability measures on $M \times M$ with marginals $\mu$ and $\nu$.
The existence of (at least one) optimal plan  $\gamma^o$ attaining the minimum in \eqref{eq:w_dist} is ensured if $c$ is lower semicontinuous \citep{ambrosio2008gradient}.
Definition \eqref{eq:w_dist} is due to Kantorovich and can be seen as the weak formulation of Monge's optimal transportation problem, i.e.
\[
	W_p(\mu, \nu)^p = \inf_{T: T\push\mu=\nu} \int_{M} c(x, T(x))^p d\mu(x)
\]
where $\push$ denotes the pushforward operator: $T\push \mu(A) = \mu(T^{-1}(A))$ for all measurable $A$.
It can be proven that when an optimal map exists, then this induces an optimal transport plan $\gamma^o = (\text{Id}_M, T) \push \mu$ and the two formulations are equivalent. However, there are several situations in which Monge's problem has no solution.

In the following, we will always consider the Riemannian distance $\dist(\cdot, \cdot)$ as cost function and set $p=2$.
We restrict our focus on measures in the 2-Wasserstein space, that is the subset of probability measures
\[
	\Wcal_2(M)=\Big\{\mu \in \mathcal{P}(M) :\, \int_M \dist(x,x_0)^2 d\mu(x)<\infty \text{ for every }x_0\in M\Big\}.
\] 
This ensures that Wasserstein distance is always finite.

\subsection{Geometry of the Wasserstein space.}
The Wasserstein space $(\Wcal_2, W_2)$ can be endowed with a weak Riemannian structure induced 
by the tangent spaces of $\Wcal_2$ at any absolutely continuous measure with respect to the volume measure \eqref{eq:vol_meas}.
As in the case of measures supported in $\R^n$, the tangent spaces are subset of $L^2$ spaces of vector-valued functions defined on the ground space (in this case, $M$). Their definition needs some further background.

Consider a vector field $v: M \rightarrow TM$ such that for every $z \in M$, $v_z := v(z) \in T_z M$. 
To be more precise, denote by $\pi$ the canonical projection map $\pi: TM \rightarrow M$, i.e. $\pi(z, v) = z \in M$, then $v$ must be such that
\[
	\pi\circ v = \text{Id}_M
\]
where $\text{Id}_M$ is the identity map on $M$. Let $S(M)$ be the collection of all such vector fields.
Then, for a measure $\mu \in \mathcal{P}(M)$ we can define $L^2_\mu$ as 
\begin{equation}\label{eq:l2_mu}
	L^2_\mu(M) = \Big\lbrace v \in S(M): \int g(v_z,v_z)^2 d\mu(z) <\infty  \Big\rbrace.
\end{equation}
See \Cref{sec:app_preliminaries} for further details.
For $v \in S(M)$ we can define the map $\exp(v): M \rightarrow M$ such that  $\exp(v)(z) := \exp_z(v_z)$ for $z \in M$.
With this notation, we can state a fundamental theorem in optimal transportation due to \cite{mccann_polar}.
\begin{theorem}[Characterisation of optimal transport plans]
	Let $\mu, \nu \in \mathcal W_2(M)$. If $\mu$ is absolutely continuous with respect to the volume measure \eqref{eq:vol_meas}, there exists a unique optimal transport plan $\gamma^o$ which has the form $\gamma^o = (\text{Id}_M, T) \push \mu$, where $T: M \rightarrow M$.
	Moreover, there exists a $\dist^2$-concave function $\phi$ such that $T = \exp(- \nabla \phi)$. 
\end{theorem}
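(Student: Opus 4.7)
The plan is to establish this via Kantorovich duality together with the Riemannian regularity theory of $c$-concave functions, adapted from Brenier's argument on $\R^n$.

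First, I would invoke Kantorovich duality for the lower semicontinuous cost $c(x,y)=\dist(x,y)^2$. Since $\mu,\nu\in\Wcal_2(M)$, strong duality holds and one obtains a $c$-concave potential $\phi:M\to\R\cup\{-\infty\}$ and its $c$-transform $\phi^c$ such that any optimal plan $\gamma^o$ is concentrated on the $c$-superdifferential of $\phi$: namely, for $\gamma^o$-a.e.\ $(x,y)$ one has $\phi(x)+\phi^c(y)=\dist(x,y)^2$, with the reverse inequality holding everywhere. This reduces the problem to understanding the geometric structure of $\partial^{c}\phi$.

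Second, I would establish that $c$-concave potentials inherit local Lipschitz regularity from the semiconcavity of $y\mapsto \dist(x,y)^2$ on a Riemannian manifold. A Riemannian version of Rademacher's theorem (which can be checked chart by chart using \eqref{eq:vol_meas} together with the Euclidean Rademacher theorem) then gives that $\phi$ is differentiable $\LL_M$-a.e. Since $\mu\ll \LL_M$ by hypothesis, $\phi$ is differentiable $\mu$-a.e.

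Third, I would exploit McCann's key geometric observation: away from the cut locus of $x$, the function $y\mapsto \dist(x,y)^2$ is smooth and its gradient (in the first variable) equals $-2\log_x(y)$. Pick $x$ a differentiability point of $\phi$ and $(x,y)\in \text{supp}(\gamma^o)$. Comparing the attained equality $\phi(x)+\phi^c(y)=\dist(x,y)^2$ with the everywhere inequality $\phi(x')+\phi^c(y)\le \dist(x',y)^2$ along curves emanating from $x$ with prescribed tangent yields the first-order relation $\nabla\phi(x)=-2\log_x(y)$, i.e.\ $y=\exp_x(-\tfrac12\nabla\phi(x))$; absorbing the constant into $\phi$ (or using the $\tfrac12\dist^2$ convention) gives $y=\exp_x(-\nabla\phi(x))=:T(x)$. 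To justify this rigorously, one must show that the cut locus of $x$ carries zero mass under the relevant conditional of $\gamma^o$, which follows because the cut locus has zero volume measure and hence zero $\mu$-measure.

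Fourth, since at $\mu$-a.e.\ $x$ the fibre of $\gamma^o$ over $x$ is concentrated at the single point $T(x)$, we conclude $\gamma^o=(\text{Id}_M,T)\push\mu$; this also gives uniqueness, since any other optimal plan must concentrate on the same graph by the same argument. The main obstacle is step three: on $\R^n$, Brenier's inversion of the subdifferential is purely convex-analytic, whereas on $M$ one must handle the nondifferentiability of $\dist^2$ on the cut locus and invoke the semiconcavity estimates together with the zero-volume property of the cut locus; this is the technical heart of McCann's argument and the reason absolute continuity of $\mu$ (rather than mere nonatomicity) is required.
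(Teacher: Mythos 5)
This theorem is not proved in the paper at all: it is quoted as a known result of McCann (\emph{Polar factorization of maps on Riemannian manifolds}), and the appendix only recalls the definition of $\dist^2$-concavity before restating it. So the only meaningful comparison is with McCann's original argument, and your outline does follow it faithfully in its overall architecture: Kantorovich duality producing a $c$-concave potential, local Lipschitz/semiconcave regularity plus Rademacher to get $\mu$-a.e.\ differentiability (this is exactly where absolute continuity of $\mu$ enters), the first-order relation $\nabla\phi(x)=\nabla_x\dist^2(x,y)=-2\log_x(y)$ on the support of the plan, and uniqueness from the fact that every optimal plan is concentrated on the same graph (for the last step it is cleanest to note that all optimal plans are supported on the superdifferential of one fixed maximising potential, or alternatively that the midpoint of two optimal plans is optimal and hence itself induced by a map).

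There is, however, one concrete misstep in your third step. You dispose of the cut locus by arguing that it ``has zero volume measure and hence zero $\mu$-measure,'' and conclude that the conditional of $\gamma^o$ over $x$ gives it no mass. But the cut locus of $x$ lives in the \emph{target} copy of $M$, where the relevant marginal is $\nu$ (or the disintegration of $\gamma^o$ given $x$), and no regularity whatsoever is assumed on $\nu$ --- it may well charge a set of zero volume. The zero-volume argument only legitimately applies on the source side, to discard the non-differentiability set of $\phi$. The correct mechanism, which is the technical heart of McCann's proof, is different: at a differentiability point $x$ of $\phi$, the equality $\phi(x)+\phi^c(y)=\dist^2(x,y)$ together with the global inequality makes $x'\mapsto\dist^2(x',y)$ both superdifferentiable (by semiconcavity) and subdifferentiable (it dominates the differentiable function $\phi(x')+\phi^c(y)$ and touches it at $x$), hence differentiable at $x$; McCann's lemma then shows that differentiability of $\dist^2(\cdot,y)$ at $x$ already forces $y$ to lie outside the cut locus of $x$ and identifies the gradient as $-2\log_x(y)$, so that $y=\exp_x(-\tfrac12\nabla\phi(x))$ is uniquely determined. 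With that replacement your sketch becomes a correct account of the cited proof.
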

The $\dist^2$-concavity condition is rather technical and not needed in the following, for this reason we report it only in \Cref{sec:app_preliminaries} of appendix, see \cite{gigli2011inverse} for further details. To make explicit the dependence of the transport map on the source and target measures, we will use notation $T_{\mu}^\nu$ to refer to the optimal transport map (OTM) from $\mu$ to $\nu$.

The existence and uniqueness of optimal transport maps suggest the following definition of tangent spaces \citep[Corollary 6.4 of ][]{gigli2011inverse}
\begin{equation}
\text{Tan}_\mu(\Wcal_2(M))=\overline{\lbrace 
v \in L^2_\mu(M) \mid \exists \varepsilon>0 :
 (\text{Id}_{M}, \exp (tv))\push \mu \text{ is optimal for } t\leq \varepsilon
\rbrace}^{L^2_\mu}
\end{equation}

As in the case of Riemannian manifolds, we can define the exponential and logarithmic maps that allow to move from the tangent space $\text{Tan}_\mu(\Wcal_2(M))$ to the Wasserstein space and vice versa.
\begin{equation}
	\begin{aligned}
	\exp_\mu&: L^2_\mu(M) \rightarrow\Wcal_2(M), \qquad \exp_\mu(v)=\exp(v)\push \mu \\
	\log_\mu&:\Wcal_2(M)\rightarrow L^2_\mu(M), \qquad \log_\mu(\nu) = v \text{ s.t. }  \exp(v) = T_\mu^\nu
	\end{aligned}
\end{equation}
This structure is usually referred to as the \emph{weak Riemannian structure} of $\Wcal_2(M)$.

\section{Optimal Transport on the circle}\label{sec:opt_s1}

In this section, we specialise the general theory outlined in Section \ref{sec:opt_on_manifolds} to the case of measures supported on the unit-radius circle.

\subsection{Geometry of $\mathbb{S}_1$}

For our purposes, it is convenient to define the unit-radius circle as
$\S_1 := \{z \in \mathbb{C}: \, \mid z \mid = 1 \}$, where $\mid \cdot \mid$ denotes the module of a complex number.
We first present the smooth (group) structure of $\S_1$ and then describe its Riemannian structure.

To endow $\S_1$ with a group structure, we start by considering the map $\exp_c:\mathbb{R}\rightarrow \S_1$ defined as $\exp_c(x)=e^{ i x}$, and the map $\log_c: \S_1 \rightarrow \R$ defined as $\log_c(z) = x\in [0,2\pi)$ such that $z=e^{i x}$. Note that $log_c$ is right inverse of $\exp_c$, i.e., $\exp_c\circ \log_c = \text{Id}_{\S_1} $. 
The exponential map $\exp_c$ is usually referred to as \emph{universal covering} of $\S_1$ \citep{munkres}.
Clearly, we take the multiplication between complex numbers as the group operation: $\cdot: \S_1 \times \S_1 \rightarrow \S_1$ given by $z \cdot w = \exp_c(\log_c(z) + \log_c(w))$. Informally speaking $\log_c(z)$ is the ``angle'' associated with the polar representation of $z$ and $\cdot$ is the sum of the angles.
It can be trivially seen that $(\S_1, \cdot)$ is a group and $\exp_c: (\R, +) \rightarrow (\S_1, \cdot)$ is a group morphism.

Through $\exp_c$ and $\log_c$ we can define the smooth structure of $\S_1$ by considering at each $z \in \S_1$ the map $\exp_z(x) := \exp_c(x + \log_c(z))$, that is the shifted version of the exponential map, and $\log_z(w) = y$ such that $y \in [-\pi/2, \pi/2)$ and $\exp_z(\log_z(w)) = w$.
Letting $V_z := \S_1 \setminus \{-z\}$, we have that for each $z \in \S_1$ the couple $(V_z, \log_z)$ is a coordinate chart. 
With this differential structure $\S_1$ is a Lie Group and its tangent bundle is $T\S_1 = \{(x,v)\mid x\in \S_1 \text{ and } v\in T_x\S_1\}\simeq \S_1\times \R$. We call $1$ the point $(1,0)$ which gives the neutral element in $\S_1$.

We consider the Riemannian metric $g$ is induced by the embedding $\S_1\hookrightarrow \mathbb{C}\simeq \mathbb{R}^2$, that is $g_z(x, y) = xy$ for $x, y \in T_z\S_1 \simeq \R$. 
This induces the arc-length distance $\dist(z, w) = \mid \log_c(z) - \log_c(w) \mid$. 
Note that $det(g)\equiv 1$, so that  $\mathcal{L}_{\S_1} = \exp_c \push \mathcal{L}$ or, equivalenty, $\log_c\push\mathcal{L}_{\S_1} = \mathcal{L}$.
Thus for any $f: \S_1 \rightarrow \R$
\begin{equation}
\int_{\S_1} f(z) d\mathcal{L}_{\S_1}(z) = \int_{-\pi/2}^{\pi/2}  f(\exp_c(x)) d\mathcal{L}(x).
\end{equation}
See Appendix \ref{sec:app_preliminaries} for further details.

\subsection{Optimal transport maps}
\label{sec:OTM_S1}

With the notation introduced in the previous section we now focus on the optimal transportation problem on $M = \S_1$ endowed with its Riemannian distance $d_R$. 

The fundamental observation is that a measure $\mu$ on $\S_1$ can be equivalently represented by a \emph{periodic} measure on $\mathbb{R}$ defined as $\mutilde(A) := \mu(\exp_c(A))$ for any measurable $A$, which entails $\mutilde(A) = \mutilde(A + p)$ for any $p \in  2\pi\mathbb{Z}$, where $A + p$ amounts to shifting all the points in $A$ by the amount $p$.
Then we define the ``periodic cumulative distribution function'' associated with $\mutilde$ as $F_\mutilde(x) = \mutilde([0, x))$ for $x \in [0, 2\pi]$ and extend it over $\R$ via the rule 
$F_\mutilde(x +2\pi) = F_\mutilde(x) + 1$. 
For $\theta \in \R$, let $F^\theta_\mutilde(x) = F_\mutilde(x) + \theta$ denote a vertical shift of the cumulative distribution function. 
Note that the measure induced by $F^\theta_\mutilde$ is independent from $\theta$ and is always $\mutilde$. This easily follows from, for instance, $\mutilde([a,b])=F^\theta_\mutilde(b)-F^\theta_\mutilde(a)=F_\mutilde(b)-F_\mutilde(a)$.

Denote with $\quant_\mutilde$ the associated quantile function, i.e., the (generalised) inverse of $F_\mutilde$. 
We have that $( F^\theta_\mutilde )^-(x)=\quant_\mutilde(x-\theta)$.
Thus, $\theta$ acts as a rotation of the quantiles around the circle, by a factor of $z_\theta^{-1}=\exp_c(-\theta)$.
Hence, the $0$-th quantile $(F^\theta_\mutilde)^{-}(0)$ is not $0$ but $z_\theta^{-1}$.
Equivalently, $F^\theta_\mutilde(y) = \mutilde([z_\theta^{-1}, y))$.

Exploiting results contained in \cite{delon2010fast}, the following theorem provides an explicit characterisation for the optimal transport maps between two measures on $\S_1$.
\begin{theorem}\label{teo:opt_map}
	Define $\theta^*$ as the solution of the following minimisation problem:
	\begin{equation}\label{eq:theta_cost}
		\theta^* = \argmin_{\theta \in \R} \int_0^1 \left(\quant_\mutilde(u)  - (F^\theta_\nutilde)^{-}(u)\right)^2 \dd u
	\end{equation}
	Then the optimal transport map between $\mu$ and $\nu$ is
\begin{equation}\label{eq:otm}
	T^\nu_\mu := \exp_c \circ \left((F^{\theta^*}_\nutilde)^{-} \circ F_\mutilde \right) \circ \log_c.
\end{equation}
\end{theorem}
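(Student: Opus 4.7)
The strategy is to lift the circular optimal transport problem to the universal cover $\R$, reducing it to a one-parameter family of classical one-dimensional OT problems indexed by a ``cut point'' parameter $\theta$, and then to optimise over $\theta$. The core analytical input is the cyclical-monotonicity analysis of \cite{delon2010fast}, which shows that every quadratic OT problem on $\S_1$ admits an optimal plan of this shift-and-transport form.

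First, I identify $\mu,\nu$ with their periodic lifts $\mutilde, \nutilde$ via $\log_c$. The key observation is that $\dist(z,w)^2 = \min_{k\in\mathbb{Z}} |\log_c z - \log_c w + 2\pi k|^2$, so any plan $\gamma$ on $\S_1\times\S_1$ lifts, in a cost-preserving way, to a plan on $\R\times\R$ by choosing for each infinitesimal mass $\dd\gamma(z,w)$ the representative of $w$ closest to $\log_c z$. Conversely, any real-line plan projects back to a plan on $\S_1$ of equal or greater cost. This identifies circular OT plans with a subclass of real-line plans characterised by a single ``cyclic shift'' $\theta$ that aligns the source and target quantiles.

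Next, for fixed $\theta\in\R$, the shifted target CDF $F^\theta_\nutilde$ describes the periodic measure $\nutilde$ viewed as starting at the point $z_\theta^{-1}=\exp_c(-\theta)$ on the circle. The minimisation of $\int|x-y|^2\dd\gamma$ between the source restricted to a fundamental domain and this shifted target is a standard one-dimensional Wasserstein problem; by the Fr\'echet--Hoeffding result, its unique monotone optimum is the map $(F^\theta_\nutilde)^{-} \circ F_\mutilde$, and the attained cost equals
\begin{equation*}
C(\theta) \;=\; \int_0^1 \bigl(\quant_\mutilde(u) - (F^\theta_\nutilde)^{-}(u)\bigr)^2 \dd u.
\end{equation*}
Appealing to \cite{delon2010fast}, the global infimum of the circular cost coincides with $\inf_\theta C(\theta)$, attained at some $\theta^*$; existence of the minimiser follows from continuity of $C(\cdot)$ together with its $1$-periodic behaviour modulo an additive linear term, which yields coercivity over a fundamental period. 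Transporting the one-dimensional optimum back to $\S_1$ by pre-composing with $\log_c$ and post-composing with $\exp_c$ produces exactly the formula \eqref{eq:otm}, and absolute continuity of $\mu$ ensures $F_\mutilde$ has an essentially well-defined inverse so that $T^\nu_\mu$ is $\mu$-a.e.\ well defined; uniqueness then follows from McCann's theorem recalled above.

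The main obstacle is the reduction to the one-parameter family: one must rule out optima that would wrap mass around both arcs of the circle simultaneously, forcing every optimal plan to be of the ``shift-and-monotone-rearrange'' type. This is the content of the cyclical monotonicity argument in \cite{delon2010fast}, which I would invoke directly; the alternative is to re-derive it by exhibiting, for any bidirectional plan, an explicit swap of two mass quanta that strictly decreases the quadratic cost, thereby contradicting optimality. Everything else in the proof is a bookkeeping translation between $\S_1$ and $\R$ via the universal covering maps $\exp_c$ and $\log_c$.
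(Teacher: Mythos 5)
Your proposal is correct and follows essentially the same route as the paper: both reduce the circular problem to Delon et al.'s one-parameter family of locally optimal monotone plans indexed by the shift $\theta$, identify the optimal $\theta^*$ as the minimiser of $C(\theta)$, and transfer the resulting monotone rearrangement back to $\S_1$ via $\exp_c$ and $\log_c$ (the paper closes the loop with an explicit sandwich inequality between the line cost and the circle cost, which is the same content as your cost-preserving lift/projection correspondence between plans).
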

Note that \eqref{eq:otm} is closely related to the expression of optimal transport maps for measures on $\R$. In that case, setting $\exp_c = \log_c = \mbox{Id}$ and $\theta^* = 0$ we recover the classical formulation of OTMs for measures on the real line.
In the following, we will write $\widetilde T_{\mutilde}^\nutilde := (F^{\theta^*}_\nutilde)^{-} \circ F_\mutilde$ to denote the map between $\mutilde$ and $\nutilde$ associated with the optimal $\theta^*$ in \eqref{eq:theta_cost}.
Although $\widetilde T_{\mutilde}^\nutilde$ is not ``optimal'' (since the cost associated to the transport of periodic measures is either zero or unbounded), we will refer to it as the optimal transport map between $\mutilde$ and $\nutilde$ in light with its connection with $T^\nu_\mu$.

Let us give some intuition behind the optimal transport map $T_\mu^\nu$.
Observe that precomposing $(F^{\theta^*}_\nutilde)^{-}$ with $\left( F_\mutilde \right)_{\mid[0,2\pi]}$, obtaining $\widetilde T_{\mutilde}^{\nutilde}$, means transporting quantiles identified by $F_\mutilde^-$ onto the corresponding shifted quantiles of $(F^{\theta^*}_\nutilde)^{-}_{\mid [0,1]}$, in an anti-clockwise order (due to the definition of $\exp_c$). 
Note that $T_{\mutilde}^{\nutilde}
((F_\mutilde)^{-}(0))=
T_{\mutilde}^{\nutilde}(0)= F^-_\nutilde(-\theta^*) =: x_{-\theta^*}$ and 
\begin{equation}\label{eq:OTM_shift}
T_{\nutilde}^{\mutilde}
((F_\mutilde)^{-}(1))\leq T_{\nutilde}^{\mutilde}
(2\pi)=  (F^{\theta^*}_\nutilde)^{-}
(1)=\quant_\nutilde(1-\theta^*)=2\pi+\quant_\nutilde(-\theta^*)=2\pi+x_{-\theta^*},
\end{equation}
which means that the optimal transport maps sends $[0, 2\pi)$ into $[x_{-\theta^*}, 2\pi + x_{-\theta^*})$. 
As a consequence we can think at this situation as \virgolette{unrolling} the circle in two different points, namely $z_{\theta^*}^{-1} = \exp_c(-\theta^*)$ for $\nu$ and $1 = \exp_c(0)$ for $\mu$, and then matching the measures induced on $\R$. 
For instance, suppose $\mu$ and $\nu$ have densities $f_\mu$ and $f_\nu$ with respect to the Lebesgue measure on $\S_1$, $\mathcal{L}_{\S_1}$, then $(F^\theta_\nutilde)^{-}_{\mid [0,1]}$ is the quantile function associated with the density $f_\nu(\exp_{c}(x))$ supported on $[x_{-\theta},2\pi+x_{-\theta}]$. Clearly no action is taken on $\mu$ and thus we transport $f_\mu(\exp_{c}(x))$ supported on 
 $[0,2\pi]$ onto $f_\nu(\exp_c(x))$ supported on $[x_{-\theta},2\pi+x_{-\theta}]$.      
The parameter $\theta^*$ then selects the optimal point from which to start unrolling the circle for $\nu$.

Optimal transport maps are fundamental for the statical methods we develop in the later sections: the optimal transport maps $T_i$ from a reference distribution to the $i$-th datapoint will play the role of \virgolette{tangent vectors}, allowing us to approximate the Wasserstein space, with a space of functions.
Thus, it is essential to characterise the optimal transport maps on $\S_1$, understanding their properties, and inspecting them assuming the perspective of the associated maps $\widetilde T$ between periodic measures on $\R$.

The following theorem proves a fundamental property of OTMs.

\begin{theorem}\label{teo:opt_theta}
Given $\mu$ a.c. measure and $\nu\in\Wcal_2(\S_1)$, $\widetilde T:=(F^{\theta^*}_\nutilde)^{-} \circ F_\mutilde $ is an optimal transport map if and only if:
\begin{equation}\label{eq:int_condition}
\int_0^{2\pi} \widetilde T(u)-u \, \dd u =0 .
\end{equation} 
\end{theorem}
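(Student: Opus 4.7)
The plan is to show that condition \eqref{eq:int_condition} is exactly the first-order optimality condition for the variational problem \eqref{eq:theta_cost} characterising $\theta^*$ in Theorem 1, and then upgrade this first-order condition to a genuine ``if and only if'' via a convexity argument. Throughout, the twisted periodicities $\quant_\mutilde(1) = \quant_\mutilde(0) + 2\pi$ and $\quant_\nutilde(s+1) = \quant_\nutilde(s) + 2\pi$ will play a central role.

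First, I write $h(\theta) := \int_0^1 (\quant_\mutilde(u) - \quant_\nutilde(u-\theta))^2\, du$. Since $\mu$ is a.c., $\quant_\mutilde$ is absolutely continuous on $[0,1]$, so I can substitute $s = u - \theta$ and apply Leibniz' rule to differentiate in $\theta$. The resulting boundary terms are $(\quant_\mutilde(1) - \quant_\nutilde(1-\theta))^2$ and $(\quant_\mutilde(0) - \quant_\nutilde(-\theta))^2$, which are \emph{equal} by the twisted periodicities and therefore cancel. What remains, after changing variables back, is the clean expression
\[
h'(\theta) = 2 \int_0^1 \bigl(\quant_\mutilde(u) - \quant_\nutilde(u-\theta)\bigr)\, \quant_\mutilde'(u)\, du.
\]
Note that no regularity on $\quant_\nutilde$ is required. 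Substituting $s = \quant_\mutilde(u)$, so $ds = \quant_\mutilde'(u)\,du$ and $s$ ranges over $[\quant_\mutilde(0), \quant_\mutilde(0) + 2\pi]$, this becomes
\[
h'(\theta) = 2 \int_{\quant_\mutilde(0)}^{\quant_\mutilde(0)+2\pi} \bigl(s - \widetilde T(s)\bigr)\, ds.
\]
A direct calculation using $F_\mutilde(u+2\pi) = F_\mutilde(u)+1$ gives $\widetilde T(u+2\pi) = \widetilde T(u) + 2\pi$, so $s \mapsto s - \widetilde T(s)$ is $2\pi$-periodic and the integral over any length-$2\pi$ interval coincides with $\int_0^{2\pi}(s - \widetilde T(s))\,ds$. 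Hence $h'(\theta) = 0$ if and only if \eqref{eq:int_condition} holds.

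To turn the first-order condition into a characterisation of optimality, I show $h$ is convex on $\R$. For $\theta_1 < \theta_2$, monotonicity of $\quant_\nutilde$ gives $\quant_\nutilde(u-\theta_1) \geq \quant_\nutilde(u-\theta_2)$ pointwise, while $\quant_\mutilde' \geq 0$; substituting in the explicit formula for $h'$ yields $h'(\theta_1) \leq h'(\theta_2)$, i.e.\ $h'$ is non-decreasing. Combined with coercivity (which follows from $\quant_\nutilde(u-\theta) \to \mp \infty$ as $\theta \to \pm\infty$ via the periodic shift), this implies every critical point of $h$ is a global minimiser. By Theorem 1, $\widetilde T$ is the optimal transport map precisely when $\theta^*$ minimises $h$, which by the above is equivalent to \eqref{eq:int_condition}.

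The main obstacle is the sufficiency direction: a priori a critical point of $h$ need not be a minimiser, and so the integral condition might fail to guarantee optimality. The convexity argument resolves this cleanly, and an advantage of establishing it through monotonicity of $h'$ (rather than by computing $h''$) is that no absolute continuity assumption on $\nu$ is needed, matching the generality of the theorem's hypotheses.
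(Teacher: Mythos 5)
Your proof is correct and follows essentially the same route as the paper's: both reduce optimality of $\widetilde T$ (via Theorem \ref{teo:opt_map}) to stationarity of $\theta\mapsto C_{[\mu,\nu]}(\theta)$, compute the derivative by Leibniz' rule and the change of variables $s=\quant_\mutilde(u)$ to obtain $-2\int_0^{2\pi}(\widetilde T(u)-u)\,\dd u$, and invoke convexity to upgrade the first-order condition to a characterisation of the minimiser. The only difference is that the paper cites \cite{delon2010fast} for strict convexity of $C_{[\mu,\nu]}$, whereas you prove (non-strict) convexity directly from monotonicity of the two quantile functions — a self-contained touch, but not a different argument.
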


Comments on \Cref{teo:opt_theta} will follow throughout the manuscript as it impacts many of the upcoming definitions and results. Here we just point out that \Cref{eq:int_condition} is independent of the measure $\mu$ and is a purely analytical/geometric condition on $\widetilde T$.

\subsection{Weak Riemannian structure}\label{sec:riem}

As already mentioned, our aim is to exploit the weak-Riemannian structure of $\Wcal_2(\S_1)$ to obtain a more tractable representation of a data set of probability measure, which enables the use of statistical tools. Thus, we now specialise the definition of $\text{Tan}_\mu(\Wcal_2(M))$ and the associated exponential and logarithmic maps when $M \equiv \S_1$, translating the original vector-field definition in terms of more tractable functions. 
Furthermore, we establish properties of the logarithmic map that will be fundamental to develop a coherent statistical framework for analysing probability measures in $\Wcal_2(\S_1)$.

For our purposes, it is convenient to define $L^2_\mu(\S_1)$ as 
\begin{align*}
	L^2_\mu(\S_1) :&= \Big\{ v : \S_1 \rightarrow \R \text{ such that } \int_{\S_1} v^2(x) d\mu(x) < +\infty \Big\} \\
	&= \Big\{ v : [0, 2\pi) \rightarrow \R \text{ such that } \int_0^{2\pi} v^2(x) d\mutilde(x) < +\infty \Big\}
\end{align*}
where the second equality follows, with a slight abuse of notation, by considering $v \mapsto v \circ \log_c$.
Observe that we recover the space in \eqref{eq:l2_mu} by identifying $v(x)$ as an element of $T_x \S_1$.
Then, if $\mu$ is an absolutely continuous measure, we have 
\begin{equation}\label{eq:tanw_s1}
	\text{Tan}_\mu(\Wcal_2(\S_1)) = \overline{\{v : L^2_\mu(\S_1)  \mid \exists \varepsilon>0 :
 (\text{Id}_{\S_1}, \exp (tv))\push \mu \text{ is optimal for } t\leq \varepsilon \}}^{L^2_\mu}
\end{equation}
where we can interpret $v$ as a function defined on $\S_1$ or $[0, 2\pi)$ according to our needs.
Now we want to rewrite this definition to make it more easily tractable.

First, note that the optimality condition in \eqref{eq:tanw_s1} is equivalent to saying that there exist $\nu$ such that $\exp(tv)$ is an optimal transport map between $\mu$ and $\nu$.
Then, by \Cref{teo:opt_map} and the fact that $\exp_z(v_z) = \exp_c(\log_c(z) + v_z)$, the vector field $v$ in \eqref{eq:tanw_s1} can be written as $t v(\log_c(x)) = \Ttilde(x) - x$, where $\Ttilde$ is as in \Cref{teo:opt_map}, so that the OTM is $\exp_c(x + (\Ttilde(x) - x)) \equiv \exp_c(\Ttilde(x))$.
Hence, we can restate the definition of tangent space in terms of the maps $\Ttilde$ as:
\begin{equation}\label{eq:tanw_s1_v2}
	\text{Tan}_\mu(\Wcal_2(\S_1)) = \overline{\{\Ttilde : L^2_\mutilde([0, 2\pi])  \mid \exists \varepsilon>0 :
  \exp_c( \text{Id}  + t (\Ttilde - \text{Id})) \text{ is OTM for } t\leq \varepsilon \}}^{L^2_\mutilde}
\end{equation}
The definition of exponential and logarithmic map comes quite naturally:
\begin{equation}\label{eq:log_exp}
	\begin{aligned}
	\exp_\mu&: L^2_\mu(\S_1) \rightarrow\Wcal_2(\S_1), \qquad \exp_\mu(\Ttilde)= \left(\exp_c \circ \Ttilde \circ \log_c  \right)\push \mu \\
	\log_\mu&:\Wcal_2(\S_1)\rightarrow L^2_\mu(\S_1), \qquad \log_\mu(\nu) = \Ttilde \text{ s.t. }   \Ttilde(x) = \quant_\nutilde(F_\mutilde(x) - \theta^*)
	\end{aligned}
\end{equation}
where $\theta^*$ in the definition of the $\log_\mu$ map is as in \Cref{teo:opt_map}. Observe that $\exp_c \circ \Ttilde \circ \log_c$ is an OTM between $\mu$ and $\nu$.
Furthermore, from Theorem \ref{teo:opt_theta} we note that the vector field $v: [0, 2\pi) \rightarrow \R$ induced by an optimal transport map $\Ttilde$ (i.e. $v(u) = \Ttilde(u) - u$) satisfying \eqref{eq:int_condition} has zero mean when integrated along $\S_1$ with respect to $\mathcal L_{\S_1}$.
In particular, note that this condition does not depend on $\mu$ and gives a purely geometric characterisation of optimal transport maps. This is in accordance to other typically used optimality conditions such as cyclical monotonicity of the support of the transport plan and Brenier's characterisation of OTMs for measures on $\R^n$ \citep{ambrosio2008gradient}.

We now provide some further characterisations of the optimal transport maps in light of the pieces of notation we have just introduced. 
These will be useful to investigate the map $\log_\mu$ and implementation of numerical algorithms.
\begin{theorem}\label{teo:opt_map2}
	Given $\mu$ a.c. measure, $\widetilde T: \R\rightarrow \R$ induces an optimal transport map between $\mu$ and $\nu := \exp_c \circ \widetilde{T} \circ \log_c \push \mu$ if and only if 
	\begin{itemize}
		\item $\widetilde{T}$ is monotonically nondecreasing with $\widetilde{T}(x + p) = \widetilde{T}(x) + p$ for all $p \in 2\pi\mathbb{Z}$ 
		\item $\widetilde{T}$ satisfies \eqref{eq:int_condition}
		\item $|\widetilde{T}(x) - x| < \pi$  $\mu$-a.e.
	\end{itemize}
\end{theorem}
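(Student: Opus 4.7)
The plan is to split the argument into the two implications and reduce each to \Cref{teo:opt_map} and \Cref{teo:opt_theta} together with basic structural properties of periodic CDFs and quantiles.

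For the $(\Rightarrow)$ direction, I would assume $T=\exp_c\circ\widetilde T\circ\log_c$ is the OTM and invoke \Cref{teo:opt_map} to write $\widetilde T=(F^{\theta^*}_\nutilde)^-\circ F_\mutilde$. Monotonicity would follow since both $F_\mutilde$ and the generalised inverse $(F^{\theta^*}_\nutilde)^-$ are nondecreasing. Periodicity would come from the defining identity $F_\mutilde(x+2\pi)=F_\mutilde(x)+1$ combined with $(F^{\theta^*}_\nutilde)^-(u+1)=(F^{\theta^*}_\nutilde)^-(u)+2\pi$; iterating gives the statement for any $p\in 2\pi\mathbb Z$. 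The integral condition is exactly \Cref{teo:opt_theta}. For the bound $|\widetilde T(x)-x|<\pi$ I would argue that an OTM must take the short arc: if $|\widetilde T(x)-x|\geq \pi$ on a set of positive $\mu$-measure, redefining $\widetilde T$ there by adding or subtracting $2\pi$ would still produce a map pushing $\mu$ to $\nu$ (since $\exp_c$ identifies points $2\pi$ apart), but with strictly smaller Riemannian cost, contradicting optimality.

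For the $(\Leftarrow)$ direction, I would set $\nu:=(\exp_c\circ\widetilde T\circ\log_c)\push\mu$ and show $\widetilde T$ fits the form characterised by \Cref{teo:opt_map}. Monotonicity and periodicity imply that $\widetilde T$ restricted to $[0,2\pi)$ is a nondecreasing map onto $[a,a+2\pi)$ with $a=\widetilde T(0)$, and unwrapping $\nu$ from the point $\exp_c(a)$ yields a measure on $[a,a+2\pi)$ whose CDF and quantile coincide with $F^\theta_\nutilde$ and $(F^\theta_\nutilde)^-$ for the choice $\theta:=-F_\nutilde(a)$. Standard one-dimensional OT then forces $\widetilde T=(F^\theta_\nutilde)^-\circ F_\mutilde$. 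Applying \Cref{teo:opt_theta} with the integral condition identifies this $\widetilde T$ as an OTM representation, and the bound $|\widetilde T(x)-x|<\pi$ guarantees that the $\R$-cost $\int(\widetilde T-x)^2\,\dd\mutilde$ equals the genuine Riemannian cost $\int \dist^2(x,T(x))\,\dd\mu$ on $\S_1$, so that $\theta$ realises the minimum of \eqref{eq:theta_cost}.

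The hard part is the third condition in both directions: it is what translates the $\R$-valued representation $\widetilde T$ into a genuinely optimal transport map on $\S_1$, ruling out maps that wrap around the wrong side of the circle. In the forward direction it calls for a short-geodesic/cyclical-monotonicity argument; in the backward direction it guarantees the equivalence between the Euclidean-type integrand $(\widetilde T(x)-x)^2$ and the true Riemannian cost $\dist^2(x,T(x))$, so that the variational characterisation on $\R$ coincides with $W_2$-optimality on $\S_1$. Everything else reduces to routine bookkeeping of the rotation angle $\theta$ together with standard monotone-rearrangement theory.
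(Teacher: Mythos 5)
Your proposal is correct and follows essentially the same route as the paper: both implications are reduced to \Cref{teo:opt_map} and \Cref{teo:opt_theta}, with the monotone-rearrangement identification of $\widetilde T$ as a shifted quantile map $(F^{\theta}_\nutilde)^{-}\circ F_\mutilde$ in the backward direction and a short-arc comparison for the third condition in the forward direction. One small imprecision: replacing $\widetilde T$ by $\widetilde T\pm 2\pi$ on a set does not change the induced map on $\S_1$ or its Riemannian cost, so it cannot "contradict optimality" directly — what strictly decreases is the real-line cost $\int (\widetilde T(x)-x)^2\,\dd\mutilde(x)$, which by \Cref{teo:opt_map} equals $W_2^2(\mu,\nu)$ for the optimal lift while any lift's cost is bounded below by $\int d_{\mathbb{Z}}(\widetilde T(x),x)^2\,\dd\mutilde(x)=\int d_R(T(z),z)^2\,\dd\mu(z)\ge W_2^2(\mu,\nu)$; this is exactly the comparison the paper's proof carries out.
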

From the previous result, it is immediate to prove the following.
\begin{corollary}
	Let $\mu$ be an a.c. measure on $\S_1$. Then the image of $\log_\mu$ defined in \eqref{eq:log_exp} is a convex set.
\end{corollary}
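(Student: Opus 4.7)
The plan is to use Theorem \ref{teo:opt_map2} to identify the image of $\log_\mu$ with the set of maps $\widetilde T:\R\to\R$ satisfying the three listed conditions, and then verify that these conditions are preserved under convex combinations.

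First, I would argue that the image of $\log_\mu$ coincides with the set $\mathcal{T}$ of all maps $\widetilde T$ satisfying: (i) $\widetilde T$ is nondecreasing and $2\pi$-equivariant, i.e., $\widetilde T(x+p)=\widetilde T(x)+p$ for $p\in 2\pi\mathbb{Z}$; (ii) $\int_0^{2\pi}(\widetilde T(u)-u)\,\dd u = 0$; (iii) $|\widetilde T(x)-x|<\pi$ for $\mu$-a.e.\ $x$. The inclusion $\log_\mu(\Wcal_2(\S_1))\subseteq \mathcal{T}$ follows directly from Theorem \ref{teo:opt_map2}, since every $\widetilde T$ in the image is an OTM from $\mu$ to some $\nu$. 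Conversely, given $\widetilde T\in\mathcal{T}$, define $\nu := (\exp_c\circ\widetilde T\circ\log_c)\push \mu$; Theorem \ref{teo:opt_map2} then guarantees that $\widetilde T$ is the OTM between $\mu$ and $\nu$, and by uniqueness of the OTM from an absolutely continuous measure (\cite{mccann_polar}) we have $\log_\mu(\nu)=\widetilde T$.

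Next, I would show that $\mathcal{T}$ is convex. Let $\widetilde T_1,\widetilde T_2\in \mathcal{T}$ and $\lambda\in[0,1]$, and set $\widetilde T_\lambda = \lambda \widetilde T_1 + (1-\lambda)\widetilde T_2$. Monotonicity is preserved by convex combinations, and the $2\pi$-equivariance follows from
\[
\widetilde T_\lambda(x+p) = \lambda(\widetilde T_1(x)+p) + (1-\lambda)(\widetilde T_2(x)+p) = \widetilde T_\lambda(x) + p,
\]
which gives (i). Condition (ii) is preserved by linearity of the integral:
\[
\int_0^{2\pi}\!\!(\widetilde T_\lambda(u)-u)\,\dd u = \lambda\!\int_0^{2\pi}\!\!(\widetilde T_1(u)-u)\,\dd u + (1-\lambda)\!\int_0^{2\pi}\!\!(\widetilde T_2(u)-u)\,\dd u = 0.
\]
Finally, the strict pointwise bound in (iii) holds on the intersection of the two full $\mu$-measure sets on which $|\widetilde T_i(x)-x|<\pi$ ($i=1,2$), since by the triangle inequality
\[
|\widetilde T_\lambda(x)-x| \le \lambda|\widetilde T_1(x)-x| + (1-\lambda)|\widetilde T_2(x)-x| < \lambda\pi + (1-\lambda)\pi = \pi.
\]
Hence $\widetilde T_\lambda\in\mathcal{T}$, which proves convexity.

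There is no real obstacle here; the content of the corollary is essentially bookkeeping once Theorem \ref{teo:opt_map2} is available. The only subtlety worth being careful about is matching the image of $\log_\mu$ with the full set $\mathcal{T}$ (so that we are verifying convexity of the right object), and handling the $\mu$-a.e.\ quantifier in (iii) by intersecting the two full-measure sets before applying the triangle inequality.
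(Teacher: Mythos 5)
Your proof is correct and follows exactly the route the paper intends: the paper states the corollary is ``immediate'' from Theorem~\ref{teo:opt_map2} and gives no further argument, and your write-up simply supplies the bookkeeping — identifying the image of $\log_\mu$ with the set of maps satisfying the three conditions (using uniqueness of the OTM from an a.c.\ measure for the converse inclusion) and checking each condition is stable under convex combinations. Nothing to add.
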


Moreover, the following proposition establishes the continuity of both $\exp_\mu$ and $\log_\mu$.

\begin{theorem}\label{teo:continuity}
	Let $\mu$ be an a.c. measure on $\S_1$. Then: 
	\begin{enumerate}
		\item for any $\nu_1, \nu_2 \in \Wcal(\S_1)$
	\[
		W^2_2(\nu_1,\nu_2)\leq \int_{\S_1}d^2_R(T^{\nu_1}_\mu,T^{\nu_2}_\mu)d\mu 
\leq \parallel \log_\mu(\nu_1)-log_\mu(\nu_2) \parallel^2_{L^2_\mu}.
	\]
	In particular, the $\exp_\mu$ map is continuous;
	
	\item if $W_2(\nu,\nu_n)\rightarrow 0$ in $\Wcal_2(\S_1)$ then
	\[
		\parallel \log_\mu (\nu_n)-\log_\mu (\nu)\parallel_{L^2_\mu} \rightarrow 0,
	\]
	\end{enumerate}
	that is, the $\log_\mu$ map is sequentially continuous. 
	As a consequence, since in metric spaces sequential continuity is equivalent to continuity, $\Wcal_2(\S_1)$ and $\log_\mu(\Wcal_2(\S_1))$ are homeomorphic via $\log_\mu$ and $\exp_\mu$.
\end{theorem}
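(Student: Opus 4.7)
The first inequality follows by observing that $(T_\mu^{\nu_1}, T_\mu^{\nu_2})\push\mu$ is an admissible transport plan between $\nu_1$ and $\nu_2$, so $W_2^2(\nu_1, \nu_2) \le \int_{\S_1} \dist^2(T_\mu^{\nu_1}, T_\mu^{\nu_2})\, d\mu$ is immediate. For the second inequality, I would use the fundamental $1$-Lipschitz property of the universal covering: $\dist(\exp_c(a), \exp_c(b)) \le |a - b|$ for all $a, b \in \R$, since the right hand side is the Euclidean length of a lift of any path between $\exp_c(a)$ and $\exp_c(b)$. Combined with $T_\mu^{\nu_i} = \exp_c \circ \log_\mu(\nu_i) \circ \log_c$, this yields the pointwise bound $\dist(T_\mu^{\nu_1}(x), T_\mu^{\nu_2}(x)) \le |\log_\mu(\nu_1)(x) - \log_\mu(\nu_2)(x)|$, whose squared $L^2_\mu$-norm is precisely $\|\log_\mu(\nu_1) - \log_\mu(\nu_2)\|^2_{L^2_\mu}$. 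The Lipschitz continuity of $\exp_\mu$ is an immediate byproduct: for $v, w \in L^2_\mu(\S_1)$, the coupling $(\exp_c \circ v \circ \log_c,\, \exp_c \circ w \circ \log_c)\push\mu$ together with the same pointwise bound gives $W_2(\exp_\mu(v), \exp_\mu(w)) \le \|v - w\|_{L^2_\mu}$.

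For sequential continuity of $\log_\mu$, suppose $W_2(\nu_n, \nu) \to 0$, and let $\theta_n^*, \theta^*$ be the rotation parameters from Theorem \ref{teo:opt_map} associated with $(\mu, \nu_n)$ and $(\mu, \nu)$. Using \eqref{eq:log_exp} and the change of variables $u = F_\mutilde(x)$ (legitimate since $\mu$ is a.c.), the quantity to control becomes
\[
\|\log_\mu(\nu_n) - \log_\mu(\nu)\|^2_{L^2_\mu} = \int_0^1 \bigl(\quant_{\nutilde_n}(u - \theta_n^*) - \quant_\nutilde(u - \theta^*)\bigr)^2 du.
\]
The plan is then: (a) deduce from $W_2$-convergence on $\S_1$ the weak convergence $\nutilde_n \to \nutilde$ on $[0, 2\pi]$, and hence convergence of $\quant_{\nutilde_n}$ to $\quant_\nutilde$ at every continuity point of the latter, together with $L^2(0,1)$ convergence via the uniform bound $|\quant_{\nutilde_n}| \le 2\pi$; (b) show $\theta_n^* \to \theta^*$; and (c) conclude by dominated convergence, again using the uniform boundedness of the integrands.

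The core difficulty is step (b). I would address it by showing that the cost functional $C_n(\theta) := \int_0^1 (\quant_\mutilde(u) - \quant_{\nutilde_n}(u - \theta))^2 du$ converges uniformly in $\theta$ on the compact interval $[0,1]$ to its limit $C(\theta)$, which follows from step (a) and the uniform bound on quantile functions. To convert uniform convergence of the functionals into convergence of minimisers, one needs uniqueness of the minimiser of $C$ modulo period $1$: this should follow from the absolute continuity of $\mu$ combined with Theorem \ref{teo:opt_theta}, which recasts optimality of $\theta^*$ as a single integral identity on $\Ttilde$ and rules out continuum families of minimisers. Any subsequential limit of $\{\theta_n^*\}$ is then forced to equal $\theta^*$, so the whole sequence converges and step (c) is routine.

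Finally, the homeomorphism statement is automatic once both maps are continuous: by construction $\exp_\mu \circ \log_\mu = \mathrm{Id}_{\Wcal_2(\S_1)}$ and $\log_\mu \circ \exp_\mu = \mathrm{Id}$ on the image of $\log_\mu$, so $\log_\mu$ is a continuous bijection onto its image with continuous inverse $\exp_\mu$, and sequential continuity coincides with continuity in the metric setting.
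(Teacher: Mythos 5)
Your proposal is correct and follows essentially the same route as the paper: the same coupling $(T_\mu^{\nu_1},T_\mu^{\nu_2})\push\mu$ and the bound $d_R(\exp_c(a),\exp_c(b))\le |a-b|$ for item 1, and for item 2 the same chain (Wasserstein convergence $\Rightarrow$ weak convergence of $\nutilde_n$ $\Rightarrow$ quantile convergence $\Rightarrow$ convergence of the shifts $\theta_n^*$ $\Rightarrow$ dominated convergence), with the paper identifying the limit of $\theta_n^*$ by compactness plus the strict convexity of $\theta\mapsto C_{[\nu,\mu]}(\theta)$ from \cite{delon2010fast} rather than your argmin-stability phrasing. The only small caveat is that uniqueness of $\theta^*$ in your step (b) is best justified by that strict convexity (valid since $\mu$ is a.c.) rather than by Theorem~\ref{teo:opt_theta} alone.
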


We present also another proof of \Cref{teo:continuity}, item $2.$. To be more precise, it is a proof for a weaker result, but which we believe can be generalised to other compact Riemannian manifolds, on the contrary of the proof of \Cref{teo:continuity}, item $2.$.

\begin{proposition}\label{prop:weaker_cont}
    Let $\sigma$ be an a.c. measure and $\{\mu_t\}_t$ be a sequence of a.c. measures such that $\mu_t \rightarrow \mu_0$ (in the Wasserstein metric) as $t \rightarrow 0$. Further assume that the support of $\sigma$ and $\mu_t$ is (geodesically) convex and their density is bounded from above and strictly greater than zero.
Then	$\|\widetilde{T}_{\sigma}^{\mu_t} - \widetilde{T}_{\sigma}^{\mu_0}\| \rightarrow 0.$
\end{proposition}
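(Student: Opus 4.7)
The plan is to decompose the optimal transport map via the explicit expression
\[
\widetilde T_\sigma^{\mu_t}(x) = \quant_{\mutilde_t}(F_\sigma(x) - \theta_t^*)
\]
given by \Cref{teo:opt_map}, and prove convergence separately of (i) the quantile function $\quant_{\mutilde_t}$ and (ii) the optimal rotation parameter $\theta_t^*$, before combining the two via the Lipschitz bounds granted by the density assumptions.

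For step (i), I would first note that on the compact space $\S_1$ Wasserstein convergence is equivalent to weak convergence. Hence the periodic CDFs satisfy $F_{\mutilde_t} \to F_{\mutilde_0}$ pointwise at every continuity point of $F_{\mutilde_0}$. Since the density of $\mu_0$ is bounded above, $F_{\mutilde_0}$ is Lipschitz and in particular continuous, so the convergence is pointwise everywhere. The uniform lower bound $c>0$ on the densities implies that the quantile functions $\quant_{\mutilde_t}$ are uniformly Lipschitz with constant $1/c$ on $[0,1]$. An Arzel\`a--Ascoli-type argument combined with pointwise convergence then yields the uniform convergence $\|\quant_{\mutilde_t} - \quant_{\mutilde_0}\|_\infty \to 0$.

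For step (ii), the cost
\[
C_t(\theta) = \int_0^1 \bigl(\quant_{\widetilde\sigma}(u) - \quant_{\mutilde_t}(u - \theta)\bigr)^2 \dd u
\]
is continuous and $1$-periodic in $\theta$, and by step (i) it converges to $C_0$ uniformly on the compact interval $[0,1]$. The crucial point is to argue that $C_0$ admits a unique minimizer $\theta_0^*$: using the Lipschitz and strict monotonicity properties of $\quant_{\widetilde\sigma}$ and $\quant_{\mutilde_0}$ granted by the two-sided density bounds on convex supports, one shows that any two minimizers would force the shifted quantiles to coincide, which the strict positivity of the densities rules out. Uniform convergence of $C_t$ on a compact set together with uniqueness of the minimizer then gives $\theta_t^* \to \theta_0^*$ via the standard argmin-stability argument.

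Finally, combining (i) and (ii) via the triangle inequality
\[
\bigl|\widetilde T_\sigma^{\mu_t}(x) - \widetilde T_\sigma^{\mu_0}(x)\bigr| \leq \|\quant_{\mutilde_t} - \quant_{\mutilde_0}\|_\infty + \tfrac{1}{c}|\theta_t^* - \theta_0^*|,
\]
which is uniform in $x$, and integrating against $\sigma$ over $[0,2\pi]$ yields $\|\widetilde T_\sigma^{\mu_t} - \widetilde T_\sigma^{\mu_0}\|_{L^2_\sigma} \to 0$. The main obstacle I anticipate is step (ii), namely the uniqueness of the minimizer of $C_0$: periodicity in $\theta$ and the fact that $C_0$ need not be globally convex make this delicate, and one has to exploit the strict positivity and convex-support assumptions carefully, perhaps by a direct analysis of the first-order optimality condition $\int_0^1 (\quant_{\widetilde\sigma}(u) - \quant_{\mutilde_0}(u-\theta)) \quant'_{\mutilde_0}(u-\theta)\dd u = 0$ rather than invoking convexity.
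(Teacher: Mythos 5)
Your argument is essentially sound, but it is not the paper's proof of this proposition --- it is, almost step for step, the paper's proof of the \emph{stronger} result in item 2 of \Cref{teo:continuity} (see the lemmas preceding that proof: quantile/CDF convergence from weak convergence, convergence of the costs $C_{[\nu_n,\mu]}$, compactness plus strict convexity to get $\theta_n\to\theta^*$, then pointwise and uniform convergence of the maps). The paper introduces \Cref{prop:weaker_cont} precisely to give an \emph{alternative} proof that avoids the one-dimensional quantile representation: it invokes the stability theorem of Ambrosio--Glaudo (Theorem 3.2 of the cited reference, which is where the convex-support and two-sided density bounds are actually used) to get $\int_{\S_1} d_R\bigl(T_\sigma^{\mu_t}(x),T_\sigma^{\mu_0}(x)\bigr)^2\,d\sigma(x)\to 0$, and then proves a lemma upgrading this to $L^2_\sigma$-convergence of the lifted maps $\widetilde T$. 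The nontrivial content of that lemma is exactly what your route never confronts: $d_R$ is the wrap-around distance $d_{\mathbb Z}$, so one must rule out that a set of positive $\sigma$-measure is transported ``the short way round'' by $T_\sigma^{\mu_t}$ but ``the long way round'' by $T_\sigma^{\mu_0}$; the paper excludes this by showing it would force $T_\sigma^{\mu_0}$ to send a positive-measure set to antipodal points, contradicting cyclical monotonicity. The trade-off is clear: your quantile-based route is more elementary and in fact proves more (it needs none of the extra regularity hypotheses), but it is irreducibly circle-specific, whereas the paper's route is designed to generalise to other compact Riemannian manifolds --- which is the stated purpose of this proposition.

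Two corrections to your write-up. First, $C_t(\theta)$ is \emph{not} $1$-periodic: the periodic extension satisfies $\quant_{\mutilde_t}(u-\theta-1)=\quant_{\mutilde_t}(u-\theta)-2\pi$, so $C_t$ is coercive in $\theta$; more importantly, the uniqueness of the minimiser that you flag as the ``main obstacle'' is not an obstacle at all --- \cite{delon2010fast} prove that $\theta\mapsto C_{[\mu,\nu]}(\theta)$ is strictly convex when the source is absolutely continuous, and the paper already relies on this in the proofs of \Cref{teo:opt_theta} and \Cref{teo:continuity}, so you may simply cite it. Second, your final Lipschitz bound $\tfrac1c|\theta^*_t-\theta^*_0|$ for $|\quant_{\mutilde_0}(v-\theta^*_t)-\quant_{\mutilde_0}(v-\theta^*_0)|$ fails when the support is a proper arc of $\S_1$: the periodically extended quantile function then jumps at integer arguments. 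This is harmless --- the jump is bounded by $2\pi$ and occurs only for $v$ in a set of measure $|\theta^*_t-\theta^*_0|$, so the $L^2_\sigma$ contribution still vanishes --- but the pointwise uniform estimate as written is false and should be replaced by this measure-theoretic patch.
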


We highlight that \Cref{teo:continuity} ensures that there is a high level of coherence between the measures in 
$\Wcal_2(\S_1)$ and their representation via $\log_\mu(\Wcal_2(\S_1))$. It is not an isometric representation as in the case $\Wcal_2(\R)$ (see \cite{projected}), but the continuity of the exponential and logarithmic maps implies a high level of interpretability.

\section{PCA for Measures on $\mathbb{S}_1$}
\label{sec:PCA}

In this section, we demonstrate how the results obtained in Section \ref{sec:opt_s1} can be leveraged to develop a principal component analysis framework for measures on $\S_1$ in an extrinsic fashion, by considering  $\mu_1, \ldots, \mu_n \in \Wcal_2(\S_1)$ in analogy to points of a Riemannian manifold, cf. Section \ref{sec:riem}.
This parallelism was first exploited to perform inference on the Wasserstein space in \cite{geodesic, geod_vs_log, projected} to develop a PCA for probability measures on the real line, and in \cite{muller} and \cite{zhang2020wasserstein} who propose linear regression and autoregressive models for measures on $\R$ respectively.

As already mentioned in the introduction, in the case of measures on the real line, the weak Riemannian structure of the Wasserstein space allows the definition of both intrinsic and extrinsic techniques \citep{geodesic, geod_vs_log, muller, zhang2020wasserstein, projected}. 
In particular, since $\Wcal_2(\R)$ can be seen as a convex cone inside a suitably defined $L_2$ space (by identifying each measure with the associated optimal transport map),  intrinsic methods simply need to take into account the ``cone constraints'' \citep{projected}.
As noted above,  such a cone representation does not hold in the case of $\Wcal(\S_1)$.
Therefore, developing intrinsic methods would require working with curves of probability measures.
We believe that the results established in Section \ref{sec:opt_s1} could be the first building block of such intrinsic methods.
However, supported by the continuity result in item (3.) of Theorem \ref{teo:continuity}, we propose a  \emph{log} PCA, which is computed after mapping all datapoints onto a suitable tangent space. In fact, the continuity results suggest that the approximation we make when mapping data to the tangent space is not too coarse, or, at least, should always produce interpretable results. 
The numerical illustrations presented in Section \ref{sec:numerical_ill} seem to validate this claim. 

\subsection{Log Convex PCA on $\Wcal_2(\S_1)$}

\begin{figure}[t]
	\centering
	\includegraphics[width=0.6\linewidth]{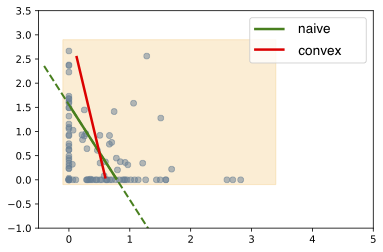}
	\caption{First principal direction found by the naive $L_2$ and the convex PCA when the space $H = \R^2$ and $X$ is the yellow rectangle. The blue dots denote observations.}
	\label{fig:naive_vs_conv}
\end{figure}

As shown in Corollary 6.6 of \cite{gigli2011inverse}, the tangent space at absolutely continuous measures is Hilbert so that we could apply standard PCA techniques to $\log_{\mubar}(\mu_1), \ldots, \log_{\mubar}(\mu_n)$, for some fixed measure $\mubar$. We call this approach  ``naive'' log-PCA.
However, as argued in \cite{projected}, disregarding the fact that the image of the $\log_{\mubar}$ map is not the whole $\text{Tan}_{\mubar}(\Wcal_2(\S_1))$ tangent space, but only a convex subset, might produce misleading results.
In particular, when two elements of the tangent space lie outside the image of $\log_{\mubar}$, returning to the Wasserstein space and then back to the tangent via $\log_{\mubar} \circ \exp_{\mubar}$ can produce undesired behaviours in terms of distances and angles. 
More in general, a principal direction is interpretable and captures meaningful variability only as long as it lies inside 
the convex subset.
This fact undermines, for instance, the interpretability of scores and principal directions when they lie outside $\log_{\mubar}(\Wcal_2(\S_1))$: directions may not the orthogonal and variance inside $\Wcal_2(\S_1)$ may not be decomposed appropriately. 

To avoid the problems with the ``naive'' log-PCA, we propose the following definition of log convex PCA, which amounts to performing a convex PCA \citep{geodesic} in the tangent space, thus taking into account the constraints enforced by the image of the $\log$ map.
Let us introduce some notation first.
Let $X := \log_{\mubar}(\Wcal_2(\S_1))$, $H := \text{Tan}_{\mubar}(\Wcal_2(\S_1))$.
For a closed convex set $C \subset X$ and a point $x \in X$ let $d(x, C) = \argmin_{y \in C} \|x - y\|_{L^2_{\mubar}}$  
Let $Sp$ denote the span of a set of vectors and $\mathcal{C}_{x_0}(U) := (x_0 + Sp(U)) \cap X$ for $x_0 \in X$ and $U \subset H$.

As in \cite{projected}, we also make the following technical assumption: given a collection of probability measures $\mubar, \mu_0, \mu_1, \ldots, \mu_n \in \Wcal(\S_1)$ we assume that $\log_{\mubar}(\mu_0)$ lies in the relative interior of the convex hull of $\{\log_{\mubar}(\mu_i)\}$. The most common choice for  $\mu_0$ is to be chosen as the \virgolette{mean} of $\{\log_{\mubar}(\mu_i)\}$, which, being inside an Hilbert space, could violate our assumption in some pathological cases. However, in applications we always resort to a finite-dimensional approximation of $L^2_{\mubar}$, in which the assumption is always satisfied. For more details see Appendix A in \cite{projected}.    

\begin{definition}\label{def:PCA}
	Consider a collection of probability measures $\mubar, \mu_0, \mu_1, \ldots, \mu_n \in \Wcal(\S_1)$.
	Let $\Ttilde_i = \log_{\mubar}(\mu_i) = \Ttilde_{\mubar}^{\mu_i}$, $i=0, \ldots, n$.
	A $(k, \mubar, \mu_0)$ log convex principal component for $\mu_1, \ldots, \mu_n$ is the subset $C_k := \mathcal{C}_{\Ttilde_0}(\{w^*_1, \ldots, w^*_k\})$ such that
	\begin{enumerate}
		\item for $k=1$, 
		\[
			w^*_1 = \argmin_{w \in H, \|w\|=1} \sum_{i=1}^n d\left(\Ttilde_i, \mathcal{C}_{\Ttilde_0}(\{w\})\right)
		\]
		\item for $k > 1$,
		\[
			w^*_k = \argmin_{w \in H, \|w\| = 1, w \perp Sp(\{w^*_1, \ldots, w^*_{k-1}\}} \sum_{i=1}^n d\left(\Ttilde_i, \mathcal{C}_{\Ttilde_0}(\{w\})\right)
		\]
	\end{enumerate}
\end{definition}  

\Cref{fig:naive_vs_conv} exemplifies the difference between the naive $L_2$ and the convex one in a simpler example when $H = \R^2$ and $X$ is a convex subset.
When data are close to the border of $X$, the $L_2$ metric between data and the principal components capture variability that lies outside of the convex set. See also \cite{projected} for some indexes that quantify the loss of information of the $L_2$ PCA opposed to the convex one.

\subsection{Computation of the Log Convex PCA via B-Spline approximation}

The definition of convex PCA translates into a constrained optimisation problem to find the directions $\{w^*_1, \ldots, w^*_k\}$.
In \cite{geod_vs_log}, the authors discretise the transport maps and solve the optimisation problem via a forward-backward algorithm. As discussed in \cite{projected}, a more efficient approach consists in approximating the transport maps via quadratic B-splines and solving a constrained optimisation problem via an interior-point method.
Here, we follow the second approach.

Let $\{\psi_1, \ldots, \psi_J\}$ a B-spline basis on equispaced knots in $[0, 2\pi]$. We let $\Ttilde_i(x) \approx \sum_{j=1}^J a_{ij} \psi_j(x)$. Note that if the spline is quadratic then 
(i) the function $\sum_{j=1}^J a_{j} \psi_j(x)$ is monotonically nondecreasing if an only if the coefficients $a_1, \ldots, a_J$ are \citep[see, e.g., Proposition 4 in][]{projected}.
Hence, from now on, we consider the $\psi_j$'s to be quadratic spline basis functions on $[0,2\pi]$.
The spline basis expansion also allows for faster computations of $L_2$ inner products: let $E$ be a $J \times J$ matrix with entries $E_{i, j} = \int_0^{2\pi} \psi_i(x) \psi_j(x) \dd x$ and $\bm a_i = (a_{i, 1}, \ldots, a_{i, J})$, we have $\langle \Ttilde_i, \Ttilde_j \rangle = \langle \bm a_i, \bm a_j \rangle_E := \bm a_i^T E \bm a_j$. We denote by $\| \cdot \|_E$ the associated norm.

Similarly to Proposition 6 in \cite{projected}, we obtain that the $k$-th direction $\bm w_k$ and the associated scores $\lambda^k_{1:n} = \lambda_1, \ldots, \lambda_n$ (of the observations the $k$-th direction) of the log-convex PCA can be computed by solving a constrained optimisation problem.
The objective function is:
\begin{equation}\label{eq:log_pca_obj}
	\lambda^k_{1:n}, \bm w_k = \argmin_{\lambda_{1:n}, \bm w} \sum_{i=1}^n  \| \bm a_i - \bm a_0 - \sum_{j=1}^k \lambda_i^k \bm w_k \|
\end{equation}
where $\lambda_i \in \mathbb{R}$ is the of score for the $i$-th datum along the $k$-th direction.
Moreover, the usual orthogonality and unit-norm constraints must be satisfied:
\[
	\|\bm w\|_E = 1, \quad \langle \bm w_h, \bm w \rangle_E = 0, \quad h=1, \ldots, k-1.
\]
In addition to those, we must also require that $\sum w_j \psi_j$ belongs to $H := \text{Tan}_{\mubar}(\Wcal_2(\S_1))$. 
The monotonicity constraint is equivalent to
\[
	\lambda_i w_{j} + a_{0,j} - \lambda_{i} w_{j-1} - a_{0, j-1} \geq 0 , \quad j=2 \ldots J
\] 
that is the monotonicity of the spline coefficients (since the splines are quadratic. See, e.g., Proposition 4 in \cite{projected}). 
Moreover, the ``periodicity'' constraint is satisfied by design.
To impose \eqref{eq:int_condition}, let $M_j = \int \psi_j(u) \mathrm{d} u$, then \eqref{eq:int_condition} is equivalent to
\[
	\sum w_j M_j = 2 \pi^2.
\] 
Finally, thanks to \eqref{eq:int_condition} it is sufficient to control the value of the function $w$ at the initial point, i.e. $w_0 \in (-\pi/2, \pi/2)$.

We implement the resulting constrained optimisation problem using the \texttt{Python} package \texttt{pyomo} and approximate the solution using an interior point method using the Ipopt solver.

\subsection{Wasserstein Barycentre}

We are left to discuss the choice of the base point $\mu_0$ of the PCA as well as the measure $\bar \mu$ at which the tangent space is considered.
A standard choice when performing PCA in non-Euclidean spaces, it to set both $\mu_0$ and $\bar \mu$ equal the barycentre, that is the Fr\'echet mean. In our case, the barycentre minimises the following Fr\'echet functional:
\begin{equation}\label{eq:frechet_func}
	F(\nu; \mu_1, \ldots, \mu_n) = \frac{1}{2n} \sum_{i=1}^n W_2^2(\nu, \mu_i).
\end{equation}
While, in principle, the log-PCA can be carried out by working in the tangent at any absolutely continuous measure, embedding the PCA in the tangent at the barycentre is to be preferred since, intuitively, this should result in the distances between datapoints in the tangent space (at the barycentre) to be more similar to the distances in the Wasserstein space. The quality of the approximation provided by tangent spaces decays as distances from the tangent point increase, and thus choosing as a tangent point the barycentre of the data set is a good choice for trying to minimise the average error produced by the approximations. As a consequence, the projections of the principal components can be interpreted as deviations from the \virgolette{average} of the data set.
Note that centring the PCA at the barycentre poses no conceptual problem in our case as the Wasserstein barycentre is unique if at least one of the measures $\mu_j$ is absolutely continuous. See Theorem 3.1 in \cite{kim_barycenter}. Similar results for measures supported on $\mathbb{R}^d$ have been developed in  \cite{agueh_barycenter}.

\SetNlSty{textbf}{[}{]}
\begin{algorithm}[t]
	\textbf{input}{ Measures $\mu_1, \ldots, \mu_n$, starting point $\nu$, threshold $\varepsilon$.}
	\DontPrintSemicolon
	
	\Repeat{$W_2(\nu, \nu^\prime) < \varepsilon$ }{
		Compute the optimal transport maps $\widetilde T_{\nu}^{\mu_i}$ as in \Cref{teo:opt_map}. \\
		Set 
		\[	\widetilde{\nu}^\prime := \left( \frac{1}{n} \sum_{i=1}^n \widetilde T_{\bar \mu}^{\mu_i} \right) \push \widetilde{\nu} \]
	}
	
	Output $\bar \mu = \exp_c \circ (\widetilde{\nu}^\prime)$. \\
	\textbf{end}
	\caption{\label{algo:barycentre}Procrustes Barycentre}
\end{algorithm}

Numerical algorithms for computing the solution of \eqref{eq:frechet_func} have been developed in \cite{carlier2015numerical,srivastava15} for the case of atomic measures, whereby the optimisation can be reduced to a linear program.
\cite{zemel2019frechet} instead propose a procustes  algorithm based on gradient descent which works for general measures on $\mathbb{R}^d$ (of which one must be absolutely continuous). In a nutshell, the gradient descent algorithm in \cite{zemel2019frechet} starts from an initial guess of the barycentre and updates it by pushing forward the current guess $\nu_r$ via the average of the transport maps between $\nu_r$ and all the measures.
This procedure is guaranteed to converge to the barycentre under some technical conditions on the measures $\mu_i$'s. In particular, it converges
in one iteration if the measures are \emph{compatible} \citep[see Section 2.3.2 in][]{panaretos}. As a drawback, this approach requires solving $n$ optimal transportation problems at each iteration, which might be challenging outside the case of measures supported on $\R$ or location-scatter families, for which explicit solutions exist \citep{alvarez2018wide}.
Taking a different approach, \cite{cuturi2014fast} propose an approximate solution to the Fr\'echet mean by introducing in \eqref{eq:frechet_func} an ``entropic regularisation'' term, which makes optimisation easier.

Here, we propose to use  the gradient descent algorithm developed in \cite{zemel2019frechet}. Indeed, our \Cref{teo:opt_map} allows for (almost) explicit solutions to the optimal transportation problem. Moreover, as shown in \cite{delon2010fast}, the optimisation problem in \eqref{eq:theta_cost} is convex in $\theta$ so that finding $\theta^*$ is simple.
We report the pseudocode in Algorithm~\ref{algo:barycentre}.

We want to remark that we have not been able (yet) to prove neither the convergence of the algorithm to the barycentre in the general case nor if such procustes algorithm amounts to a gradient descent also in our framework.
From the technical point of view, the proofs in \cite{zemel2019frechet} do not hold in our case, since they are based on sub-differentiability and super-differentiability results of the Wasserstein distance as provided in Theorems 10.2.2 and 10.2.6 in \cite{ambrosio2008gradient} which are stated for measures on separable Hilbert spaces.
Nonetheless, the following result establishes a sufficient condition for the convergence of Algorithm~\ref{algo:barycentre}.
\begin{proposition}\label{prop:conv_bary}
	Let $\mu^*$ be an absolutely continuous measure in $\Wcal(\S_1)$, and $\mu_1, \ldots, \mu_n$ be measures in $\Wcal(\S_1)$. If, for any $i, j = 1, \ldots, n$
	\[
		\|\log_{\mu^*}(\mu_i) - \log_{\mu^*}(\mu_j)  \|_{L^2_{\mu^*}} = W_2(\mu_i, \mu_j),
	\]
	then letting $\bar T := n^{-1} \sum_{i=1}^n T_{\mu^*}^{\mu_i}$ be the barycentre of the $\log_{\mu^*}(\mu_i)$'s, we have that $\bar T \push \mu^*$ is the Wasserstein barycentre of $\mu_1, \ldots, \mu_n$.
\end{proposition}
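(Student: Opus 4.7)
The plan is to show that $\nu^* := \bar T \push \mu^*$ minimises the Fr\'echet functional $F(\nu):=\frac{1}{2n}\sum_{i=1}^n W_2^2(\nu,\mu_i)$ by bounding $2n\,F(\nu^*)$ from above using the tangent space at $\mu^*$ and $2n\,F(\nu)$ from below using the tangent space at the competitor $\nu$; under the isometry hypothesis both bounds coincide with $\frac{1}{2n}\sum_{i,j}W_2^2(\mu_i,\mu_j)$, so $F(\nu)\geq F(\nu^*)$. A preliminary observation is that $\bar T$ is itself an admissible OTM: writing $T_i:=T_{\mu^*}^{\mu_i}$, monotonicity and pseudo-periodicity are stable under convex combinations, the integral condition~\eqref{eq:int_condition} is linear in the map, and $|\bar T(x)-x|\leq n^{-1}\sum_i|T_i(x)-x|<\pi$ $\mu^*$-a.e., so Theorem~\ref{teo:opt_map2} applies and $\log_{\mu^*}(\nu^*)=\bar T$.

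For the upper bound, Theorem~\ref{teo:continuity} with base $\mu^*$ gives $W_2^2(\nu^*,\mu_i)\leq \|\bar T-T_i\|^2_{L^2_{\mu^*}}$, and since $\bar T$ is the Hilbert centroid of $\{T_i\}$ in $L^2_{\mu^*}$ the standard variance identity yields
\begin{equation*}
	\sum_i\|\bar T-T_i\|^2_{L^2_{\mu^*}}
	=\frac{1}{2n}\sum_{i,j}\|T_i-T_j\|^2_{L^2_{\mu^*}}
	=\frac{1}{2n}\sum_{i,j}W_2^2(\mu_i,\mu_j),
\end{equation*}
where the last equality is the hypothesis. Hence $2n\,F(\nu^*)\leq \frac{1}{2n}\sum_{i,j}W_2^2(\mu_i,\mu_j)$.

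For the matching lower bound, fix an absolutely continuous competitor $\nu$ and work in the tangent at $\nu$. Setting $S_i:=\log_\nu(\mu_i)$, Theorem~\ref{teo:opt_map2} gives $|S_i(x)-x|<\pi$ $\nu$-a.e., whence $W_2^2(\nu,\mu_i)=\|S_i-\text{Id}\|^2_{L^2_\nu}$. Expanding around the Hilbert centroid $\bar S:=n^{-1}\sum_i S_i$ and invoking Theorem~\ref{teo:continuity} now with base $\nu$,
\begin{equation*}
	\sum_i \|S_i-\text{Id}\|^2_{L^2_\nu}
	\geq \sum_i\|S_i-\bar S\|^2_{L^2_\nu}
	=\frac{1}{2n}\sum_{i,j}\|S_i-S_j\|^2_{L^2_\nu}
	\geq \frac{1}{2n}\sum_{i,j}W_2^2(\mu_i,\mu_j).
\end{equation*}
Chaining with the upper bound gives $F(\nu)\geq F(\nu^*)$; the general (non-a.c.) case then follows by approximating $\nu$ in the Wasserstein topology by a.c.\ measures and using continuity of $F$.

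The main obstacle is arranging the lower bound in the correct direction: Theorem~\ref{teo:continuity} is naturally an upper bound on $W_2^2$ by the squared $L^2$-norm of tangent vectors, so it cannot directly bound $W_2^2(\nu,\mu_i)$ from below. The key trick is to shift the base of the tangent space from $\mu^*$ to the competitor $\nu$: the very same inequality then bounds $\|S_i-S_j\|^2_{L^2_\nu}$ from below by $W_2^2(\mu_i,\mu_j)$, which the isometry hypothesis identifies with the $\mu^*$-centred quantity $\sum_i\|T_i-\bar T\|^2_{L^2_{\mu^*}}$ controlling $F(\nu^*)$.
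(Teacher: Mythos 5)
Your proof is correct, but it follows a genuinely different route from the paper's. The paper works through the multi-marginal formulation of the barycentre problem: it invokes the equivalence (Theorem 2.4 of \cite{kim_barycenter}) between minimising the Fr\'echet functional and minimising the multi-marginal cost $G(\pi)$ over multicouplings, exhibits the explicit multicoupling $\pi^*=(T_{\mu^*}^{\mu_1},\ldots,T_{\mu^*}^{\mu_n})\push\mu^*$, shows via the non-expansiveness of the exponential and the isometry hypothesis that $\pi^*$ attains the universal lower bound $\tfrac{1}{2n^2}\sum_{i<j}W_2^2(\mu_i,\mu_j)$, and then reads off the barycentre as the pushforward of $\pi^*$ under the pointwise Fr\'echet-mean map. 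You instead bound the Fr\'echet functional directly on both sides: the upper bound on $F(\bar T\push\mu^*)$ via the tangent at $\mu^*$ plus the Hilbert-space variance identity, and the lower bound on $F(\nu)$ for an arbitrary a.c.\ competitor by \emph{re-basing} the tangent space at $\nu$, which is the nontrivial idea that makes the one-sided inequality of \Cref{teo:continuity} usable in both directions. Both arguments ultimately rest on the same two ingredients (the inequality $W_2\le$ tangent distance and the hypothesis), but they trade different overheads: the paper's version avoids having to identify $\log_{\mu^*}(\bar T\push\mu^*)$ and transfers verbatim to any manifold with non-expansive exponential, at the cost of importing the multi-marginal equivalence and of the (slightly glossed) identification of $\bar x\push\pi^*$ with $\bar T\push\mu^*$; yours is self-contained within the paper's own results, but needs the verification via \Cref{teo:opt_map2} that $\bar T$ is an admissible optimal map (which you do correctly --- all three conditions are stable under averaging) and the final density/continuity step to pass from a.c.\ competitors to general ones. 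Both steps are sound, so your argument stands as a valid alternative proof.
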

The condition in \Cref{prop:conv_bary} has the practical advantage that it can be easily checked after Algorithm~\ref{algo:barycentre} terminates. Indeed, if $\|\log_{\bar \mu}(\mu_i) - \log_{\bar \mu}(\mu_j)  \|_{L^2_{\mu^*}} = W_2(\mu_i, \mu_j)$, where $\bar \mu$ is the output of Algorithm \ref{algo:barycentre}, we are sure that $\bar \mu$ is the barycentre.
Intuitively, if the Wasserstein distances are similar to the distances in the tangent space, this means that, along the geodesics connecting the datapoints, the curvature is small. Hence, the problem of finding the Wasserstein barycentre reduces to averaging the quantiles. Therefore, the output of Algorithm \ref{algo:barycentre} should be accurate.
In the following section we provide empirical evidence of its convergence, by checking the  condition in \Cref{prop:conv_bary} and comparing the output of Algorithm \ref{algo:barycentre} to the one of the Sinkhorn algorithm proposed in \cite{cuturi2014fast}.

\begin{remark}
	Although stated for measures on $\S_1$, \Cref{prop:conv_bary} is true for measures on general connected compact finite dimensional Riemannian manifolds whose exponential map is non-expansive. This is the case, for instance, of manifolds with positive curvature. In \Cref{app:conv_vary} we prove the result in this more general setting.

\end{remark}

\section{Numerical Illustrations}\label{sec:numerical_ill}

In this section we present the numerical simulations dealing with the Wasserstein barycentre and the PCA defined in \Cref{sec:PCA}.

\subsection{Simulations for the Barycentre}\label{sec:simu_bary}

\begin{figure}
	\centering
	\includegraphics[width=\linewidth]{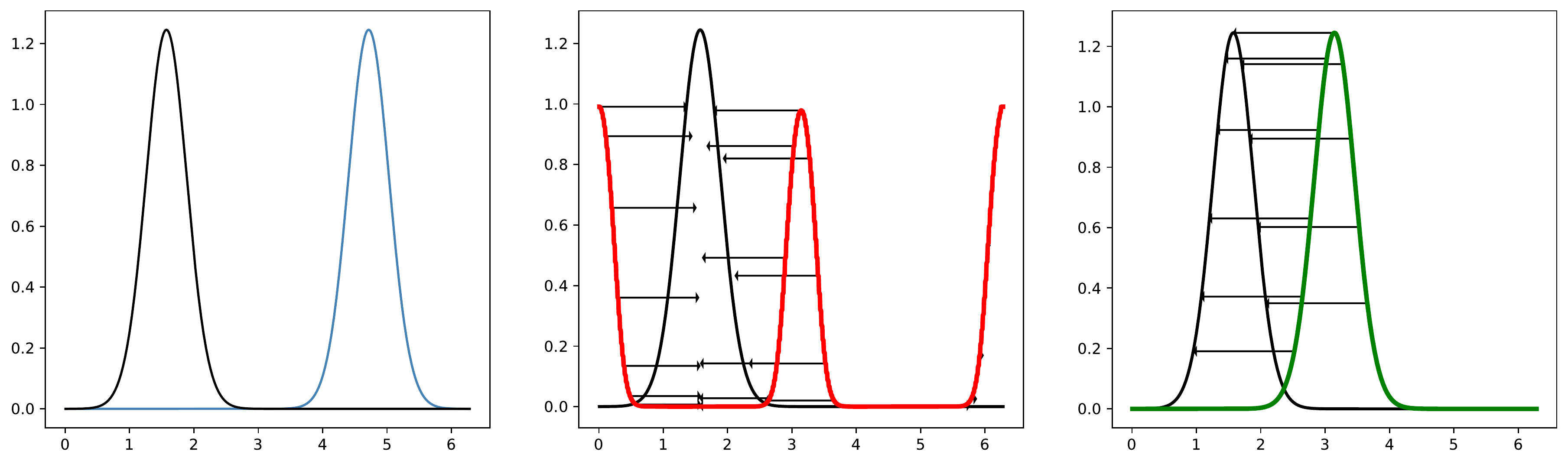}
	\caption{From left to right: two measures on $\mathbb{S}_1$ (unrolled on $[0, 1]$), the barycentre on $\mathbb S_1$ (red) and its transport to the leftmost measure, the barycentre on $\mathbb{R}$ and its transport to the leftmost measure}
	\label{fig:example_bary}
\end{figure}

Let us give an illustrative example of the peculiarities that may arise when considering distributions on $\mathbb{S}_1$.
Consider the two measures on the leftmost panel in Figure \ref{fig:example_bary}. When the transport cost is the Euclidean one, the resulting barycentre is the one displayed in the rightmost panel: it has unimodal density with the same scale of the two measures and is centred exactly in the middle of them. 
When the cost instead is computed on $\mathbb{S}_1$, the barycentre becomes bimodal as shown in the middle panel of Figure \ref{fig:example_bary}. In this specific example, the cost (on $\mathbb{S}_1$) of transporting the ``correct'' barycentre on the two measures is 30\% lower than the cost of transporting the ``Euclidean'' one.

We now give some examples of barycentres. 
In what follows, we use $\bar \mu$ to represent the measure on $\mathbb S_1$ returned from Algorithm \ref{algo:barycentre} and $\widetilde{\bar \mu}$ the associated periodic measure on $\R$. 
In some cases, it is intuitive what should be the barycentre and we show that our algorithm correctly converges to it. 
In other ones, intuition fails but we still might get an idea of the goodness of the approximation 
of the barycentre by comparing the Wasserstein 
distances $W_2(\mu_i, \mu_j)$ with the distances in the tangent space as in \Cref{prop:conv_bary}.
Moreover, we also compare the output of Algorithm \ref{algo:barycentre} with the so-called Sinkhorn barycentre \citep{cuturi2014fast, janati2020debiased} as implemented in the \texttt{Python} package \texttt{ott-jax} \citep{jax_ott}. 
To compute the Sinkhorn barycentre, we approximate each measure with an atomic measure with $1,000$ equispaced support points on $[0, 2\pi)$, equipped with the geodesic distance on $\S_1$, giving to each point $x_i$ a weight proportional to $\mu(\dd x_i)$. 
Informally, we should expect the Wasserstein and Sinkhorn barycentres to be similar, but the Sinkhorn barycentre should be smoother due to the regularisation term involved in the Sinkhorn divergence.

\begin{figure}[t]
	\centering
	\includegraphics[width=\linewidth]{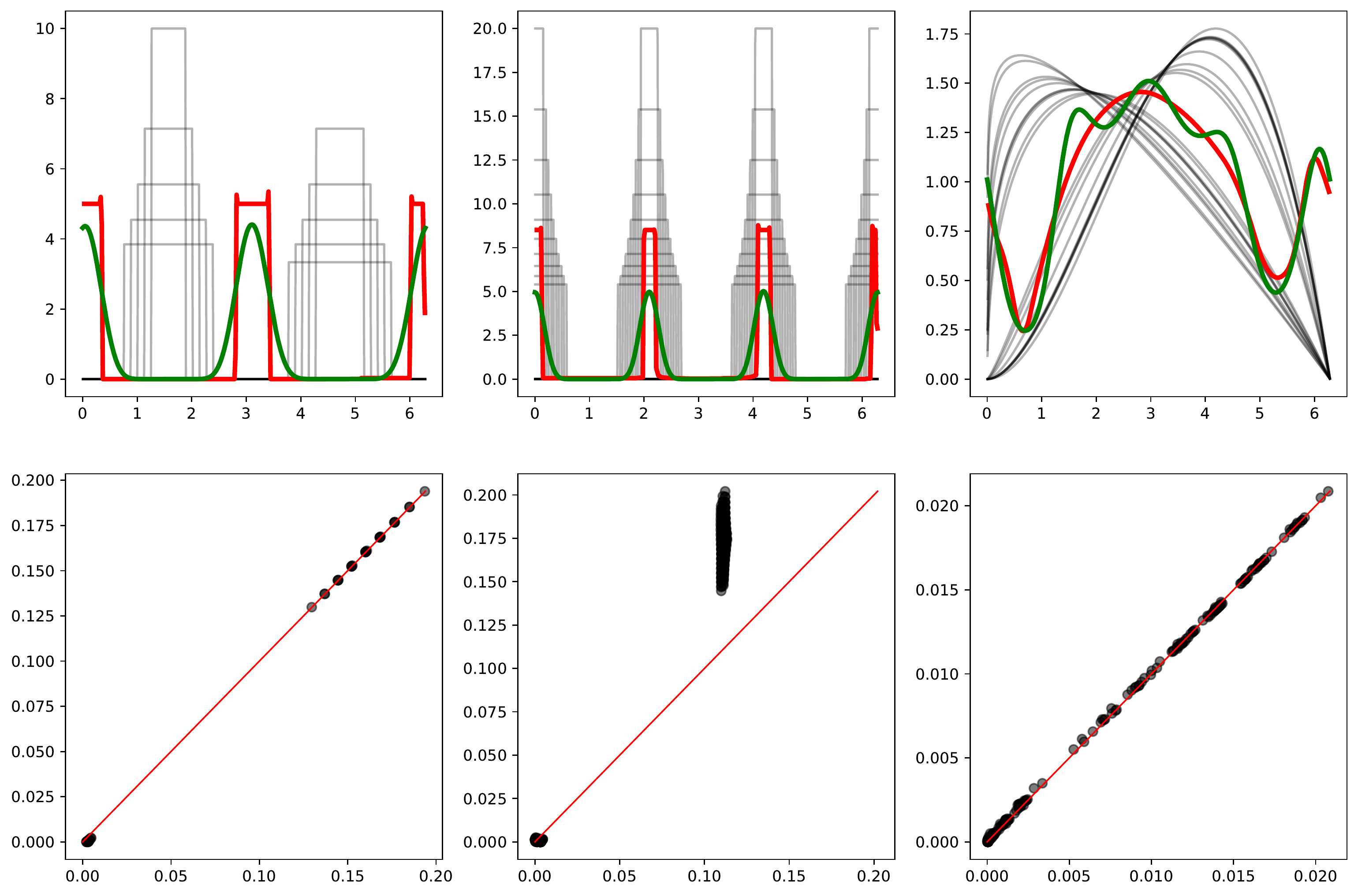}
	\caption{Top row: denisties of the $\mutilde_j$'s on $[0, 2\pi]$, and of the Wasserstein and Sinkhorn barycentres (red and green line respectively). Bottom row: Wasserstein distance vs $d_{\log}$ for every possible couple of measures.}
	\label{fig:bary_simu}
\end{figure}

We consider three simulated datasets as follows.
Let $\mathcal{U}(c, w)$ denote the uniform measure centred in $c$ and with width $w$, i.e. the uniform measure over $(c - w/2, c + w/2)$.
In the first example, the measures are 
\begin{align*}
	\mutilde_i &= \mathcal{U}\left(0.25, 0.1 + 0.05 i\right), \quad i=1, \ldots, 5 \\
	\mutilde_i &= \mathcal{U}\left(0.75, 0.1 + 0.05 (i-5)\right), \quad i=5, \ldots, 10
\end{align*}
and extended periodically over the whole $\mathbb{R}$. In the second one instead
\begin{align*}
	\mutilde_i &= \mathcal{U}\left(0, 0.05 + 0.015 i\right), \quad i=1, \ldots, 10 \\
	\mutilde_i &= \mathcal{U}\left(1/3, 0.05 + 0.015 (i-10)\right), \quad i=11, \ldots, 20 \\
	\mutilde_i &= \mathcal{U}\left(2/3, 0.05 + 0.015 (i-20)\right), \quad i=21, \ldots, 30
\end{align*}
In the third case instead, we generate the $\mutilde_i$'s by first considering Beta distributions on $(0,2\pi)$ with parameters $(a_i, 2)$ and then taking their periodic extension.
Specifically, $a_i \sim \mathcal{U}(1.3, 0.2)$ for $i=1, \ldots, 10$ and $a_i \sim \mathcal{U}(2.6, 0.4)$ for $i=11, \ldots, 20$.
Figure \ref{fig:bary_simu} reports the Wasserstein barycentres as found by Algorithm \ref{algo:barycentre} and the Sinkhorn ones for three different simulated datasets. 
We can see that the Wasserestein ans Sinkhorn barycentres agree and that the Sinkhorn ones are generally smoother. Moreover, in the first and third example the log and Wasserstein distances are indistinguishable which suggests the convergence of Algorithm \ref{algo:barycentre},
while in the second example there are some discrepancies.
The third simulation allows us to gather some insights into the geometry of $\Wcal_2(\S_1)$. Indeed, note how, despite all the measures $\mutilde_i$ being unimodal, the barycentre is bimodal.
This clearly arises from the manifold structure of $\S_1$ and specifically because of mass going through $0$ along the geodesics connected some measures.

\subsection{Simulations for the PCA}
\label{sec:simu}

In this section we analyse some simulated datasets which we use to showcase and interpret some behaviours of the PCA defined in previous sections. Another simulation with additional details and comparisons can be found in \Cref{app:further_simu}.
To interpret the principal directions found by the PCA, we produce the plots of the densities of $\exp_{\bar \mu}(\log_{\bar \mu}(\mu_0) +  \lambda w^*_k)$, where $w^*_k$ is the $k$-th principal direction and $\lambda$ varies in some range specified case-by-case.
Unless otherwise stated, $\bar \mu$ and $\mu_0$ are both equal to the Wasserstein barycentre approximated using Algorithm \ref{algo:barycentre}.
In particular, note that the score $\lambda$ represents the distance from the base point travelled along the geodesic whose direction is specified by the $k$-th principal direction. It is then possible to compare different values of $\lambda$ across the simulations to interpret the distance from the barycentre after which some behaviours start to occur (for instance, it might happen that at a certain distance from the barycentre, the measures switch from unimodal to bimodal).

First, we consider a sample from the von Mises distribution with location $\pi$ and scale $\alpha$, whose density function on $[0, 2\pi]$ is 
\begin{equation}\label{eq:vm}
	f(x; \alpha) = \frac{\exp\left(3 \cos\left( \frac{x - \pi}{\alpha}\right)\right)}{2 \pi I_0(3)},
\end{equation}
where $I_0$ is the modified Bessel function of order zero.
We simulate two datasets of $n=100$ measures from \eqref{eq:vm}, by considering $\alpha \sim \mathcal U(0.8, 1.5)$ and $\alpha \sim \mathcal U(2, 3.5)$ respectively. 
Data and the first principal direction are shown in \Cref{fig:vonmises_pc}. 
In the first case, the measures are sufficiently concentrated so that, in the neighbourhood of the barycentre associated to the grid of values for $\lambda$, the periodicity of $\S_1$ is effectively irrelevant, and the first principal direction reflects the change in scale of the distribution. 
On the other hand, in the second case, we have a good amount of mass around $0$ for all distributions in the data set, and the variance of such distributions ranges over a bigger interval compared to the first data set. As a consequence, moving along the first principal direction (with the same scale as in the previous example), we keep pushing the mass on \virgolette{the sides} at faster rates, so that it concentrates even more around $0$ and we go from a unimodal to a bimodal density.  

Although not shown here, when the same measures are considered as points in $\Wcal_2(\R)$, in both cases the first principal direction is associated with a change in the scale of the measures, while the location is kept fixed.

\begin{figure}[t]
	\centering
	\begin{subfigure}{0.33\linewidth}
		\includegraphics[width=\linewidth]{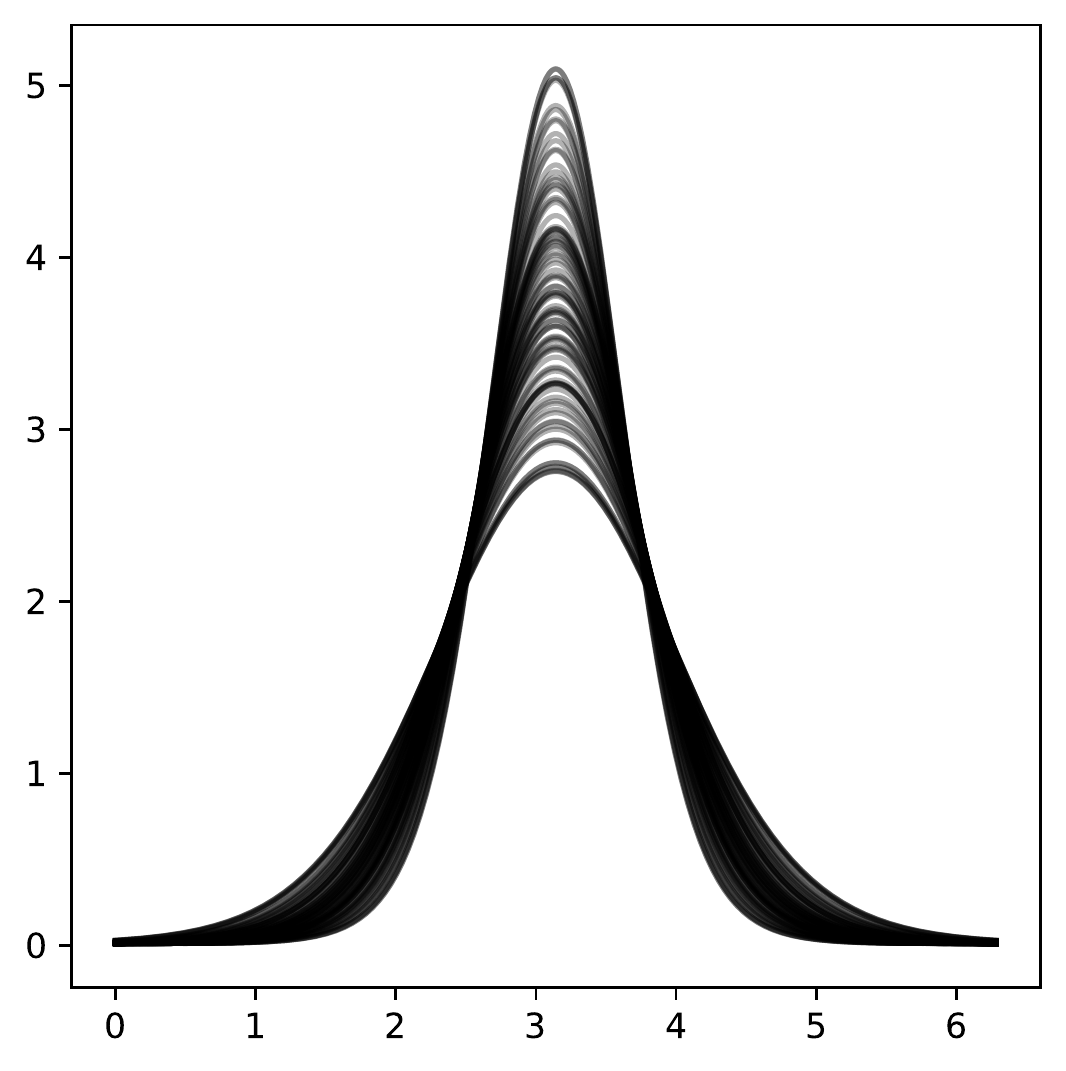}
	\end{subfigure}%
	\begin{subfigure}{0.66\linewidth}
		\includegraphics[width=\linewidth]{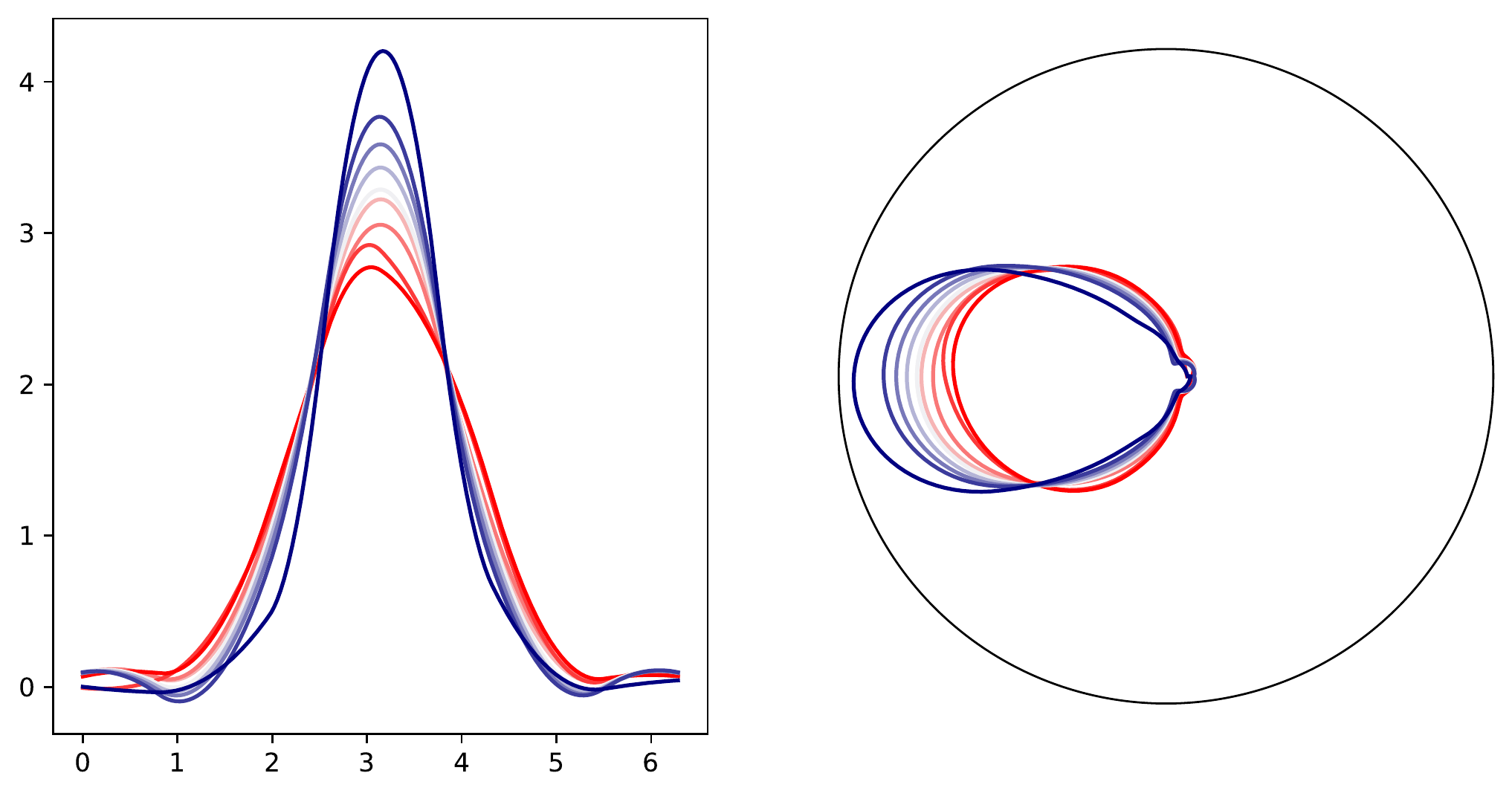}
	\end{subfigure}
	\begin{subfigure}{0.33\linewidth}
		\includegraphics[width=\linewidth]{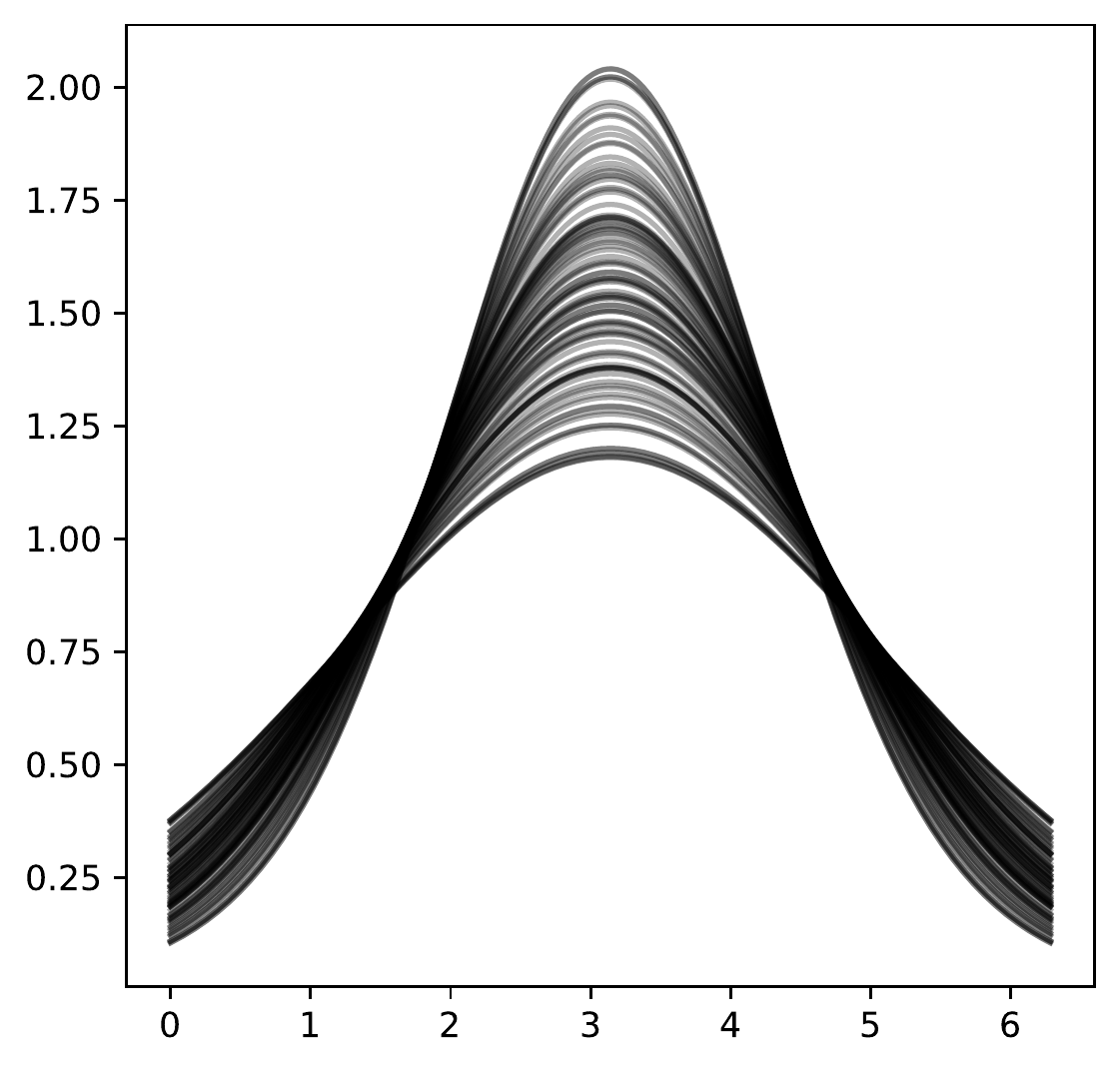}
	\end{subfigure}%
	\begin{subfigure}{0.66\linewidth}
		\includegraphics[width=\linewidth]{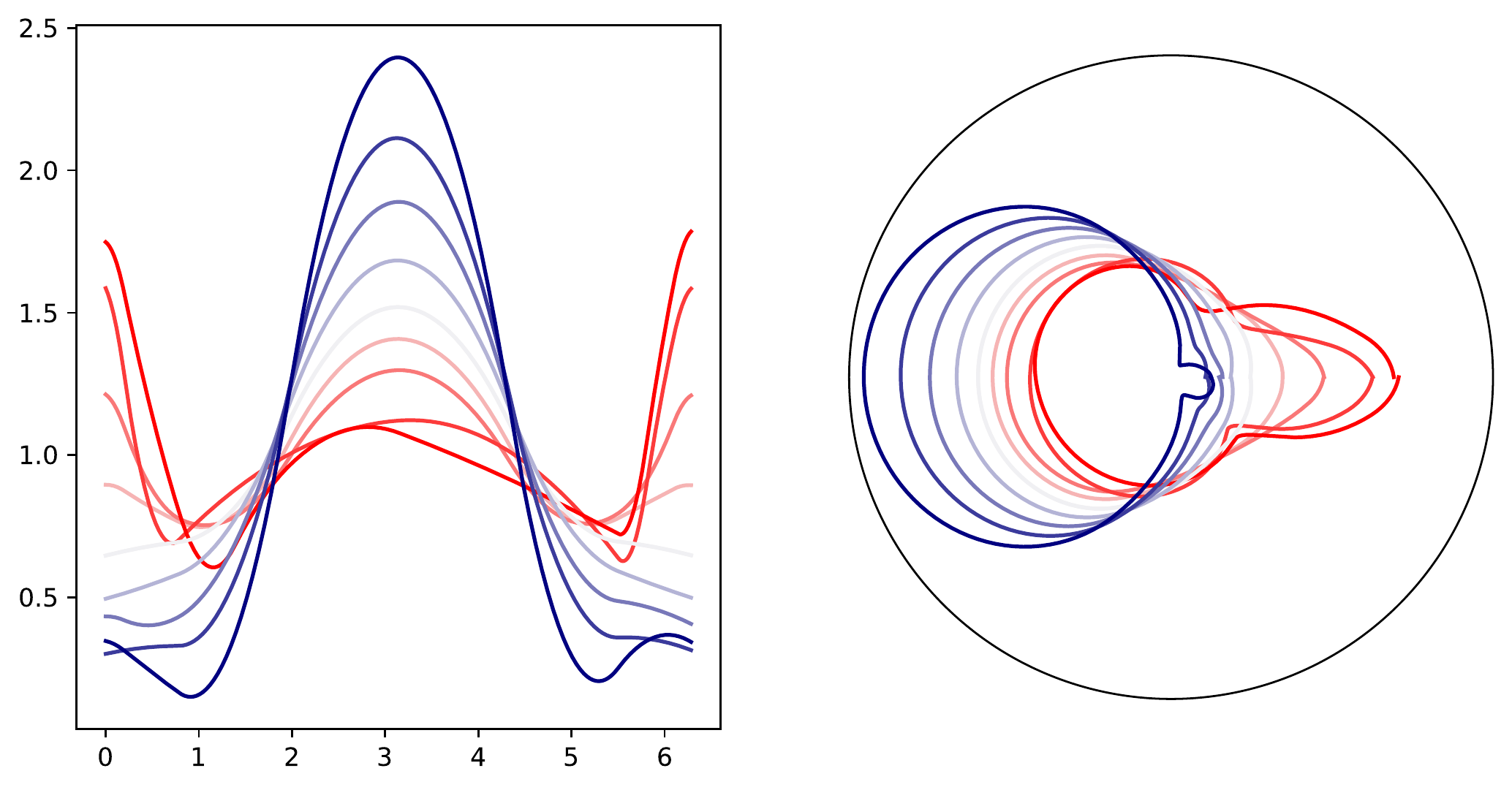}
	\end{subfigure}
	\caption{Data and first principal direction for the Von Mises simulation. The second and third column represent densities along the first principal direction as $\lambda$ varies between $-0.1$ (darkest blue) to $0.1$ (darkest red), plotted as distributions on $[0, 2 \pi)$ and on $\S_1$ respectively.}
	\label{fig:vonmises_pc}
\end{figure}

Next, we consider the same dataset as in the third simulation of \Cref{sec:simu_bary}. \Cref{fig:betas_pc} reports the first two principal directions. The first one corresponds mostly to a shift on the location but simultaneously it also captures the decrease of the density around the second mode that is located in $0$ (see the barycentre in \Cref{fig:bary_simu}).
Starting from the barycentre (white),  if we go towards the red densities we see that the mode in zero gradually is absorbed the main mode; while if we go towards the blue ones the mode in $0$ crosses the circle and it merge on the main mode, but on the right side of the plot. According to the geodesic structure of $\Wcal(\S_1)$.
The second direction, instead, is more clearly focused on separating distribution with significant amount of mass close to $0$ (blue), from the measures which, instead, have all their mass away from $0$ (red).

\begin{figure}[t]
	\centering
	\begin{subfigure}{0.5\linewidth}
		\includegraphics[width=\linewidth]{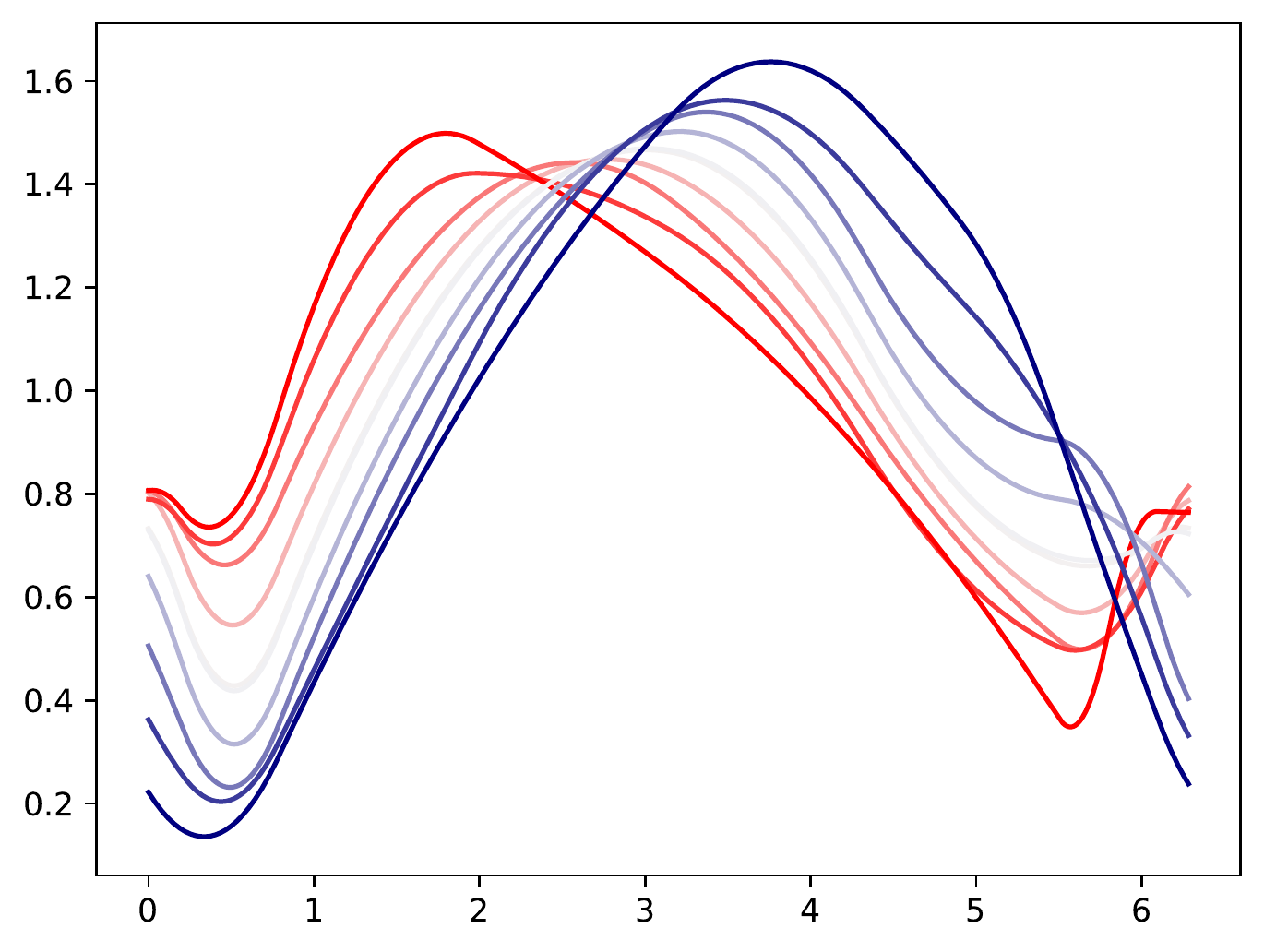}
	\end{subfigure}%
	\begin{subfigure}{0.5\linewidth}
		\includegraphics[width=\linewidth]{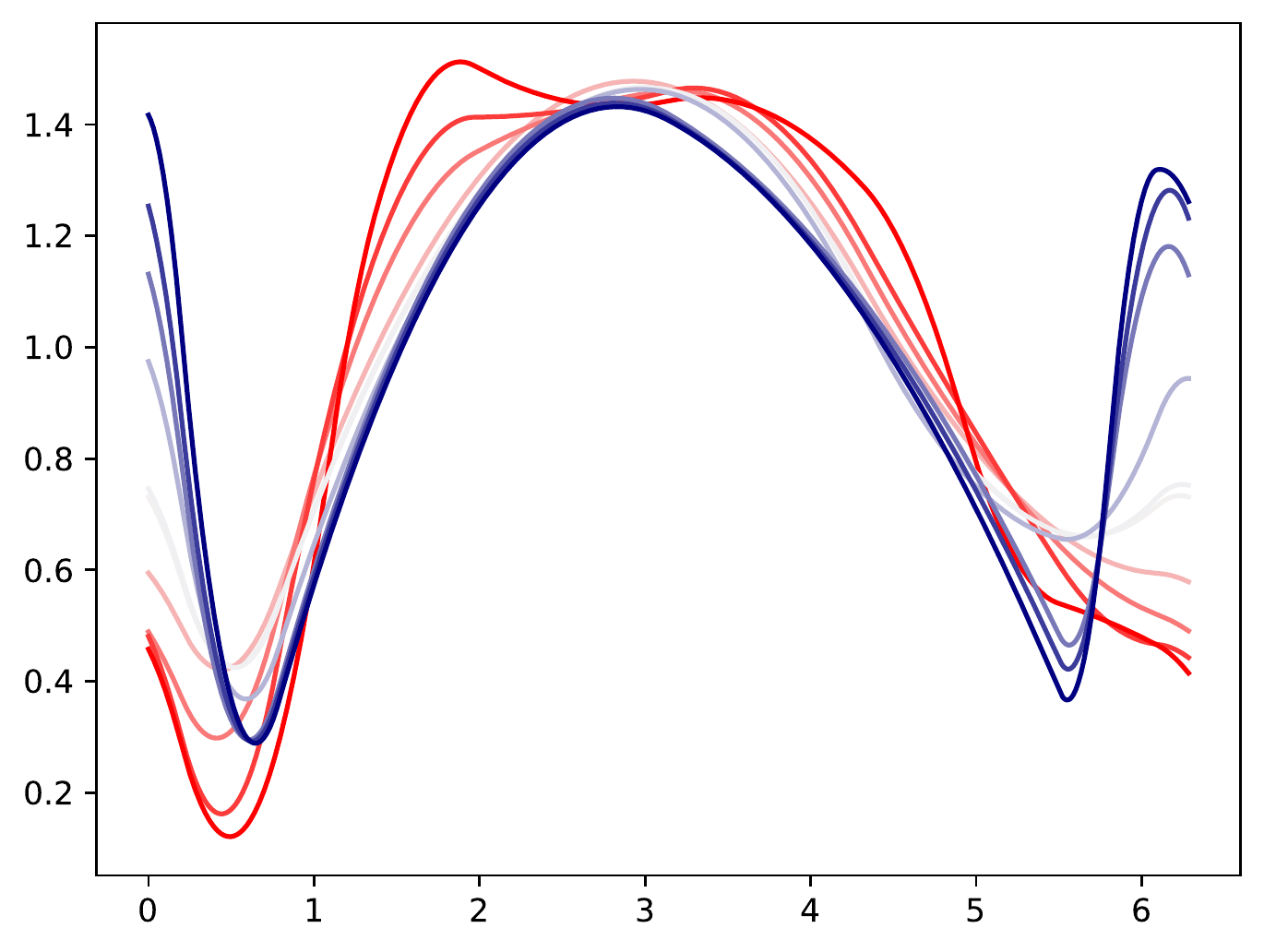}
	\end{subfigure}
	\caption{Densities along the first two principal directions for the Beta distribution as in  \Cref{sec:simu_bary}, as $\lambda$ varies between $-0.05$ (darkest blue) to $0.05$ (darkest red), plotted as distributions on $[0, 2 \pi)$.}
	\label{fig:betas_pc}
\end{figure}

In summary,  these simulations help us understand the geometry of $\Wcal_2(\S_1)$ and, in particular, the differences with $\Wcal_2(\R)$.
Indeed, it is well-known that, for measures on $\R$, the Wasserstein geodesics of location-scale families 
are obtained by lifting the Euclidean geodesics in the location-scale plane to the Wasserstein space. 
Hence, the Wasserstein PCA will disentangle the effect of the location and the effect of the scale.
Instead, as shown by our simulations, when measures are supported on $\S_1$ it is not possible to completely separate the effects of location and scale.
Moreover, even if the datapoints are unimodal, it is often the case that the barycentre is multimodal. 
Multimodality is inherited by the measures along the principal directions, which might make the interpretation cumbersome.
In \Cref{app:further_simu} we report an additional simulation for the PCA, where we discuss the choice of the point $\bar \mu$ (at which the tangent is attached) and its impact on the interpretability of the directions. In particular, we consider a dataset of truncated gaussians, for which the barycentre has three modes. Instead, if $\bar \mu$ is chosen to be equal to one of the datapoints, then moving along the principal directions results in unimodal densities for which interpretation is easy. 
Of course, this poses a conceptual issue as the principal directions are not the ``main directions of variability'' per se, but the main directions of variability starting from one particular $\bar \mu$.

\section{Case Study: Eye Dataset}\label{sec:eye}

\begin{figure}[ht]
\centering
	\begin{subfigure}{0.65\linewidth}
		\includegraphics[width=\linewidth]{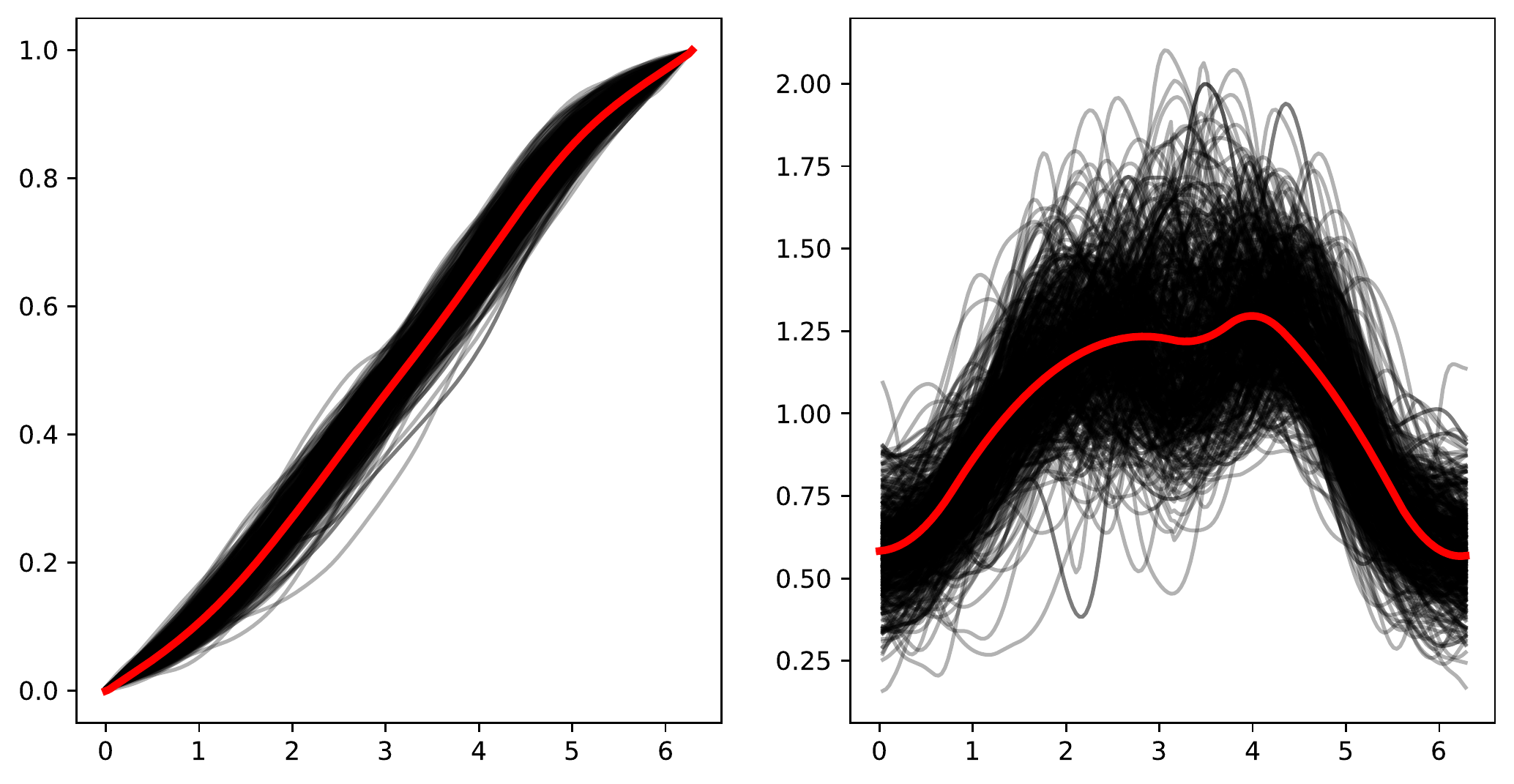}
	\end{subfigure}
	\begin{subfigure}{0.3\linewidth}
		\includegraphics[width=\linewidth]{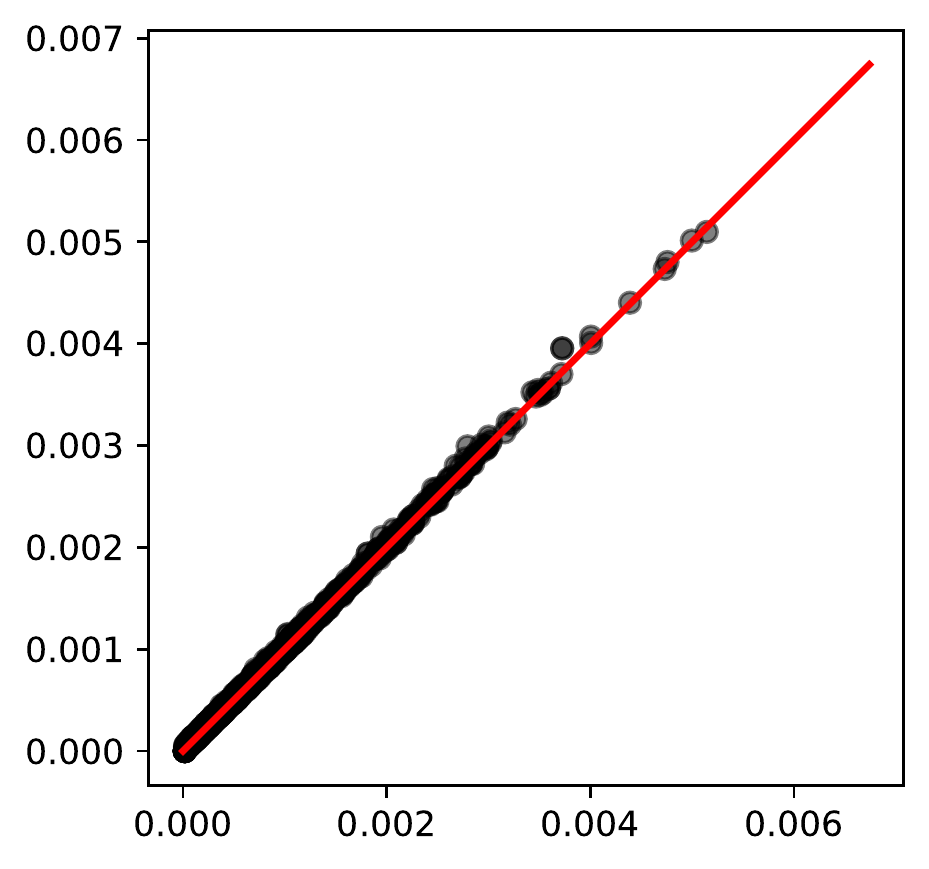}
	\end{subfigure}
	\caption{From left to right: (a subsample of) cdfs of the eye's dataset measures (red line denotes the barycentre), pdfs of the eye's dataset measures (red line denotes the barycentre), Wasserstein distance against $d_{\log}$ in the tangent space at the barycentre.}
	\label{fig:eye_data}
\end{figure}

\begin{figure}[ht]
	\centering
	\begin{subfigure}{0.5\linewidth}
		\includegraphics[width=\linewidth]{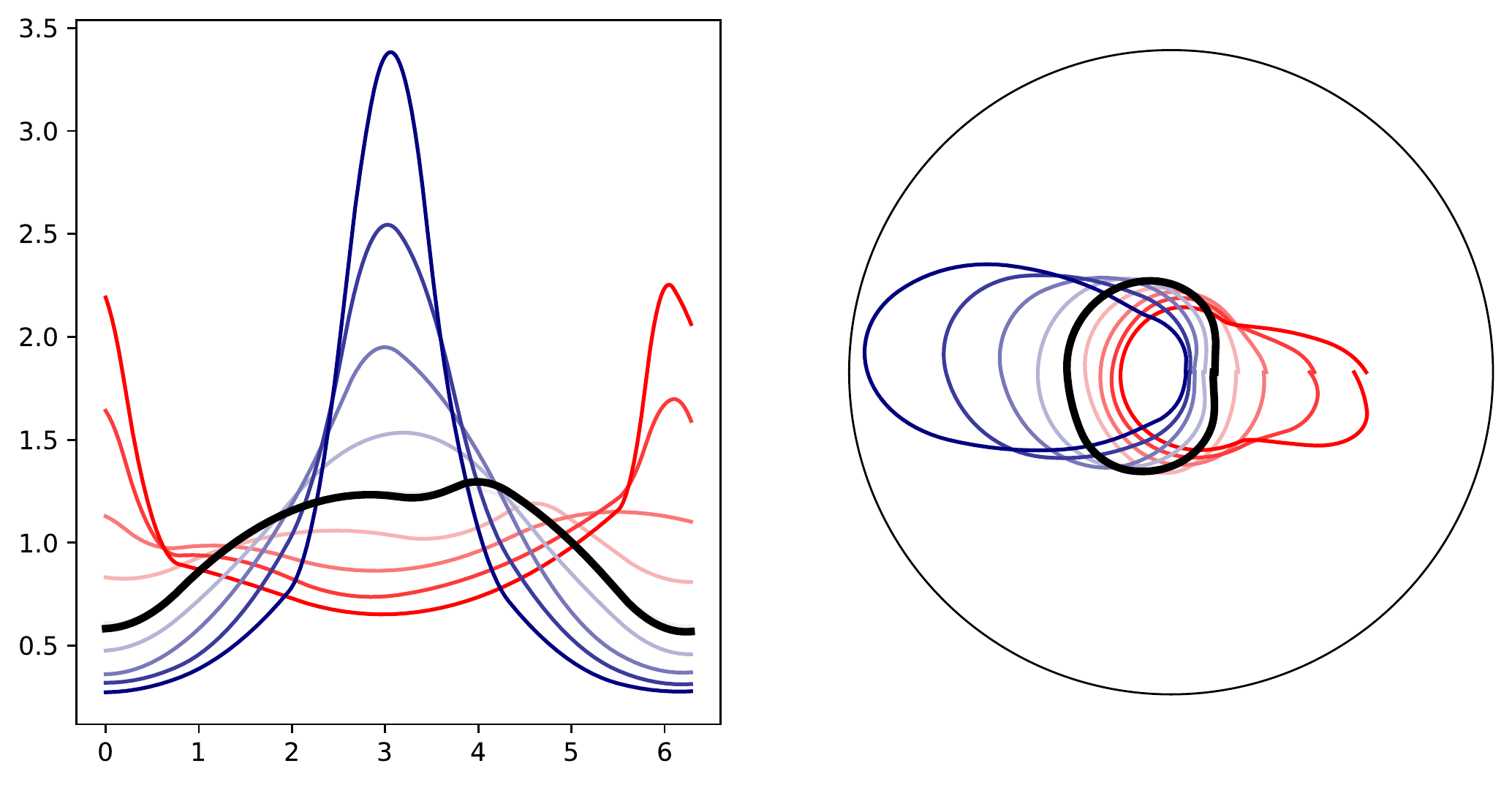}
	\end{subfigure}%
	\begin{subfigure}{0.5\linewidth}
		\includegraphics[width=\linewidth]{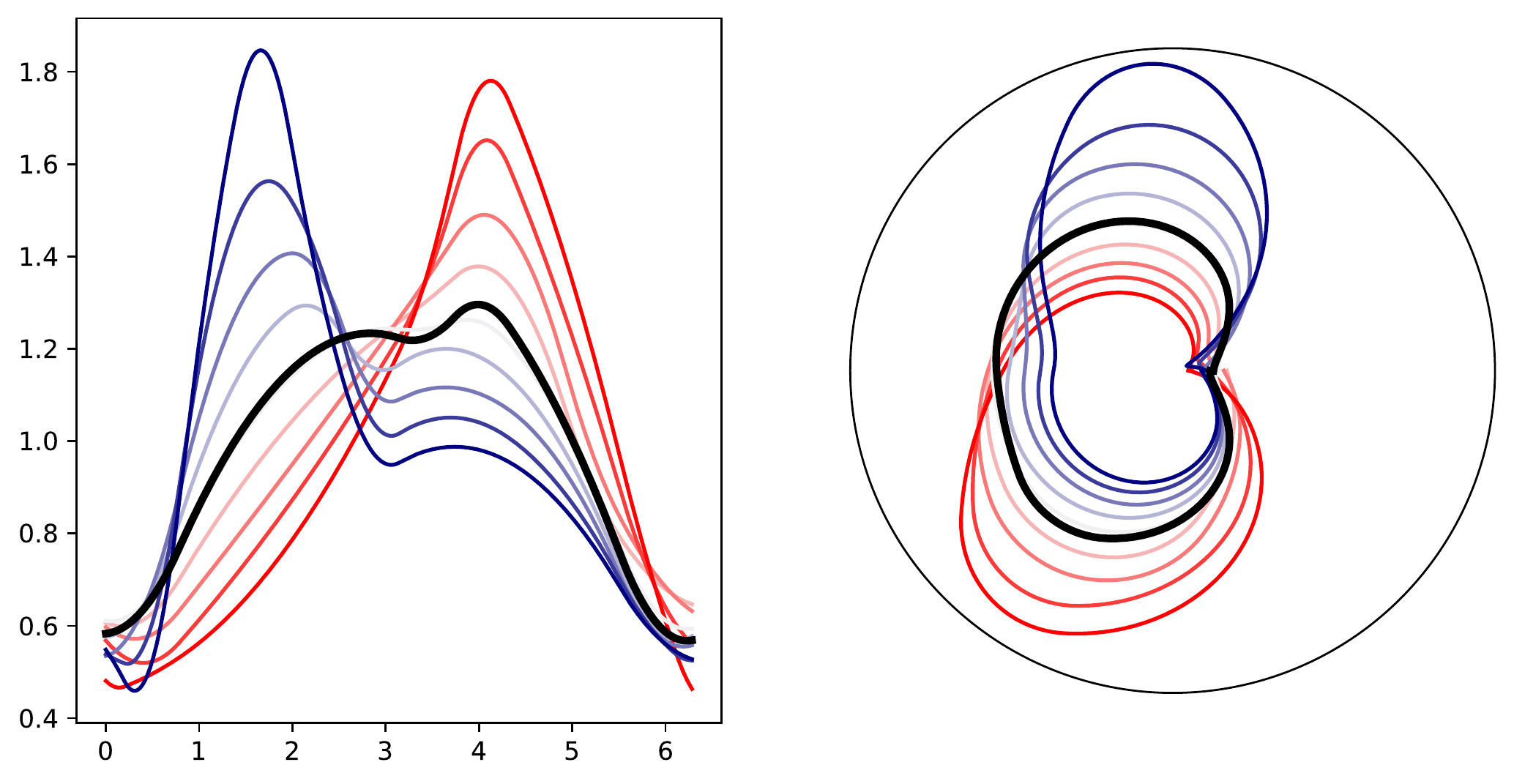}
	\end{subfigure}
	\caption{First (left plots) and second (right plots) principal directions: we report the pdfs on $[0, 1]$ (first and third panels) and in a polar plot (second and fourth panels). The black line denotes the barycentre.}
	\label{fig:eye_pd}
\end{figure}

We present here the results of applying PCA to the dataset of OCT measurements of NRR in \cite{ali2021circular}, available in their supplementary materials, which contains the OCT measurements of $3973$ patients, stratified according to their age groups. 
In particular, we assess the adequacy of Wasserstein PCA by interpreting the principal direction and performing clustering on the scores, showing how these clusters meaningfully capture shape patterns in data.
Data are displayed in \Cref{fig:ex_data} together with the Wasserstein barycentre found via Algorithm \ref{algo:barycentre}.
In the rightmost plot, we show how the Wasserstein and $L_2$ distances in the tangent space at the barycentre agree for almost all the couples of datapoints, thereby validating the use of the red measure in \Cref{fig:ex_data} as centering point for our PCA.

The first two principal directions -- which, by construction, are the two directions capturing most variability --  are reported in \Cref{fig:eye_pd}. 
We can clearly see that these decouple the shape variability along the horizontal and vertical axes. In particular, this implies that most of the variability in the data set is made by variations (in the distribution of the) of thickness of the optical nerve, along the horizontal axis.
To assess the adequacy of Wasserstein PCA for this dataset, we compute the \emph{average normalised reconstruction error} as a function of the number of directions $k$ used for the PCA.
\[
    \mbox{ANRE}_k :=\frac{1}{n} \sum_{i=1}^n\frac{ W^2_2(\mu_i,\mu^k_i)}{W^2_2(\bar{\mu},\mu_i)},
\]
where $\mu^k_i$ is the projection on the first $k$ principal components of the measure $\mu_i$. The ANRE index measures the approximation error, normalising by the deviation of the datapoints from the centre of the PCA, in close analogy with the decomposition of variance in the case of PCA in Euclidean spaces.
\Cref{fig:eye_err_clus} (left plot) reports the ANRE index as a function of $k$, as well as the (normalised) eigenvalues of the $L_2$ PCA in the tangent space.
Both measures show how the first $k=5$ directions are enough to capture the variability of the dataset. Moreover, the $L_2$ variance decreases faster than ANRE. This is expected since $L_2$ PCA ignores that data are constrained on the image of $\log_{\bar \mu}$, and ``captures variability'' also outside this set. Lastly, we believe that the ANRE in stabilises to a positive (small) number due to numerical errors. 
In \Cref{sec:app_eye2}, we report the scatter plot of the scores along the first two directions, stratified by age groups. From the plot, it is clear that, on the first two components, there is no evident effect of age alone on the shape of the optical nerve.

We cluster the datapoints via a hierarchical clustering algorithm with ward linkage working on the scores along the first $k=5$ principal directions.
In \Cref{sec:app_eye2} we show the dendrogram, while the two main clusters found are shown in  \Cref{fig:eye_err_clus}.
\Cref{fig:eye_7clus} reports a refined clustering obtained by cutting the dendrogram to get 7 clusters.
We have reported in red the barycentres of the clusters, which may be of some help in interpreting the clusters, even though our clustering pipeline is not barycentre-driven like a K-means algorithm. When looking at the two clusters in \Cref{fig:eye_err_clus}(b), it is clear that they identify two different shapes of the optical nerve with the left one being characterised by a clear bump in the left side.
The refined clusters in \Cref{fig:eye_7clus} in the appendix show interesting patterns as well, see the appendix for further details.

We close this section by highlighting that, as mentioned in the introduction, a very important byproduct of PCA is that classical tools from multivariate statistics can be applied to our dataset after projecting data on the principal components. 
We leave it to future works to complement our unsupervised analysis with an investigation involving the covariates contained in the original dataset.

\begin{figure}[t]
    \centering
	\begin{subfigure}{0.5\linewidth}
		\includegraphics[width=\linewidth]{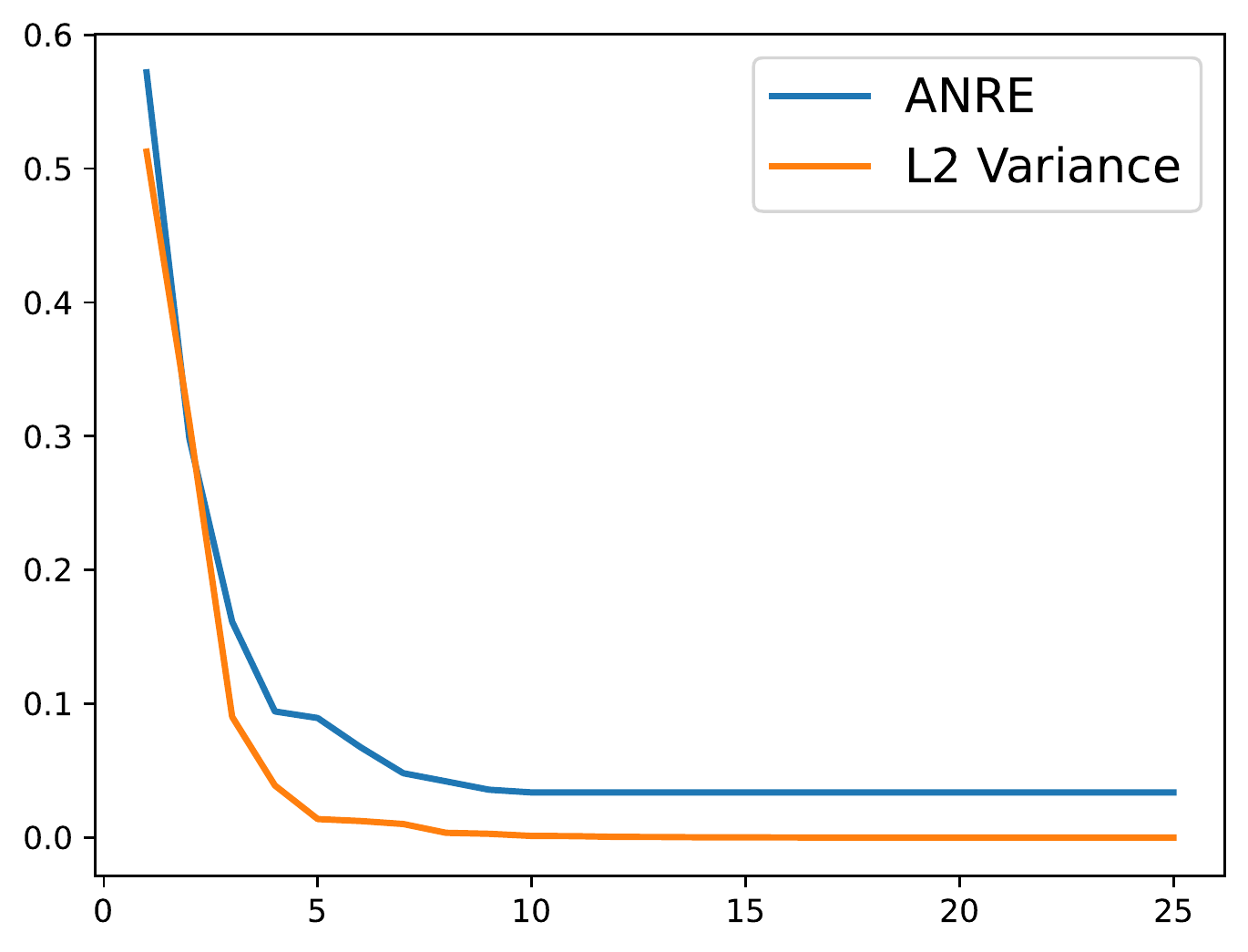}
        \caption{}
	\end{subfigure}%
    \begin{subfigure}{0.5\linewidth}
		\includegraphics[width=\linewidth]{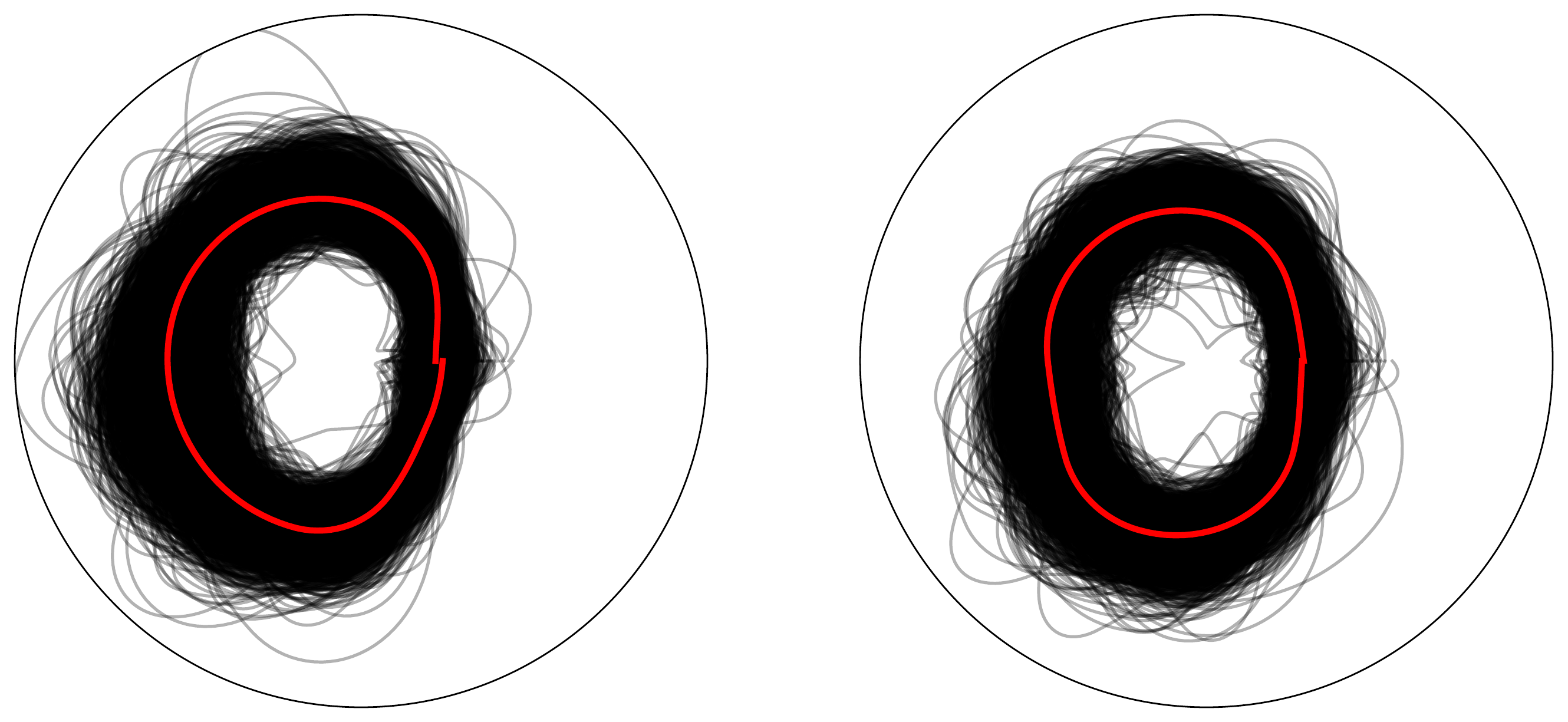}
        \caption{}
	\end{subfigure}
	\caption{ANRE index as a function of the dimension and fraction of ($L_2$) variance explained by each component (left plot) and data subdivided in two clusters (right), with the corresponding Wasserstein barycentre (red lines).}
	\label{fig:eye_err_clus}
\end{figure}

\section{Discussion}
\label{sec:conclusion}

In this paper, we tackled the problem of analysing distributional data supported on the circle.
Following recent trends in statistics and machine learning, we set out to use the Wasserstein distance to compare probability distributions.
To this end, we studied the optimal transportation problem on $\S_1$ and established several new theoretical results, which could also be of independent interest.
In particular, we provide an explicit characterisation of the optimal transport maps.
This result is rather surprising given that optimal transport on Riemannian manifolds is not well established and that the only case where such explicit formulas exist is for measures on the real line.
We further explored the weak Riemannian structure of the Wasserstein space and established strong continuity results for the exponential and logarithmic maps, as well as an explicit characterisation of the image of the logarithmic map.

Building on our theoretical findings, we propose a counterpart of the convex PCA in \cite{geodesic} for measures on $\S_1$. Following the approach in \cite{projected}, we propose a numerical method to compute the principal directions by means of a B-spline expansion, which leads to an easily implementable numerical algorithm.

Our definition of PCA requires a ``central point'', which is usually set equal to the barycentre. We used the algorithm in \cite{zemel2019frechet} to approximate the Wasserestein barycentre. However, we have not been able to prove the convergence of this algorithm in our setting.
Despite numerical simulations do seem to validate the use of Algorithm \ref{algo:barycentre}, the theoretical analysis is still an open problem.

Our investigation paves the way to several interesting extensions.
First, it is natural to consider the problem of Wasserestin regression. 
Thanks to the expression for the optimal transport maps, the geodesic regression in \cite{geod_regression_flet} can be defined in an analogous way for measure-valued dependent random variables.
Similarly, our definition of tangent space is amenable to the definition of a log regression for measures on $\S_1$.
For measures on $\R$, \cite{projected} proposed to map both dependent and independent variables onto the same tangent space, given that the Wasserstein space is isomorphic to any tangent.
Here, it would be more suitable to consider two tangent planes: one for the independent and one for the dependent variables, centred at the respective barycentres, similarly to \cite{muller}.

More broadly, we believe that the interplay between optimal transport and distributional data analysis can nourish further developments of both fields. Specialising the treatment of the optimal transportation theory to specific cases of statistical interest, such as the sphere, could lead on one hand to a better understanding of how the properties of tangent spaces relate to the base manifold, and on the other hand to data analysis frameworks which can extract insights for instance from earth-related distributions and other relevant data which are nowadays collected.

\FloatBarrier

\bibliography{MYBib}

\appendix

\section{Technical Preliminaries}\label{sec:app_preliminaries}

\subsection{Measure Theoretic Preliminaries}

Let $(M,g)$ be a Riemannian manifold of dimension $n$, with $TM$ being its tangent bundle and $TM^*$ its cotangent bundle. 
We know by definition that $g$ is a section $g:M\rightarrow (TM\otimes TM)^*$ and the volume form $\omega:M\rightarrow \wedge^n (TM)^*$ is defined locally by $\omega = \mid \text{det}(g)\mid^{1/2} d x_1\wedge \ldots \wedge d x_n$. 

Let $\mathcal{L}$ be the Lebesgue measure on $\R^n$, 
we consider the $\sigma$-algebra generate by all sets $A$ such that $\varphi(A\cup U)$ is in the Lebesgue $\sigma$-algebra of $\R^n$ for some chart $(U,\varphi)$.
Then we indicate with $\mathcal{L}_M$ the Riemann-Lebesgue volume measure, i.e. the measure on $M$ such that for every chart $(U,\varphi)$ and $A\subset U$ contained in the $\sigma$-algebra just define: 
\begin{equation}
\mathcal{L}_M(A)=
\int_{\varphi(A)} \mid \text{det}(g(\varphi^{-1}))\mid^{1/2} d\mathcal{L} 
\end{equation}
Note that, in general, $\varphi\push \mathcal{L}_M \neq \mathcal{L}$.

Consider $h:M\rightarrow \R$ such that $\text{supp}(h)\subset U$, with $(U,\varphi)$ being a chart, we can integrate $h$ as follows:
\begin{equation}
 \int_M h d\mu = \int_U h d\mu = \int_U h f_\mu d\mathcal{L}_M = \int_{z(U)} \mid \text{det}(g(\varphi^{-1}))\mid^{1/2} h(\varphi^{-1}) f_\mu(z^{-1}) d\mathcal{L}.
\end{equation}
The general case is defined in a natural way through a partition of unity. 

Now we can consider a measure $\mu$ on $M$, with density function $f_\mu$ wrt $\mathcal{L}_M$, that is:
\begin{equation}
\label{eq:measure_on_M}
\mu(A)=\int_A f_\mu d(\mathcal{L}_m) = \int_{\varphi(A)} \mid \text{det}(g(\varphi^{-1}))\mid^{1/2}f_\mu(\varphi^{-1}) d\mathcal{L}. 
\end{equation} 

Lastly, if $\mu$ doesn't have a density function wrt $\mathcal{L}_M$, to integrate some function against $\mu$ we pick a weak converging sequence $\mu_n \rightharpoonup \mu$ such that, for every $n$, $\mu_n$ has a density function and extend the definition taking the limit of the integrals.

\subsection{McCann's Result}

Let us recall the definition of $c$-concavity. Let $c: M \times M \rightarrow \R \cup {+\infty}$. For a function $\psi: M \rightarrow \R \cup \{-\infty\}$ define its $c$-transform $\psi^{c_+}: M \rightarrow \R \cup \{-\infty\}$ as 
\[
	\psi^{c_+}(x) = \inf_{y \in M} c(x, y) - \psi(y).
\]
Note that this generalises the Legendre transform, which is recovered when $M = \R^d$ and $c(x, y) = \langle x, y \rangle$.
\begin{definition}
	A function $\phi: M \rightarrow \R \cup \{-\infty\}$ is $c$-concave if its not identically $-\infty$ and there exists $\psi: M \rightarrow \R \cup \{-\infty\}$ such that 
	\[
		\phi = \psi^{c_+}
	\]
\end{definition}

Given $\mu\in\Wcal_2(M)$ and $U\subset M$ open we define $S(U)=\{v:U\rightarrow TM \mid \pi\circ v = \text{Id}_U\}$ be the sheaf of local sections of the tangent bundle of $M$, that is the vector space of tangent vector fields on $U$. 
Whenever $U$ is a local trivialisation of the tangent bundle, we may  use the notation $v_z:=v(z)\in T_z M$ for $v\in S(U)$.
Now we can define the following sheaf of functions:
\begin{equation}
	L^2_\mu(U) = \lbrace v \in S(U) \mid \int \| v_z \|^2 d\mu(z) <\infty  \rbrace,
\end{equation}
where $\| v_z \|^2$ stands for $g(v_z,v_z)$.

For any $v \in L^2_\mu(U)$ we can consider the map $\exp(v)$ defined as $\exp(v)(z) := \exp_z (v_z)$, $z \in U$.
\cite{mccann_polar} proved that if $\mu$ is absolutely continuous with respect to the volume measure on $M$, the unique optimal plan $\gamma^o$ between $\mu$ and $\nu$ is induced by a map, i.e. we have $T: M \rightarrow M$, inducing $(\text{Id},T):M\rightarrow M\times M$, such that  $\gamma^o = (\text{Id}, T)\push \nu$. 
Moreover, the map $T$ has the form $T = \exp(- \nabla \phi)$ where $\phi$ is a $d^2$-concave function \citep{gigli2011inverse}.

\subsection{More details on $\S_1$}
\label{sec:details_S_1}

We call $\exp_c:(\mathbb{R},+)\rightarrow (\S_1,\cdot)$ the map defined as $\exp_c(x)=e^{ i x}$ and view $\S_1\subset \mathbb{C}$ with the multiplication operation $e^{ i x}\cdot e^{ i y}= e^{ i (x+y)}$. Thus $\exp_c$ is a group morphism: $\exp_c(x+y)=\exp_c(x)\cdot \exp_c(y)$. 
Similarly $log_c:\S_1\rightarrow \R$ given by $\log_c(z)=x\in [0,2\pi)$ such that $z=e^{ i x}$. 
Clearly
$log_c$ is right inverse of $\exp_c$ i.e. $\exp_c\circ \log_c = \text{Id}_{\S_1} $. 

In a similar fashion, we use the projection on the quotient $\pi_c:\R \rightarrow \R/2\pi\mathbb{Z}$ which is a map of groups. This is an alternative, though equivalent representation of $\S_1$ in the following sense: we have that $\exp_c(x)=\exp_c(\pi_c(x))$ and so $\log_c= (\exp_c\circ \pi_c)^{-1} $ are group isomorphisms. We employ the metric induced by this alternative representation of the circle in the proofs - see \Cref{eq:dist_non_exp}.

Lastly for any $z\in \S_1$ we define shifted versions of the maps $\exp_c$ and $\log_c$, centred in $z$: $\exp_z(x)=\exp_c(x+\log_c(z))$ and $\log_z(z')=x'$ with $x'\in [-\pi/2,\pi/2)$ such that $\exp_z(x')=z'$. 
If we call $W:=(-\pi/2,\pi/2)$ and $V_z:=\S_1-\{-z\}$ then for every $z\in\S_1$, the couple $(V_z, \log_z)$ is a local chart which gives a homeomorphism on $W$.
With this differential structure $\S_1$ is a Lie Group and its tangent bundle is $T\S_1 = \{(x,v)\mid x\in \S_1 \text{ and } v\in T_x\S_1\}\simeq \S_1\times \R$. We call $1$ the point $(1,0)$ which gives the neutral element in $\S_1$.

We can make the following observations: the Riemannian metric $g:\S_1\rightarrow (T\S_1\otimes T\S_1)^*$ is
 induced by the embedding $\S_1\hookrightarrow \mathbb{C}\simeq \mathbb{R}^2$.
In local coordinates $(V_{1}, log_1)$ centred in $1\in\S_1$ the embedding is
$x\in (-\pi/2,\pi/2) \mapsto e^{i x} = (\cos(x),\sin(x))$. 
The map between tangent spaces is therefore 
$d/dx \mapsto -\sin(x)d/dx_1 + \cos(x)d/dx_2$.
Let $g_{\mathbb{E}}$ be the euclidean metric in $\R^2$.
In canonical coordinates of $\R^2$, $x_1,x_2$, this metric is clearly given by the quadratic form $\text{Id}_{2\times 2}$. Thus $g(d/dx)=g_E(-\sin(x),\cos(x))=\sin^2(x)+\cos^2(x)=1$. In other words in local coordinates $(V_{1},\log_1)$, $g(v,w)=v\cdot w$ for $v,w\in T_1\S_1\simeq \R$. 
With this metric, $\exp_z:T_z\S_1\rightarrow \S_1$ and $\log_z:\S_1\rightarrow T_z\S_1$ are respectively the Riemannian exponential and logarithm.

\subsection{Vector Fields on $\S_1$}

Consider now $ \mu\in \mathcal{P}(\S_1)$ and a real valued function $f:\S_1\rightarrow \R$ which we would like to integrate against $\mu$.
We saw that $det(g)\equiv 1$ and the volume form $\omega$ locally is $\omega = dx$. This immediately implies that 
$\mathcal{L}_{\S_1} = \exp_c \push \mathcal{L}$ or, equivalenty, $\log_c\push\mathcal{L}_{\S_1} = \mathcal{L}$.
Thus:
\begin{equation}
\int_{\S_1} f(z) d\mathcal{L}_{\S_1}(z) = \int_{[-\pi/2,\pi/2)}  f(\exp_c(x)) d\mathcal{L}(x)
\end{equation}

Along this line, the sheaf of tangent vector fields $L^2_\mu$ can be easily transported on $[0,2\pi)$ with the change of variables:
\begin{equation}
\int_{\S_1} \mid v_z \mid^2 d\mu(z) = 
\int_{[0,2\pi)} \mid v(\exp_c(x))\mid^2 d\log_c\push \mu(x)
\end{equation}

In fact from \Cref{eq:measure_on_M} plus the observation that $\mathcal{L}_{\S_1} = \exp_c \push \mathcal{L}$
we have that a measure $\mu$ on $\S_1$ is equivalently represented by the measure $\log_c\push \mu$ extended to a periodic measure $\mutilde$ on $\R$ as follows: for any measurable $A$ and $p \in \mathbb{Z}$, 
$\mu(\exp_c(A)) =: \mutilde(A) = \mutilde(A + p)$

\section{Proofs}

Let us introduce some notation, motivated by \Cref{sec:details_S_1}.
 
\begin{equation}\label{eq:dist_non_exp}
	d_\mathbb{Z}(x, y)^2 := \inf_{p \in 2\pi \mathbb{Z}} (x - y - p)^2\leq (x-y)^2
\end{equation}
Note that $d_R(z,z')=d_\mathbb{Z}(\log_c(z),\log_c(z'))$

\bigskip

\subsection{Proof of Theorem \ref{teo:opt_map}}
\begin{proof}
	The proof follows from the notion of locally optimal plans in \cite{delon2010fast}.
	Let $\gamma_\theta$ be the transport plan that takes an element of mass from position $\quant_\nutilde(u)$ to position $(F^\theta_\mutilde)^{-}(u)$. Then $\gamma_\theta$ is locally optimal and the associated cost is
	\[
		C_{[\mu,\nu]}(\theta) = \int_0^1 \left(\quant_\mutilde(u)  - (F^\theta_\nutilde)^{-}(u)\right)^2 \dd u
	\]
	The (global) optimal plan is associated to $\theta^* = \argmin C(\theta)=W^2_2(\mu,\nu)$. 
	To recover the optimal transport map we operate the change of variables $x = F^-_\mutilde(u)$, which yields:
	
\begin{align}
		&W^2_2(\mu,\nu) = \int_{0}^{2\pi} \left(T^\nutilde_\mutilde(x)  - x\right)^2 \dd\mutilde(x)\geq \\
		&\int_{0}^{2\pi} d^2_\mathbb{Z}(T^\nutilde_\mutilde(x),x) \dd\mutilde(x)=\\
		&\int_{\S_1} d^2_R(\exp_c(T^\nutilde_\mutilde(\log_c(z))) ,z)^2 \dd\mu(z)\geq 
		W^2_2(\mu,\nu)
\end{align}
where the first equality follows by defining $T_{\mutilde}^\nutilde := (F^{\theta^*}_\nutilde)^{-} \circ F_\mutilde$, while the last equality is obtained with $z=\exp_c(x)$ and the properties of $d_\mathbb{Z}$.
\end{proof}

\subsection{Proof of \Cref{teo:opt_theta}}

\begin{proof}
	First we observe that:
	\[
	(F^{\theta^*}_\nutilde)^{-} \left( F_\mutilde(u+2\pi p) \right)=(F^{\theta^*}_\nutilde)^{-} \left( F_\mutilde(u)+p \right)=(F_\nutilde)^{-} \left( F_\mutilde(u)+p-\theta^* \right)=(F^{\theta^*}_\nutilde)^{-} \left( F_\mutilde(u) \right)+2\pi p
	\]
	
	which means that $\widetilde T(u+2\pi p)= \widetilde T(u)+2\pi p$ for every $p\in \mathbb{Z}$.

	By \Cref{teo:opt_map} we know that $\widetilde T$ is an optimal transport map if and only if $\theta=\theta^*$ as in \Cref{eq:theta_cost}.
	Define:
	\[
	C_{[\mu,\nu]}(\theta)= \int_0^1 \left(\quant_\mutilde(u)  - (F^\theta_\nutilde)^{-}(u)\right)^2 \dd u
	\]
	\cite{delon2010fast} prove that the map $\theta \mapsto C_{[\mu,\nu]}(\theta)$ is strictly convex if $\mu$ is a.c.. Thus $\theta^*$ is the unique stationary point of the function. For this reason we compute the derivative of $C_{[\mu,\nu]}$ in $\theta$, knowing that $\theta^*$ is the only value such that $(C_{[\mu,\nu]})' =0$.
	Thanks to Leibniz rule we can write:
	\begin{align*}
		\frac{\dd}{\dd \theta} C_{[\mu,\nu]}(\theta) &= 
	\int_0^1 \frac{\dd}{\dd \theta} \left(\quant_\mutilde(u)  - (F_\nutilde)^{-}(u-\theta)\right)^2 \dd u \\
	& = \int_0^1 \frac{\dd}{\dd \theta} \left(\quant_\mutilde(u+\theta)  - (F_\nutilde)^{-}(u)\right)^2 \dd u \\
	& = \int_0^1 2\left(\quant_\mutilde(u+\theta)  - (F_\nutilde)^{-}(u)\right)\frac{1}{f_\mutilde(\quant_\mutilde(u+\theta))}  \dd u
	\end{align*}
	with the change of variables $v=\quant_\mutilde(u+\theta)$, which entails $F_\mutilde(v)-\theta=u$ and $\dd u=\dd\mutilde (v)=f_\mutilde(v)\dd v$ we obtain
	\[
	= \int_{v_0}^{2\pi+v_0} 2\left(v  - \widetilde T(v)\right)\frac{f_\mutilde(v)}{f_\mutilde(v)}  \dd v = -2\int_{v_0}^{2\pi} \left(\widetilde T(v)  - v\right)\dd v -2 \int_{2\pi}^{2\pi+v_0} \left(\widetilde T(v)  - v\right)\dd v 
	\]
	with $v_0=\quant_\mutilde(\theta)$.

	Via the change of variables $u=v-2\pi$ we obtain:
	\[
	\frac{\dd}{\dd \theta} C_{[\mu,\nu]}(\theta)=-2\int_{0}^{2\pi} \left(\widetilde T(u)-2\pi  - (u-2\pi)\right)\dd v 
=	-2\int_{0}^{2\pi} \left(\widetilde T(u)  - u\right)\dd u .
	\]
	Optimality follows if and only if such quantity is equal to zero and thus:
	\[
	0 = \int_{0}^{2\pi} \left(\widetilde T(u)  - u\right)\dd u. 
	\]
	\end{proof}

\subsection{Proof of Theorem \ref{teo:opt_map2}}

\begin{proof}
	Observe that $T$ is an optimal transport map, then defining 
	\[
		\widetilde{T}(x) :=  \log_c \circ T \circ \exp_{c}, \, x \in (0, 2\pi), \qquad \widetilde{T}(x + p) = \widetilde{T}(x) + p, \, p \in 2\pi\mathbb{Z}
	\]
	clearly satisfies the monotonicity and ``periodicity'' requirements. Moreover, \eqref{eq:int_condition} is satisfied by Theorem \ref{teo:opt_theta}.
	To prove that $|\widetilde{T}(x) - x| < \pi/2$ note that this is equivalent to $\inf_{p\in 2\pi\mathbb{Z}} \mid \widetilde{T}(x)-x-p\mid = \mid \widetilde{T}(x) - x\mid $.
	Since $T:\S_1\rightarrow \S_1$ is an optimal transport map from $\mu$ to $\nu$ we have:
	\begin{align*}
		W_2(\mu,\nu)& =\int_{\S_1}d_R(T(z),z)^2 d\mu \\
		&= \int_{[0,2\pi]}d_{R}(T(\exp_c(x)),\exp_c(x))^2d(\log_c\push \mu)(x) \\ 
		&= \int_{[0,2\pi]}d_{\mathbb{Z}}(\widetilde{T}(x),x)^2d\mutilde(x)
	\end{align*}
	where the first equality is obtained via the definition of optimal transport map, the second through the change of variables $z = \exp_{c\mid [0,2\pi]}(x)$, and in the last one we use the definition of $\widetilde{T}$, $\mutilde$ and the properties of $d_\mathbb{Z}$.
	As already noted, we have $\inf_{p\in\mathbb{Z}} \mid \widetilde{T}(x)-x-p\mid \leq \mid \widetilde{T}(x) - x\mid $. If the strict inequality holds for some $A\subset [0,2\pi]$ with $\mutilde(A)>0$ then also the integrals on $[0,2\pi]$ must be different, and the thesis follows.
	
	To prove the reverse statement, it suffices to prove that $\widetilde{T}(v)$ can be written as $\quant_{\nutilde}(F_\mutilde(v) + \theta)$, which is equivalent to saying that 
	\[
		\quant_\nutilde(v) = \widetilde{T}\left((F_\mutilde(\cdot) + \theta)^{-1}(v)\right) = \widetilde{T}(\quant_\mutilde(v - \theta)).
	\]
	Define $G_{\nutilde} := \widetilde{T} \circ \quant_\mutilde$, then of course $\widetilde{T} = G_\nutilde \circ F_\mutilde$. We show that $G_\nutilde(u) \equiv \quant_\nutilde(u + \theta)$.
	We have that, for $x \in [0, 2\pi)$ 
	\begin{align*}
		F_\nutilde(x) = \nutilde([0, x]) = \mutilde(\Ttilde^{-1}([0, x])) = \mutilde([\Ttilde^{-1}(0), \Ttilde^{-1}(x)]) = F_\mutilde(\Ttilde^{-1}(x)) - F_\mutilde(\Ttilde^{-1}(0))
	\end{align*}
	and observe that $F_\nutilde(x) \leq 1$ thanks to $|\Ttilde(x) - x| < \pi/2$. Hence, the pushforward of $\mutilde$ on $\S_1$ gives a valid probability measure.
	Taking the inverse of $F_\nutilde$ 
	\[
		\quant_\nutilde(u) = \left( F_\mutilde(\Ttilde^{-1}(\cdot)) - F_\mutilde(\Ttilde^{-1}(0))\right)^{-}(u) = \Ttilde \circ \quant_\mutilde(u + F_\mutilde(\Ttilde^{-1}(0)))
	\]
	and setting $- \theta = F_\mutilde(\Ttilde^{-1}(0))$ yields the result.
	\end{proof}

\subsection{Proof of Theorem \ref{teo:continuity}}

To prove item (ii), we will need the two following preliminary lemmas.

\begin{lemma}\label{lemma:tilde_conv}
	Suppose we have $W_2(\nu,\nu_n)\rightarrow 0$ in $\Wcal_2(\S_1)$ with $\nu,\nu_n$ being a.c. wrt $\mathcal{L}_{\S_1}$ (for every $n$). Then $W_2(\log_c\push\nu,\log_c\push\nu_n)\rightarrow 0$ in $\Wcal_2(\R)$.
\end{lemma}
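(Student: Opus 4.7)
The plan is to reduce everything to weak convergence, using the compactness of both $\S_1$ and the support $[0,2\pi)$ of the pushforward measures to pass freely between weak and $W_2$ convergence. The only genuinely delicate point is that $\log_c$ is discontinuous at the cut point $1 = \exp_c(0) = \exp_c(2\pi)$, and this is precisely where absolute continuity of $\nu$ comes in.

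First, I would observe that since $\S_1$ is compact, the hypothesis $W_2(\nu, \nu_n) \to 0$ on $\Wcal_2(\S_1)$ is equivalent to the weak convergence $\nu_n \rightharpoonup \nu$ (the second moments are automatically uniformly bounded because distances on $\S_1$ are bounded by $\pi$). Similarly, once we establish weak convergence of the pushforwards $\log_c \push \nu_n \rightharpoonup \log_c \push \nu$ on $\R$, the conclusion $W_2(\log_c \push \nu_n, \log_c \push \nu) \to 0$ will follow immediately from the fact that all the pushforward measures are supported inside the compact interval $[0, 2\pi]$, giving uniformly bounded second moments on $\R$.

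The core step is therefore to transfer weak convergence through $\log_c$. Given a bounded continuous $f:\R \to \R$, consider $h = f \circ \log_c : \S_1 \to \R$. This function is bounded, and it is continuous at every point of $\S_1 \setminus \{1\}$, since $\log_c$ itself is continuous on $\S_1 \setminus \{1\}$ (the only jump of $\log_c$ occurs at the cut, where its value jumps from values near $2\pi$ to $0$). Because $\nu$ is absolutely continuous with respect to $\mathcal{L}_{\S_1}$, we have $\nu(\{1\}) = 0$, so $h$ is continuous $\nu$-a.e. The standard $\nu$-a.e.~continuity version of the portmanteau theorem then gives
\[
\int_\R f \, d(\log_c \push \nu_n) = \int_{\S_1} h \, d\nu_n \longrightarrow \int_{\S_1} h \, d\nu = \int_\R f \, d(\log_c \push \nu),
\]
which is precisely the weak convergence $\log_c \push \nu_n \rightharpoonup \log_c \push \nu$ on $\R$.

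The main obstacle is the discontinuity of $\log_c$ at the cut; without absolute continuity of $\nu$ this step would fail, as is easily seen by taking Dirac masses on $\S_1$ converging to $\delta_1$ from ``the wrong side'' of the cut, whose pushforwards on $\R$ then converge to $\delta_{2\pi}$ rather than $\delta_0$. The absolute continuity hypothesis on $\nu$ kills exactly this pathology, since it forces $\nu(\{1\}) = 0$ and allows the $\nu$-a.e.~continuity version of portmanteau to apply. Note that absolute continuity of the approximating $\nu_n$ plays no role in the argument; only the absolute continuity of the limit is needed.
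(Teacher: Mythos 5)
Your proof is correct, and it follows the same overall strategy as the paper's: reduce $W_2$-convergence to weak convergence via compactness (all measures involved have uniformly compactly supported pushforwards, so tightness/moment conditions are free), and then use absolute continuity of the limit $\nu$ to neutralise the discontinuity of the unrolling at the cut point. The one place you diverge is in how you transfer weak convergence through $\log_c$: the paper passes to distribution functions, writes $F_k(x)=\nu_k(\exp_c([0,x)))$, and applies the set form of Portmanteau to the arcs $\exp_c([0,x))$ (whose boundary $\{1,\exp_c(x)\}$ is $\nu$-null by absolute continuity), whereas you invoke the $\nu$-a.e.-continuity version of the mapping theorem applied directly to $f\circ\log_c$. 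Both hinge on exactly the same fact, but your formulation is arguably cleaner and more portable: it does not rely on the one-dimensional equivalence between weak convergence and pointwise convergence of CDFs, so it would extend verbatim to pushforwards by any $\nu$-a.e. continuous chart on a higher-dimensional compact manifold. Your closing remarks --- the Dirac-mass counterexample showing why absolute continuity of $\nu$ is essential, and the observation that absolute continuity of the $\nu_n$ is never used --- are accurate and sharpen the statement relative to the paper's hypotheses.
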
	
	
	\begin{proof}
	From Theorem 7.12 in \cite{VillaniBook}, convergence in the Wasserstein metric is equivalent to weak convergence plus the tightness condition: there exist $x_0$ such that
	\[
		\lim_{R \rightarrow +\infty} \limsup_{k \rightarrow +\infty} \int_{d(x, x_0) > R} d(x, x_0)^p d\log_c\push\nu_k(x)
	\]
	Observe that each measure $\log_c\push\nu_k$ is supported on $[0, 2\pi]$ so that the condition is always met.
	Hence, we just need to show that the sequence $\log_c\push\nu_k$ converges weakly.
	For measures on the real line, weak convergence is equivalent of pointwise convergence of the associated distribution functions at continuity points.
	That is, letting $F_k(x) := \log_c\push\nu_k([0, x)) $ and $F(x) := \log_c\push\nu([0, x))$, it must hold that
	\begin{equation}\label{eq:conv_cdf}
		F_k(x) \rightarrow F(x), \qquad \text{all $x$ such that $F(x)$ is continuous}
	\end{equation}
	Observe that $F_k(x) = \nu_k(\exp_c([0, x)))$ by definition. By Portmanteau's theorem, for any $x$ such that $\nu(\{\exp_c(x)\}) = 0$ we have that $\nu_k(\exp_c([0, x))) \rightarrow \nu_k(\exp_c([0, x)))$ which easily implies $\eqref{eq:conv_cdf}$

	\end{proof}

\begin{lemma}
	Suppose we have $W_2(\nu,\nu_n)\rightarrow 0$ with $\mu,\nu,\nu_n$ being a.c. wrt $\mathcal{L}_{\S_1}$ (for every $n$).
	Then $\parallel \log_\mu (\nu_n)-\log_\mu (\nu)\parallel_{L^2_\mu} \rightarrow 0$.
\end{lemma}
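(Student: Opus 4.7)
The plan is to reduce the $L^2_\mu$ convergence of the logarithmic maps to an $L^2([0,1])$ convergence of (shifted) quantile functions of the periodic extensions, and then handle the shift parameter and the quantile convergence separately.

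First I would perform the change of variables $u = F_\mutilde(x)$ (which pushes $\mutilde$ forward to Lebesgue on $[0,1]$) to write
\[
\|\log_\mu(\nu_n) - \log_\mu(\nu)\|_{L^2_\mu}^2 = \int_0^1 \bigl( \quant_{\nutilde_n}(u - \theta^*_n) - \quant_\nutilde(u - \theta^*)\bigr)^2 \dd u,
\]
where $\theta^*_n$, $\theta^*$ are the optimal shifts from \Cref{teo:opt_map} associated with $(\mu,\nu_n)$ and $(\mu,\nu)$ respectively, and the quantile functions are extended periodically via $\quant_\nutilde(u+k) = \quant_\nutilde(u)+2\pi k$ for $k\in\mathbb Z$. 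Split the integrand via the triangle inequality into
\[
A_n(u) := \quant_{\nutilde_n}(u-\theta^*_n) - \quant_\nutilde(u-\theta^*_n) \quad\text{and}\quad B_n(u):= \quant_\nutilde(u-\theta^*_n) - \quant_\nutilde(u-\theta^*),
\]
so it suffices to show $\|A_n\|_{L^2([0,1])}\to 0$ and $\|B_n\|_{L^2([0,1])}\to 0$.

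For the first piece, I would invoke \Cref{lemma:tilde_conv} to obtain $W_2(\log_c\push\nu_n, \log_c\push\nu) \to 0$ on $\R$, which is equivalent to $L^2([0,1])$-convergence $\quant_{\nutilde_n}\to \quant_\nutilde$ of the quantile functions of the measures restricted to one period. By periodicity of the extended quantiles and translation invariance of Lebesgue measure on $[0,1]$, $\|A_n\|_{L^2([0,1])}$ equals (up to a shift in the interval of integration) $\|\quant_{\nutilde_n}-\quant_\nutilde\|_{L^2}$, which tends to zero. The main obstacle is the second piece: we need $\theta^*_n \to \theta^*$. For this I would use that $\theta \mapsto C_{[\mu,\nu]}(\theta)$ in \Cref{eq:theta_cost} is strictly convex (as recalled from \cite{delon2010fast}) so the minimiser is unique, and that $C_{[\mu,\nu_n]}\to C_{[\mu,\nu]}$ uniformly on compact sets in $\theta$: indeed, by the $L^2$-convergence of quantile functions and Cauchy--Schwarz
\[
|C_{[\mu,\nu_n]}(\theta) - C_{[\mu,\nu]}(\theta)| \leq 2\bigl(\|\quant_\mutilde\|_{L^2} + \sup_n \|\quant_{\nutilde_n}(\cdot-\theta)\|_{L^2}\bigr)\|\quant_{\nutilde_n}-\quant_\nutilde\|_{L^2} + o(1)
\]
uniformly in $\theta$ varying in a bounded set (and by periodicity we may restrict $\theta\in[0,1]$). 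Standard arguments on convergence of minimisers of uniformly convergent strictly convex functionals then give $\theta^*_n\to\theta^*$.

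Finally, to conclude $\|B_n\|_{L^2([0,1])}\to 0$, note that since $\nu$ is absolutely continuous, $\quant_\nutilde$ is continuous on $\R$, so $B_n(u)\to 0$ pointwise on $[0,1]$. Moreover the $B_n$'s are uniformly bounded (the difference of quantile values at two points within distance $|\theta^*_n-\theta^*|+O(1)$ of each other is controlled by the fact that $\quant_\nutilde$ increases by $2\pi$ per period), so dominated convergence finishes the argument. Combining both pieces yields $\|\log_\mu(\nu_n)-\log_\mu(\nu)\|_{L^2_\mu}\to 0$, as required.
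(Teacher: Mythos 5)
Your proof is correct in substance and follows essentially the same strategy as the paper's: both arguments rest on (i) convergence of the quantile functions obtained from \Cref{lemma:tilde_conv}, and (ii) convergence of the optimal shifts $\theta^*_n\to\theta^*$, deduced from strict convexity of $\theta\mapsto C_{[\mu,\nu]}(\theta)$ together with compactness of the set of candidate minimisers. Where you differ is in the final step: the paper establishes pointwise convergence of the maps and upgrades it to uniform (hence $L^2$) convergence using continuity and boundedness, whereas you change variables to $[0,1]$, split by the triangle inequality into a quantile-convergence term $A_n$ and a shift-convergence term $B_n$, and finish with dominated convergence. Your route makes explicit exactly which two convergences are needed and is arguably the cleaner of the two.

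Two of your justifications are false as stated, though both are repairable. First, $C_{[\mu,\nu]}(\theta)$ is \emph{not} periodic in $\theta$: replacing $\theta$ by $\theta+1$ shifts $(F^{\theta}_\nutilde)^{-}$ by $2\pi$ and changes the cost, so you cannot restrict to $\theta\in[0,1]$ \virgolette{by periodicity}. Compactness of the set of admissible minimisers instead follows from the a priori bound $|\widetilde T(x)-x|<\pi$ on optimal maps (cf. \Cref{teo:opt_map2} and \Cref{eq:OTM_shift}), which is how the paper confines the minimisation to a compact interval $K$. Second, absolute continuity of $\nu$ gives continuity of $F_\nutilde$, not of $\quant_\nutilde$: if the support of $\nu$ is not all of $\S_1$, the quantile function has jumps, so $B_n(u)\to 0$ need not hold at every $u$. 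Fortunately your dominated-convergence step only needs convergence for a.e.\ $u$, and this holds because a nondecreasing function has at most countably many discontinuities, so $\quant_\nutilde(u-\theta^*_n)\to\quant_\nutilde(u-\theta^*)$ for all $u$ outside a Lebesgue-null set. With these two repairs the argument is complete.
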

	
	\begin{proof}
	By \Cref{lemma:tilde_conv} we have $F_{\nutilde_n}(x)\rightarrow F_{\nutilde}(x)$ and the same for the quantile functions. 
	As a consequence $C_{[\nu_n,\mu]}(\theta)\rightarrow C_{[\nu,\mu]}(\theta)$.

	Thus consider $\theta_n = \arg\min C_{[\nu_n,\mu]}$. By the discussion in \Cref{sec:OTM_S1} and in particular \Cref{eq:OTM_shift}, we have that the minimisation domain of $C_{[\nu_n,\mu]}$ can be restricted to a sufficiently large compact interval $K$.  	
Since $\{\theta_n\}\subset K$ compact, we can consider a converging subsequence which we still call $\{\theta_n\}$ with an abuse of notation. Let $\theta_n\rightarrow \theta^*$. Recall that $C_{[\nu_n,\mu]}$ \citep{delon2010fast} is strictly convex. Thus, by standard arguments,
	we conclude that $\theta^*=\arg\min C_{[\nu,\mu]}$.

	Now consider:
	\begin{align}
	&\mid \left(F_\mutilde(\quant_{\nutilde_n}(u+\theta_n))\right) - \left(F_\mutilde(\quant_\nutilde(u+\theta^*))\right)\mid \leq \\
	&\mid \left(F_\mutilde(\quant_{\nutilde_n}(u+\theta_n))\right) - \left(F_\mutilde(\quant_\nutilde(u+\theta_n))\right)\mid+
	\mid \left(F_\mutilde(\quant_{\nutilde}(u+\theta_n))\right) - \left(F_\mutilde(\quant_\nutilde(u+\theta^*))\right)\mid 
	\end{align}
	
	Which implies the pointwise convergence $T^\mutilde_{\nutilde_n}(u)\rightarrow T^\mutilde_\nutilde(u)$: both addends in the last sum go to $0$. Since these maps are continuous and bounded on $[0,2\pi]$ we have uniform convergence and strong convergence. 
	The strong convergence in the image of $\log_\mu$ then follows.
	\end{proof}

\bigskip

We are now ready to prove Theorem \ref{teo:continuity}.

\begin{proof}
\begin{enumerate}
\item To check the continuity of $\exp_\mu$, consider $T_\mu^{\nu_1}\times T_\mu^{\nu_2}:\S_1\rightarrow \S_1 \times \S_1$ and induce the transport plan $\gamma=(T_\mu^{\nu_1},T_\mu^{\nu_2})\push \mu$.
	Then we have:
	\begin{align*}
		W_2^2(\nu_1, \nu_2) &\leq \int_{\S_1\times \S_1} 
		d_R(z,w)^2 d\gamma(dzdw)\\
		&= \int_{\S_1} 
		d_R(T_\mu^{\nu_1}(z),T_\mu^{\nu_2}(z))^2 d\mu(dz) \\
		&\leq \int_{[0,2\pi]} 
		d_\mathbb{Z}(T_\mutilde^{\nutilde_1}(x),T_\mutilde^{\nutilde_2}(x))^2 d\mutilde(dx) \\
		& \leq \parallel T_\mutilde^{\nutilde_1}-T_\mutilde^{\nutilde_2}\parallel^2_{L^2_\mutilde([0,2\pi])}=				\parallel \log_\mu(\nu_1)-\log_\mu(\nu_2) \parallel^2_{L^2_\mu},	
	\end{align*}
where the last identity is obtained thanks to $\log_c\push \mu =\mutilde$ on $[0,2\pi]$.

\item To check the continuity of $\log_\mu$ instead, by an approximation argument we obtain sequential continuity of $\log_\mu$ at any measure $\nu\in \Wcal_2(\S_1)$:
consider $\nu_n\rightarrow \nu$, with $\nu_n$ a.c. measures. Then $\{\log_\mu(\nu_n)\}$ is a Cauchy sequence in $L^2_\mu$, which is a complete metric space, and so it converges to a vector field $v$. Consider $\exp_\mu(v)$. By the continuity of $\exp_\mu$ we have $\nu_n\rightarrow \exp_\mu(v)$ which then entails $\exp_\mu(v)=\nu$.

Lastly, sequential continuity in metric spaces implies continuity.

\end{enumerate}
\end{proof}	

\subsection{Proof of \Cref{prop:weaker_cont}}
Theorem 3.2 in \cite{ambrosio2019optimal} ensures that, in the hypotheses of the proposition, $\int_{\S_1}d_R(T_\mu^{\nu_n}(x),T_\mu^\nu(x))^2 d\mu(x)\rightarrow 0$.

Now we prove the following lemma which ends the proof.

\begin{lemma}
	Suppose we have $\mu,\nu,\nu_n$ being a.c. wrt $\mathcal{L}_{\S_1}$ (for every $n$). And suppose that the following hold.
 	\begin{align*}
 	\int_{\S_1}d_R(T_\mu^{\nu_n}(x),T_\mu^\nu(x))^2 d\mu(x)\rightarrow 0.
 	\end{align*}

	Then $\parallel \log_\mu (\nu_n)-\log_\mu (\nu)\parallel_{L^2_\mu} \rightarrow 0$.
\end{lemma}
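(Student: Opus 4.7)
\medskip

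\textbf{Proof plan.} The overall strategy is to translate the hypothesis to a statement about the lifts $\widetilde T_n := \log_\mu(\nu_n)$ and $\widetilde T := \log_\mu(\nu)$ on $[0,2\pi]$, then upgrade convergence in the periodic distance $d_\mathbb{Z}$ to convergence in Euclidean distance, exploiting the strict constraint $|\widetilde T(u)-u|<\pi$ from \Cref{teo:opt_map2}. I would execute the argument through a pass to a subsequence followed by dominated convergence, and conclude by the standard subsequence-of-subsequence trick.

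First, using $\log_c\push\mu=\mutilde$ on $[0,2\pi]$ and $d_R(T_\mu^{\nu_n}(x),T_\mu^\nu(x))=d_\mathbb{Z}(\widetilde T_n(\log_c x),\widetilde T(\log_c x))$, the hypothesis rewrites as
\begin{equation*}
	\int_{[0,2\pi]} d_\mathbb{Z}\bigl(\widetilde T_n(u),\widetilde T(u)\bigr)^2\, \dd\mutilde(u)\longrightarrow 0.
\end{equation*}
In particular $d_\mathbb{Z}(\widetilde T_n,\widetilde T)\to 0$ in $\mutilde$-measure, so one can extract a subsequence $\{n_k\}$ along which $d_\mathbb{Z}(\widetilde T_{n_k}(u),\widetilde T(u))\to 0$ for $\mutilde$-a.e.\ $u$. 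By \Cref{teo:opt_map2}, $|\widetilde T_n(u)-u|<\pi$ and $|\widetilde T(u)-u|<\pi$ hold $\mutilde$-a.e.; taking the countable union of the exceptional null sets, we may fix $u$ outside a single $\mutilde$-null set $N$ where these bounds hold for \emph{all} $n$ simultaneously together with the pointwise convergence of $d_\mathbb{Z}$.

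The core step is to promote the convergence in $d_\mathbb{Z}$ to convergence in $\R$ pointwise on $N^c$. Fix $u\in N^c$. Write the minimiser in $d_\mathbb{Z}$ as $p_k(u)\in\mathbb{Z}$, so that $|\widetilde T_{n_k}(u)-\widetilde T(u)-2\pi p_k(u)|\to 0$. The strict inequality $|\widetilde T(u)-u|<\pi$ gives an $\eta(u)>0$ with $\widetilde T(u)\in(u-\pi+\eta(u),u+\pi-\eta(u))$. If $p_k(u)=1$ for infinitely many $k$, then eventually $\widetilde T_{n_k}(u)>\widetilde T(u)+2\pi-\eta(u)/2>u+\pi+\eta(u)/2$, contradicting $\widetilde T_{n_k}(u)<u+\pi$; the case $p_k(u)=-1$ is symmetric, and $|p_k(u)|\ge 2$ is impossible since it would force $|\widetilde T_{n_k}(u)-\widetilde T(u)|\ge 4\pi-o(1)$, incompatible with the pointwise bound $|\widetilde T_{n_k}(u)-\widetilde T(u)|<2\pi$ deduced from the two constraints. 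Therefore $p_k(u)=0$ eventually and $\widetilde T_{n_k}(u)\to\widetilde T(u)$ in $\R$.

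Finally, on $[0,2\pi]$ the same constraints yield the uniform bound $|\widetilde T_{n_k}(u)-\widetilde T(u)|\le 2\pi$, so dominated convergence against the finite measure $\mutilde_{|[0,2\pi]}$ delivers
\begin{equation*}
	\|\widetilde T_{n_k}-\widetilde T\|_{L^2_{\mutilde}}^2=\int_{[0,2\pi]}\bigl|\widetilde T_{n_k}(u)-\widetilde T(u)\bigr|^2\, \dd\mutilde(u)\longrightarrow 0.
\end{equation*}
Applying this argument to an arbitrary subsequence of the original sequence $\{\widetilde T_n\}$, every subsequence admits a further subsequence converging to $\widetilde T$ in $L^2_\mutilde$; hence the whole sequence converges, which is exactly $\|\log_\mu(\nu_n)-\log_\mu(\nu)\|_{L^2_\mu}\to 0$. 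The only delicate point is the pointwise step: what makes it work is that the strict bound $|\widetilde T(u)-u|<\pi$ provides a local margin $\eta(u)>0$ that rules out a ``wrap-around'' limit $p_k(u)\ne 0$; everything else is a routine combination of convergence in measure, dominated convergence, and the subsequence principle.
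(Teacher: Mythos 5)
Your proof is correct, but it takes a genuinely different route from the paper's. The paper works directly with the tangent vectors $G(x)=\log_x(T_\mu^\nu(x))$ and $G_n$, splits $\S_1$ into the set $A_n^+$ where the representative of the difference lies in $[-\pi,\pi]$ and its complement $A_n^-$, and reduces everything to showing $\mu(A_n^-)\to 0$; this is done by contradiction, extracting a positive-measure set on which $T_\mu^\nu$ would have to send points to their antipodes, which violates cyclical monotonicity, and then the uniform bound $|G-G_n|\le 2\pi$ closes the argument without ever passing to subsequences. You instead lift to $[0,2\pi]$, extract a $\mutilde$-a.e.\ convergent subsequence from convergence in measure of $d_\mathbb{Z}(\Ttilde_n,\Ttilde)$, and rule out the wrap-around integer $p_k(u)\neq 0$ pointwise using the strict margin $|\Ttilde(u)-u|<\pi$ imported from \Cref{teo:opt_map2}, finishing with dominated convergence and the subsequence principle. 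The two arguments rest on the same geometric fact --- an optimal map cannot reach antipodal points on a set of positive measure --- but the paper re-derives it inline from cyclical monotonicity, whereas you take it as already established by the characterisation theorem; your version is shorter and the pointwise margin argument is quite transparent, at the mild cost of the subsequence extraction, while the paper's version gives the quantitative bound $\int_{A_n^-}|f_n|^2\,d\mu\le 4\pi^2\mu(A_n^-)$ for the whole sequence at once. Both are valid proofs of the lemma.
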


	\begin{proof}
	
	To simplify the notation, call $G:=\log_\mu(\nu)$ and $G_n := G:=\log_\mu(\nu_n)$. So that $G(x)= \log_x(T_\mu^\nu(x))$ and $G_n(x)= \log_x(T_\mu^{\nu_n}(x))$. On top of that define $f_n:\S_1\rightarrow \R$ as $f_n(x)=G(x)-G_n(x)$. Note that $f_n$ is bounded.

	We can write:
 	\begin{align*}
 	\int_{\S_1}d_R(T_\mu^{\nu_n}(x),T_\mu^\nu(x))^2 d\mu(x) = \int_{A^+_n}\mid f_n(x)\mid^2 d\mu(x) + \int_{A^-_n}\mid f_n(x)+2\pi \mid^2 d\mu(x).
 	\end{align*}
	
	Where $A_n^+ = f_n^{-1}([-\pi,\pi])$ and $A_n^- = f_n^{-1}([\pi,2\pi]) \cup f_n^{-1}([-2\pi,-\pi])$.	
	
	We want to prove that:	
 	\begin{align*}
 	\int_{\S_1}\mid f_n(x)\mid^2 d\mu(x) = \int_{A^+_n}\mid f_n(x)\mid^2 d\mu(x) + \int_{A^-_n}\mid f_n(x) \mid^2 d\mu(x)\rightarrow 0,
 	\end{align*}	
 	
	and thus we need to work on the integral $\int_{A^-_n}\mid f_n(x) \mid^2 d\mu(x)$ as we already know $\int_{A^+_n}\mid f_n(x)\mid^2 d\mu(x) \rightarrow 0$. 	
 	
 	We want to show that $\mu(A_n^-)\rightarrow 0$.

 	Reasoning by contradiction, suppose that there exist $\varepsilon>0$ such that for every $N>0$ there is $n>N$ satisfying $\mu(A_n^-)>\varepsilon$. For an ease of notation, instead of taking a subsequence $\{n_k\}$ such that 
 	$\mu(A_{n_k}^-)>\varepsilon$, we just suppose that it holds for every $n$.
 	
 	 Consider $c>0$, $c\in\R$.  The set $B_{n,c}^+:= f_n^{-1}([c,+\infty)) \cap A_n^-$ must satisfy $\mu(B_{n,c}^+)\rightarrow 0$. In fact:
 	\begin{align*}
 	\int_{A_n^-}\mid f_n(x)+2\pi\mid^2 d\mu(x) = \int_{B_{n,c}^+}\mid f_n(x)+2\pi\mid^2 d\mu(x) + \int_{B_{n,c}^-}\mid f_n(x)+2\pi \mid^2 d\mu(x),
 	\end{align*} 	
 	and $\int_{B_{n,c}^+}\mid f_n(x)+2\pi\mid^2 d\mu(x)\geq c\cdot \mu(B_{n,c}^+)$.
	
	If $x\in B_{n,c}^+$  then either 1) $G(x) \in [-\pi,-\pi+c]$ and $G_n(x) \in [\pi-c,\pi]$ or 2) $G_n(x) \in [-\pi,-\pi+c]$ and $G(x) \in [\pi-c,\pi]$. At least one between 1) or 2) must hold an infinite number of times. WLOG 1) holds a countable number of times. Again, instead of taking a subsequence $\{n_k\}$ such that 
 	1) always holds, we just suppose that it holds for every $n$.  	 	
 	
 	In other words, for any $c>0$, $c\in\R$ we have that: for any $n$, for every  $x\in B_{n,c}^+$,  $G(x) \in [-\pi,-\pi+c]$. And $B_{n,c}^+$ is always a set of positive measure. By the compactness of $\S_1$ we can find a set $B$ with positive measure such that $G(x)=-\pi$ for all $x\in B$. Thus $T_\mu^\nu$ is an OTM that sends each point of a set of full measure, into its antipodal point on the circle. This is absurd as it violates the cyclical monotonicity condition. Thus $\mu(A_n^-)\rightarrow 0$.
 	
 	The following concludes the proof.
 	\begin{align*}
 	\int_{\S_1}\mid f_n(x)\mid^2 d\mu(x) =& \int_{A^+_n}\mid f_n(x)\mid^2 d\mu(x) + \int_{A^-_n}\mid f_n(x) \mid^2 d\mu(x)\\
 	\leq& \int_{A^+_n}\mid f_n(x)\mid^2 d\mu(x) + 4\pi^2\mu(A^-_n)\rightarrow 0.
 	\end{align*} 	
	\end{proof}

\subsection{Proof of \Cref{prop:conv_bary}}\label{app:conv_vary}

Our proof relies on the multi-marginal formulation of the Wasserstein barycentre in \cite{agueh_barycenter,panaretos}.
In the following, consider $\mu_1, \ldots, \mu_n$ in $\Wcal(M)$ where $(M, d)$ is connected and compact manifold. 
Let $\Pi(\mu_1, \ldots, \mu_n)$ be the set of probability measures on $M^n := M \times \cdots \times M$ having marginals $\mu_1, \ldots, \mu_n$.
The multi-marginal problem is to minimise
\[
	G(\pi) = \frac{1}{2n^2} \int_{M^n} \sum_{i < j} d(x_i, x_j)^2 \dd \pi(x_1, \ldots, x), \quad \pi \in \Pi(\mu_1, \ldots, \mu_n).
\]

As shown in \cite{kim_barycenter} (Theorem 2.4),  minimising $G(\pi)$ is equivalent to minimising $F(\nu)$ in \eqref{eq:frechet_func}. Indeed, let $\bar x: M^n \rightarrow M$ such that
\[
	x_1, \ldots, x_n \mapsto \argmin_{z \in M} \sum_{i=1}^n d^2(x_i, z)
\]
then the optimal multicoupling $\pi^o = \argmin_{\pi \in \Pi} G(\pi)$ gives the minimiser of the Frech\'et functional via the rule
\begin{equation}\label{eq:bary_from_coupling}
	\bar \mu := \bar x \push \pi^o.
\end{equation}

\begin{lemma}\label{lemma:conv_bary}
	Let $(M, d)$ be a connected compact Riemannian manifold whose exponential map $\exp_M$ is non expansive. Denote by $\log_M$ the associated logarithmic map.
	Let $\mu^*$ be an absolutely continuous measure in $\Wcal_2(M)$ and $\mu_1, \dots, \mu_n \in \Wcal_2(M)$.
	Assume that, for any $i, j = 1, \ldots, n$,
	\[
		\|\log_{\mu^*}(\mu_i) - \log_{\mu^*}(\mu_j)\|^2_{L^2_{\mu^*}} = W^2_2(\mu_i, \mu_j),
	\]
	where $T_{\mu^*}^{\mu_i} = \exp_M \circ \log_{\mu^*}(\mu_i)$.
	Letting $\bar T = n^{-1} \sum_{i=1}^n\log_{\mu^*}(\mu_i)$, then the Wasserestein barycentre of $\mu_1, \ldots, \mu_n$ is $\bar T \push \mu^*$.
\end{lemma}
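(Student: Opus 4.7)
The plan is to show that $\bar\mu := \bar T\push\mu^*$ minimises the Fr\'echet functional $F(\nu) = \frac{1}{2n}\sum_i W_2^2(\nu,\mu_i)$ by sandwiching $F(\bar\mu)$ between matching upper and lower bounds, both equal to $\tfrac{1}{2n^2}\sum_{i<j} W_2^2(\mu_i,\mu_j)$. The upper bound will use only Hilbert-space geometry on $L^2_{\mu^*}$ together with pointwise non-expansiveness of $\exp_M$, whereas the lower bound will rely on a gluing/multi-marginal argument combined with an Alexandrov-type variance inequality available on manifolds of non-negative sectional curvature.

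For the upper bound I would set $v_i := \log_{\mu^*}(\mu_i)$ and $\bar v := n^{-1}\sum_i v_i$, so that $T_i = \exp_M\circ v_i$ and $\bar T = \exp_M\circ\bar v$. Since $(\bar T,T_i)\push\mu^* \in \Gamma(\bar\mu,\mu_i)$ is an admissible coupling, pointwise non-expansiveness of $\exp_M$ at every $x\in M$ delivers
\[
W_2^2(\bar\mu,\mu_i)\leq \int_M d^2(\bar T(x),T_i(x))\, d\mu^*(x) \leq \int_M \|\bar v(x)-v_i(x)\|^2_{T_xM}\, d\mu^*(x) = \|\bar v-v_i\|^2_{L^2_{\mu^*}}.
\]
Summing in $i$ and invoking the Hilbert-space variance identity $\sum_i\|v_i-\bar v\|^2 = \tfrac1n\sum_{i<j}\|v_i-v_j\|^2$ together with the isometry assumption yields $F(\bar\mu)\leq \tfrac{1}{2n^2}\sum_{i<j}W_2^2(\mu_i,\mu_j)$.

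For the matching lower bound I would fix an arbitrary $\nu\in\Wcal_2(M)$, take optimal couplings $\gamma_i\in\Gamma(\nu,\mu_i)$, disintegrate them along their common $\nu$-marginal via the Polish-space disintegration theorem, and glue to form a multi-coupling $\tilde\pi\in\mathcal P(M^{n+1})$ with marginals $\nu,\mu_1,\dots,\mu_n$, so that $F(\nu)=\int\tfrac{1}{2n}\sum_i d^2(y,x_i)\, d\tilde\pi$. Applying pointwise under the integral the Alexandrov variance inequality
\[
\sum_{i=1}^n d^2(y,x_i)\geq \frac{1}{n}\sum_{i<j} d^2(x_i,x_j)\qquad\forall\, y,x_1,\dots,x_n\in M,
\]
and then using $\int d^2(x_i,x_j)\, d\tilde\pi\geq W_2^2(\mu_i,\mu_j)$ (the $(i,j)$-marginal of $\tilde\pi$ being a coupling of $\mu_i,\mu_j$) gives $F(\nu)\geq \tfrac{1}{2n^2}\sum_{i<j}W_2^2(\mu_i,\mu_j)$.

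Combining the two, $F(\bar\mu)\leq\tfrac{1}{2n^2}\sum_{i<j}W_2^2(\mu_i,\mu_j)\leq F(\nu)$ for every $\nu\in\Wcal_2(M)$, so $\bar\mu$ is a minimiser of $F$ and hence a Wasserstein barycentre. The hard part will be justifying the Alexandrov variance inequality cleanly: on Riemannian manifolds of non-negative sectional curvature (equivalently, those whose $\exp$ is non-expansive) it is classical -- it is the PC counterpart of Sturm's (2003) NPC variance inequality and can be derived from Toponogov's hinge comparison -- but since it is rarely cited in the optimal-transport literature I expect to either supply a short self-contained derivation (reducing by symmetry to pairwise hinge comparisons at each $y$) or point to a precise reference in Alexandrov geometry. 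A minor technical point is handling non-absolutely continuous $\nu$ in the disintegration step, which is standard on the compact Polish space $M$.
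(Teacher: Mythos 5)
Your proof is correct, but it takes a genuinely different route from the paper's. The paper goes through the multi-marginal formulation: it invokes the equivalence between minimising the Fr\'echet functional $F$ and minimising the multi-marginal cost $G$ (Theorem 2.4 of Kim and Pass), builds the multicoupling $\pi^*=(T_{\mu^*}^{\mu_1},\ldots,T_{\mu^*}^{\mu_n})\push\mu^*$, shows via non-expansiveness and the isometry hypothesis that $G(\pi^*)$ attains the trivial lower bound $\tfrac{1}{2n^2}\sum_{i<j}W_2^2(\mu_i,\mu_j)$, and then reads off the barycentre as $\bar x\push\pi^*$ with $\bar x$ the pointwise Fr\'echet mean. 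You instead sandwich $F$ directly, using essentially the same couplings for the upper bound and a gluing argument plus a variance inequality for the lower bound. Two remarks. First, the ``hard part'' you flag is actually free: the inequality $\sum_i d^2(y,x_i)\ge\frac1n\sum_{i<j}d^2(x_i,x_j)$ follows in two lines from the non-expansiveness hypothesis already in the statement --- set $u_i=\log_y(x_i)\in T_yM$ (any measurable selection of minimal geodesics), so that $\|u_i\|=d(y,x_i)$ and $d(x_i,x_j)=d(\exp_y u_i,\exp_y u_j)\le\|u_i-u_j\|$, whence $\frac1n\sum_{i<j}d^2(x_i,x_j)\le\frac1n\sum_{i<j}\|u_i-u_j\|^2=\sum_i\|u_i-\bar u\|^2\le\sum_i\|u_i\|^2=\sum_i d^2(y,x_i)$; no appeal to Toponogov or to Sturm-type comparison theory is needed. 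Second, your route buys something concrete: the paper's argument identifies the barycentre as $\bar x\push\pi^*$, and equating this with $\exp(\bar v)\push\mu^*$ requires the pointwise Fr\'echet mean of $\exp_x(v_1(x)),\ldots,\exp_x(v_n(x))$ on $M$ to coincide with the tangent-space average $\exp_x(\bar v(x))$ --- a step the paper leaves implicit and which is not automatic on a manifold. Your direct upper bound on $F$ evaluated at $\exp(\bar v)\push\mu^*$ bypasses this issue entirely, at the modest price of having to supply the lower bound $F(\nu)\ge\tfrac{1}{2n^2}\sum_{i<j}W_2^2(\mu_i,\mu_j)$ by hand rather than quoting it from the multi-marginal literature.
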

\begin{proof}
	Consider the multicoupling $\pi^* = (T_{\mu^*}^{\mu_1}, \ldots, T_{\mu^*}^{\mu_n}) \push \mu^*$. Of course, $\pi^* \in \Pi(\mu_1, \ldots, \mu_n)$ by construction, and $G(\pi^*) \geq G(\pi^o)$.
We observe that, by the non-expansiveness of the exponential map,
\begin{align*}
	\int_{M^n} d^2(x_i, x_j) \dd \pi^*(x_1, \ldots, x_n) &= \int_{M} d^2(T_{\mu^*}^{\mu_i}(x), T_{\mu^*}^{\mu_j}(x)) \dd \mu^*(x) \\
	& \leq \|\log_{\mu^*}(\mu_i) - \log_{\mu^*}(\mu_j)\|^2_{L^2_{\mu^*}} \\
	& = W^2_2(\mu_i, \mu_j),
\end{align*}
where the first inequality follows from the first point of \Cref{teo:continuity} and the last equality by hypothesis.
Hence, $\pi^*$ minimises $G(\pi)$ and the result follows from \eqref{eq:bary_from_coupling}.
\end{proof}

The proof of \Cref{prop:conv_bary} follows from \Cref{lemma:conv_bary} and the fact that the exponential map of $\S_1$ is indeed non-expansive, see \eqref{eq:dist_non_exp}.

\section{Additional Simulation}\label{app:further_simu}

We report here an additional simulation study for the PCA, with the goal of showing the effect of the base point $\bar \mu$.
We consider truncated Gaussian measures on $[0, 2\pi)$ (and extended periodically over $\R$), parametrised by the mean parameter $m$ and scale parameter (square root of the variance) $s$. 
We simulate $n=100$ datapoints by sampling $m \sim \mathcal{U}(0.3 \pi, 1.7 \pi)$ and $s \mid m \sim \mathcal U(\pi / 20, \pi/10) I_{(0, 2 \pi)}(m \pm 3 s)$.
\Cref{fig:gaussian_data} (top row) shows the data and the barycentres in $\Wcal_2(\R)$ and  $\Wcal_2(\S_1)$. In particular, the barycentre in  $\Wcal_2(\R)$ is unimodal and centred on the domain. Also the barycentre in  $\Wcal_2(\S_1)$ presents its tallest mode at the centre of the domain, but can be seen to be trimodal, with, in particular, a mode around the origin.
As already pointed out in \Cref{sec:numerical_ill}, this is due to the fact that the geodesics between some measures, force mass to travel across $0$ around the circle.

\begin{figure}[t]
	\centering
	\begin{subfigure}{0.5\linewidth}
		\centering
		\includegraphics[width=\linewidth]{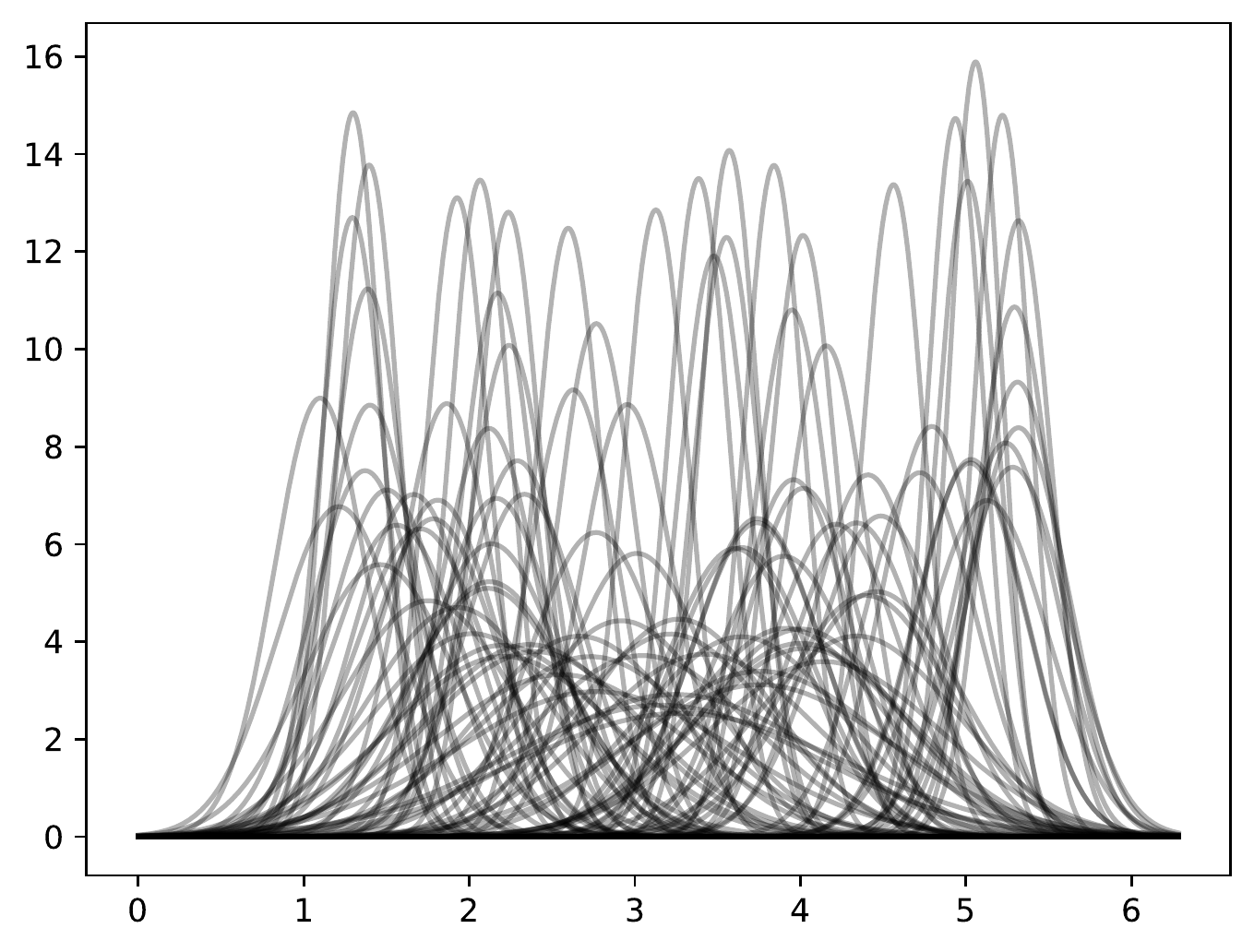}
	\end{subfigure}%
	\begin{subfigure}{0.5\linewidth}
		\centering
		\includegraphics[width=\linewidth]{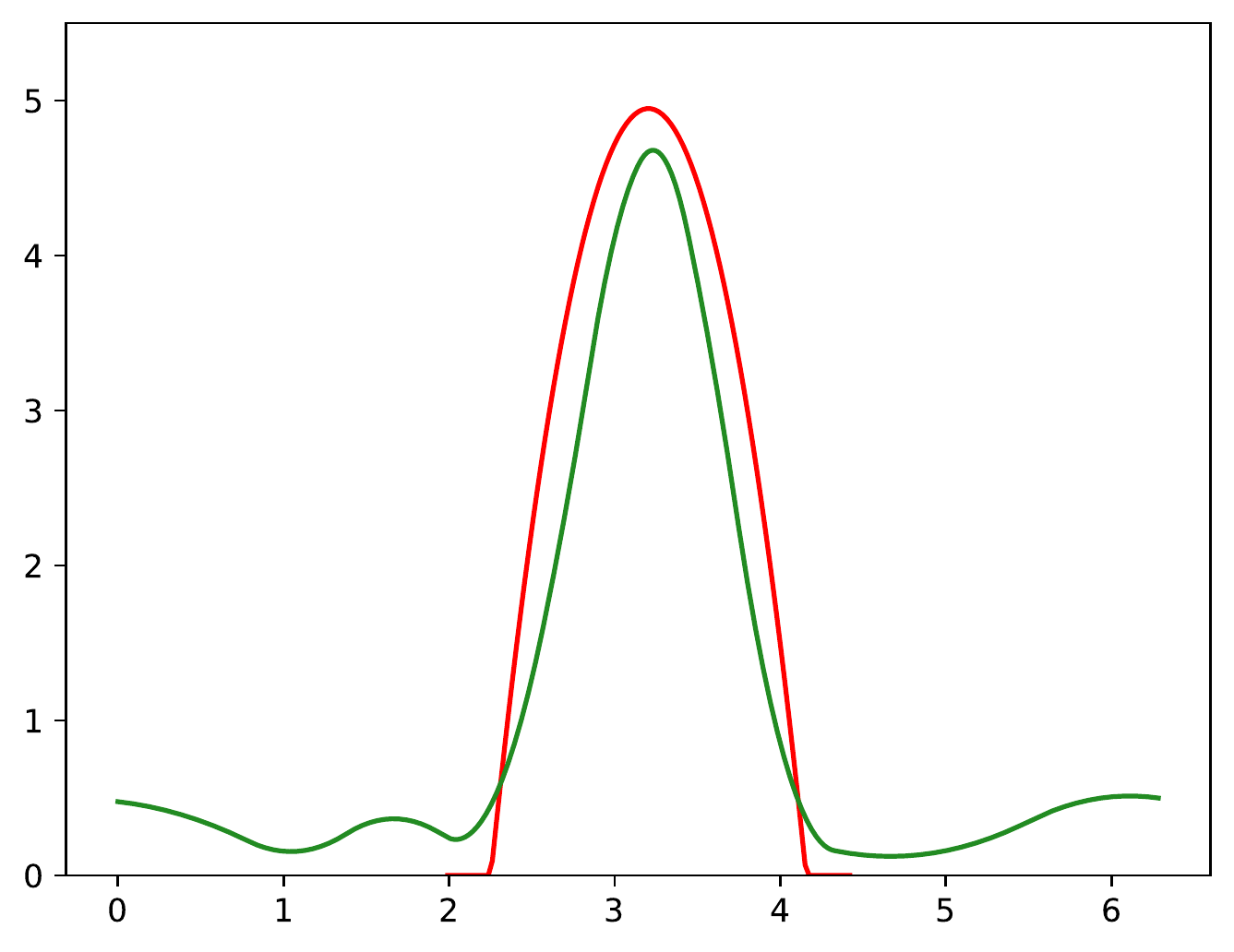}
	\end{subfigure}
	\begin{subfigure}{0.5\linewidth}
		\centering
		\includegraphics[width=\linewidth]{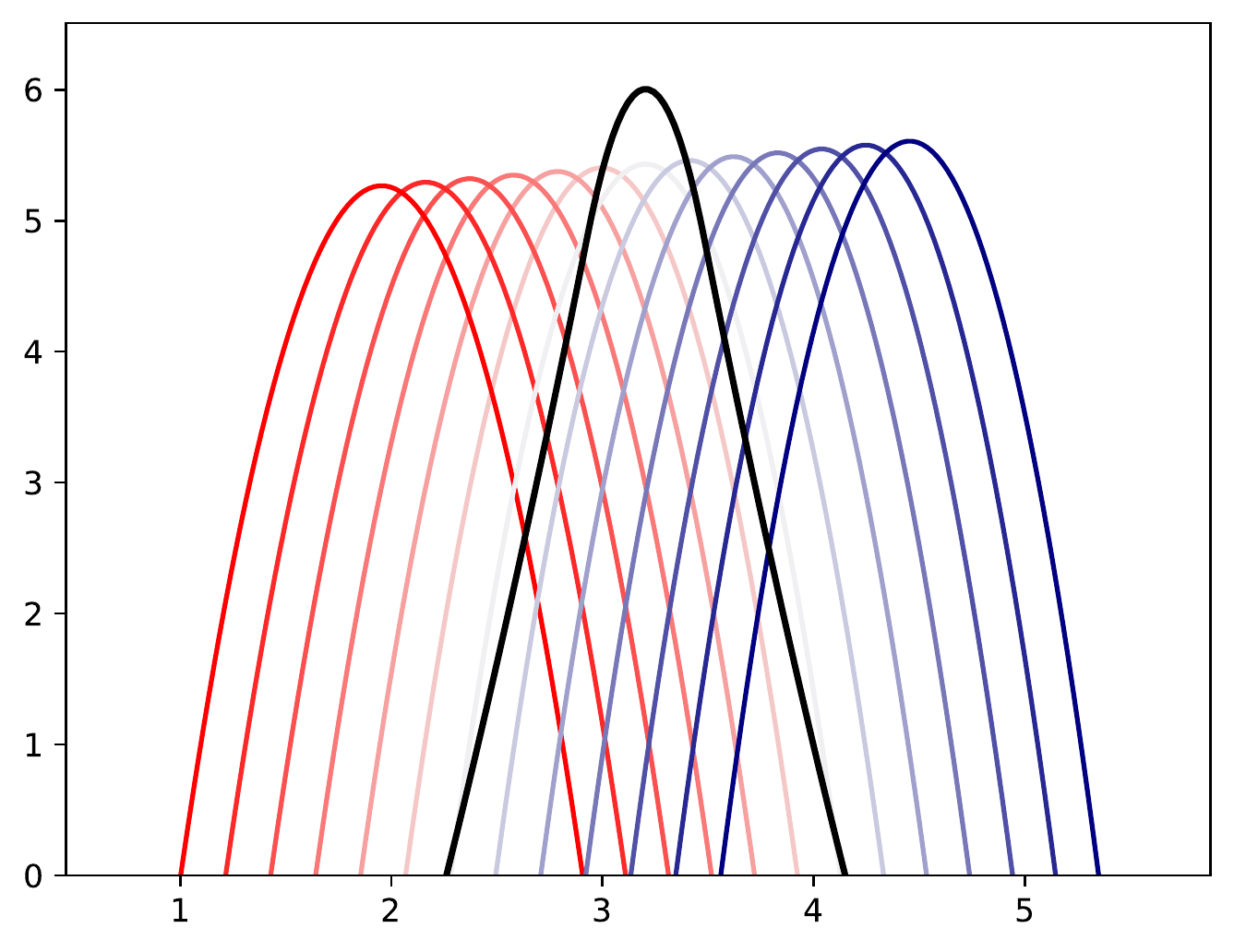}
	\end{subfigure}%
	\begin{subfigure}{0.5\linewidth}
		\centering
		\includegraphics[width=\linewidth]{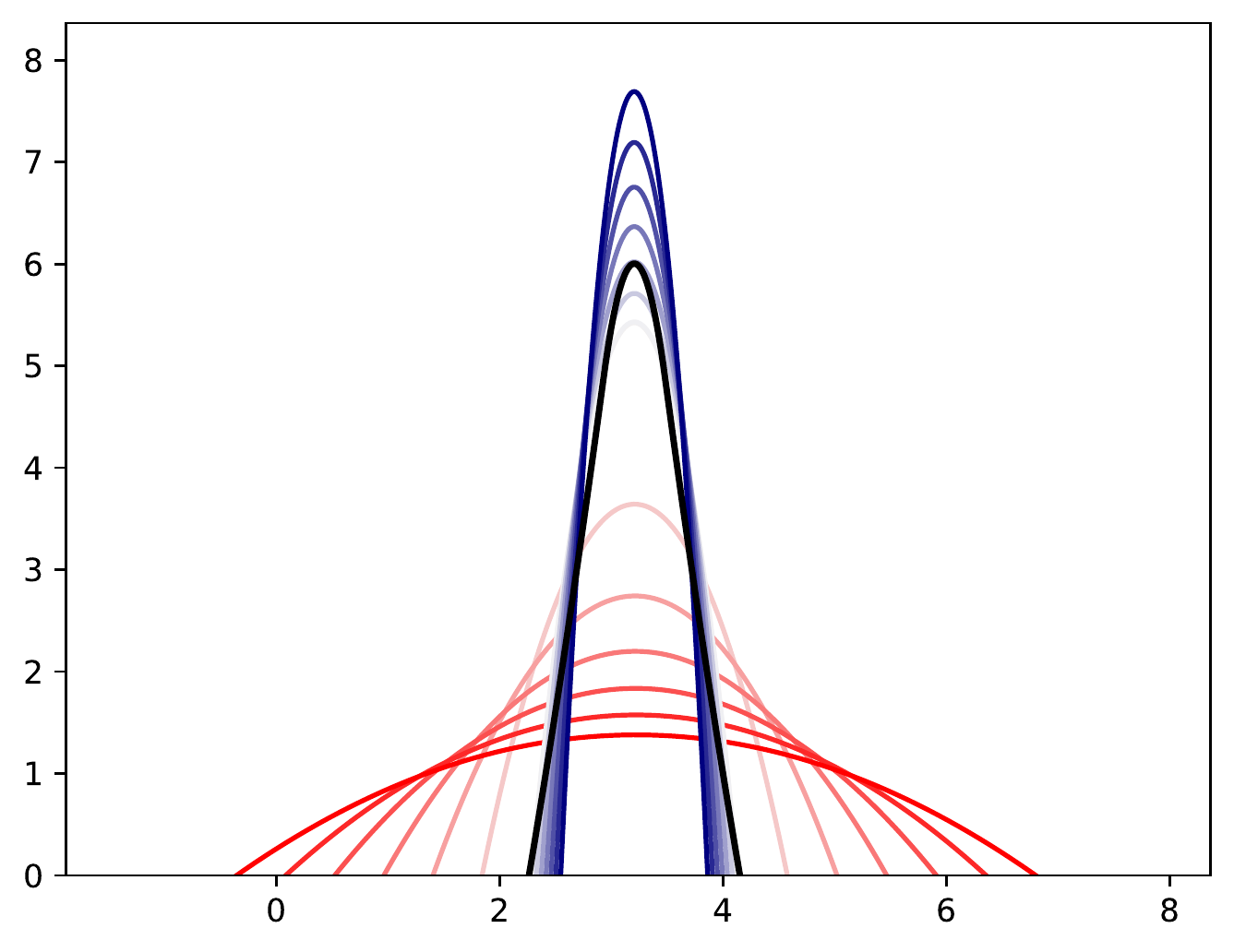}
	\end{subfigure}
	\caption{Top row: Simulated data (left) and comparison between Wasserstein barycentres (right): the red line is the barycentre in $\Wcal_2(\R)$ while the green one is the barycentre in $\Wcal_2(\S_1)$.
	Bottom row: first two principal directions in $\Wcal_2(\R)$ for the Gaussian measures, the solid black line is the barycentre.}
	\label{fig:gaussian_data}
\end{figure}

\begin{figure}[t]
	\centering
	\begin{subfigure}{\linewidth}
		\centering
		\includegraphics[width=\linewidth]{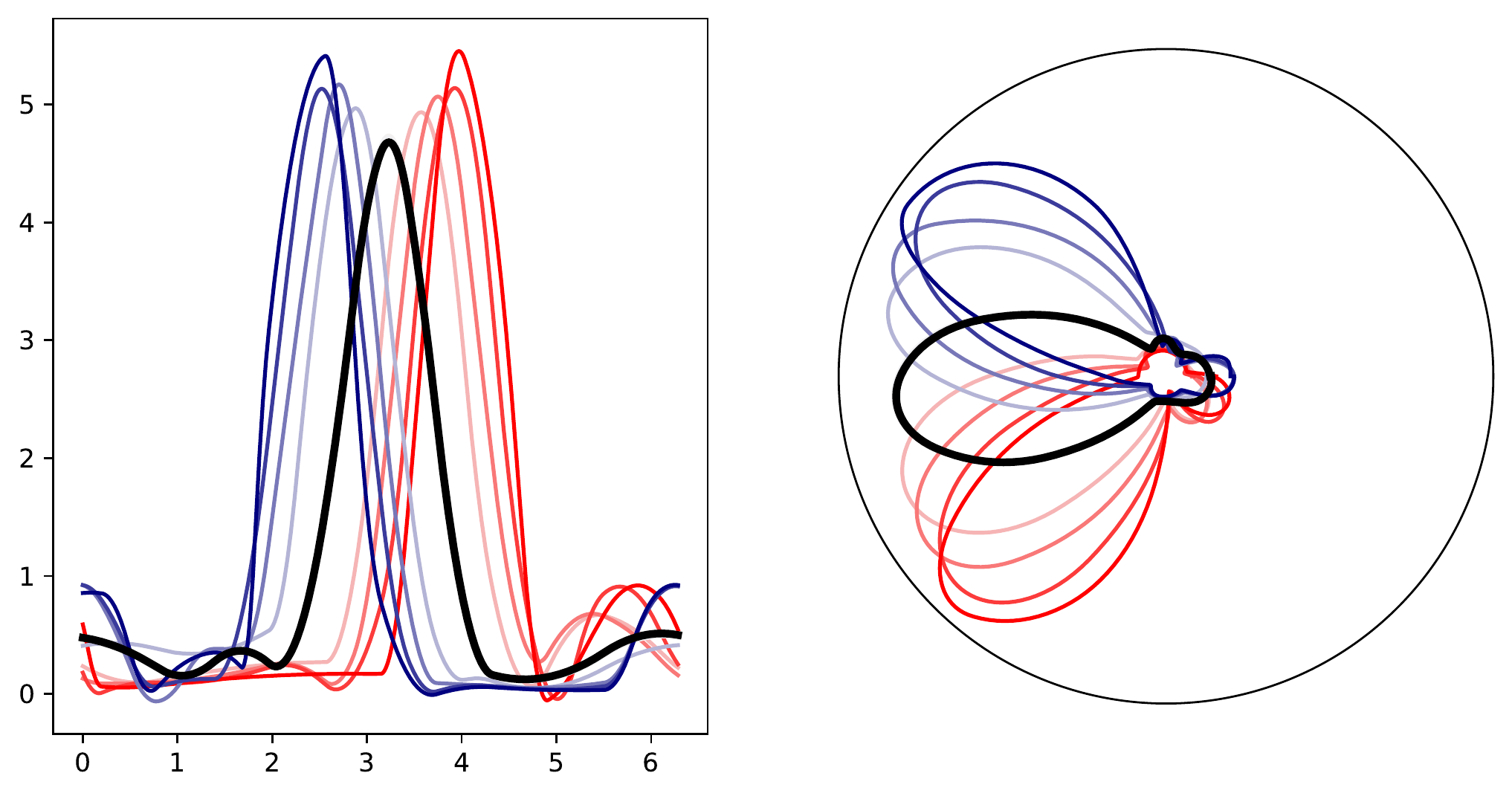}
	\end{subfigure}
	\begin{subfigure}{\linewidth}
		\centering
		\includegraphics[width=\linewidth]{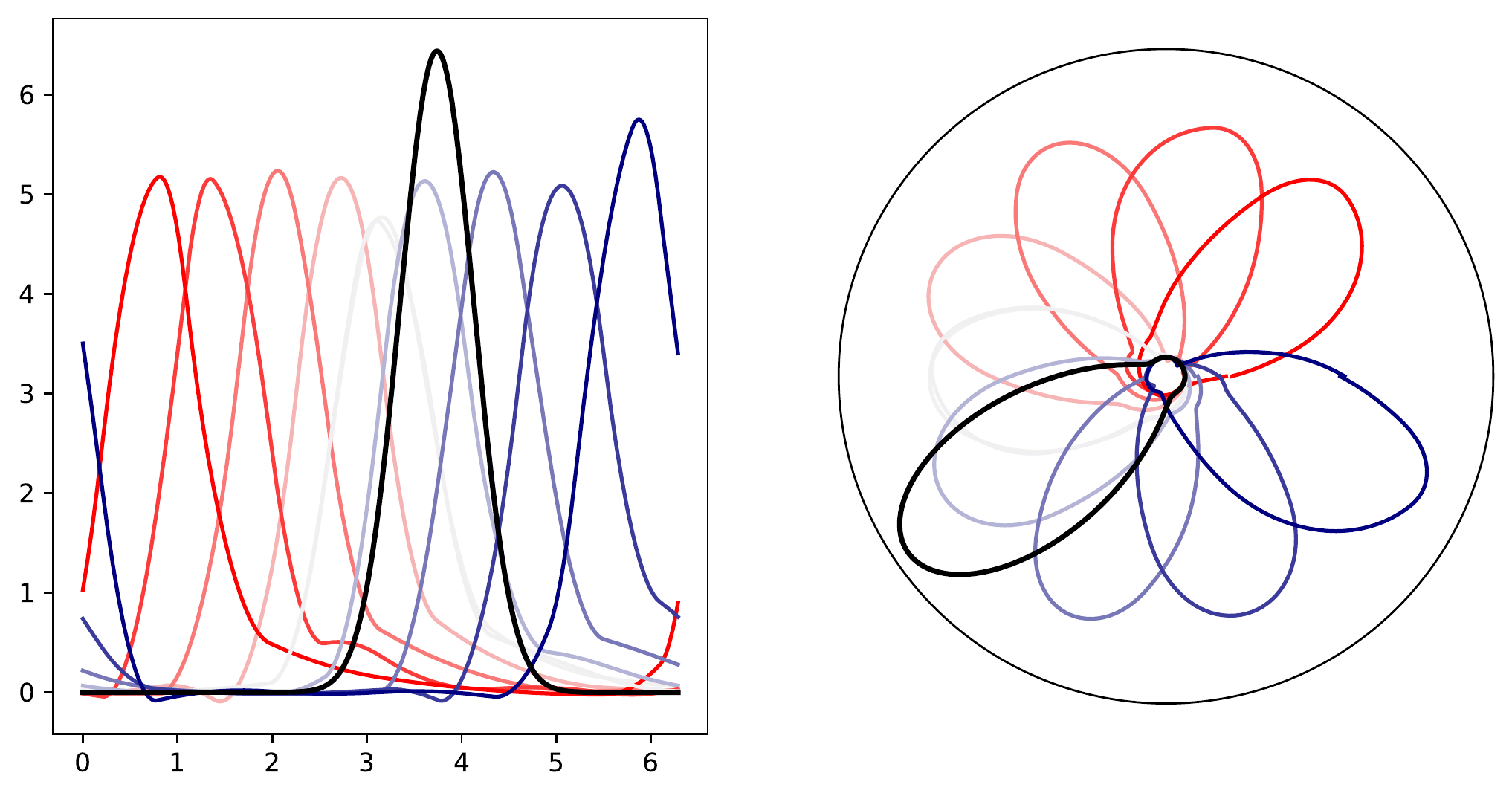}
	\end{subfigure}
	\caption{Top row: 
	Bottom row: }
	\label{fig:gaussian_pds}
\end{figure}

The first two principal directions in $\Wcal_2(\R)$ are displayed in \Cref{fig:gaussian_data} (bottom row), these clearly separate the effect of the location and the one of the scale as expected. 
When performing PCA in $\Wcal_2(\S_1)$, the scale and location's effects are not clearly separated as shown in Figure \ref{fig:gaussian_pds} (top row): the mass located at the minor modes in the barycentre needs to be moved to match the unimodal measures we generated.
In particular, moving along the first principal direction results in 
a less pronounced change in the location, compared to the PCA on $\R$ and
in densities having a more evident mode around the origin. Note that the mode close to $0$ tends to get closer to the main mode. Both these effects combine with the second principal direction (not shown in the plots), to approximate the unimodal distributions in the data set.

We also consider a different point $\bar \mu$ where to centre the PCA, namely one of the observations, displayed in the solid black line in \Cref{fig:gaussian_pds} (bottom). In this case, we set $\log_{\bar \mu}(\mu_0)$ equal to the empirical mean of $\log_{\bar \mu}(\mu_i)$. 
This is clearly unimodal and results in a significantly different first principal direction. Indeed, moving along this direction, the densities are unimodal and their location changes quite substantially, although they present a bit of skewness (see the red densities in \Cref{fig:gaussian_pds} (bottom).
One could argue that this direction is more interpretable than the one found when centering the PCA at the barycentre.
However, the reconstruction error using this direction is two times higher than the ones using the ``original'' direction from the barycentre.
We argue that, when centering the PCA in other points than the barycentre, the interpretation of the principal directions as the main ``sources of variability'' might be misleading, as shown in this case.
However, it is true that this procedure might yield practically relevant insights on the data set under analysis: in this case, it is true that the location of the measures changes significantly in the data.

Hence, if the ultimate goal of performing PCA is to gain insight on the data set, and the barycentre presents some features that are not displayed in the datapoints (as in this case, the barycentre is trimodal while all datapoints are unimodal), it might be worth to consider other candidate points where to centre the PCA.
If the goal is to perform dimensionality reduction instead, we argue that the barycentre is the only sensible candidate to centre the PCA, as it will surely lead to smaller reconstruction errors, which are synonymous to a smaller loss of information in the reduced data.

\section{Additional Plots for the Eye Dataset}\label{sec:app_eye2}

\Cref{fig:eye_7clus} reports the refined clusters of the eye dataset obtained by cutting the dendrogram at height 0.3.
When considering the refined clusters in \Cref{fig:eye_7clus}, we see that all clusters are characterised by slightly different shapes: going from left to right, from the first row to the second, the first one has a bump on the left
side and is relatively flat on the right side;
the second one has a bump on the left side and a smaller bump on the right side; the third one is more stretched along the vertical axis; the fourth one has an even bigger bump on the right side and is flatter on the right side compared to the first cluster; the fifth one is similar in shape to the second one, but has more mass in the lower-left area; the sixth one is much thicker in the upper section than all the other clusters; the last one is quite thick as well in the   
upper section but with a small bump on the right side and a flatter profile on the left side compared to the sixth cluster.

\begin{figure}[t]
	\includegraphics[width=\linewidth]{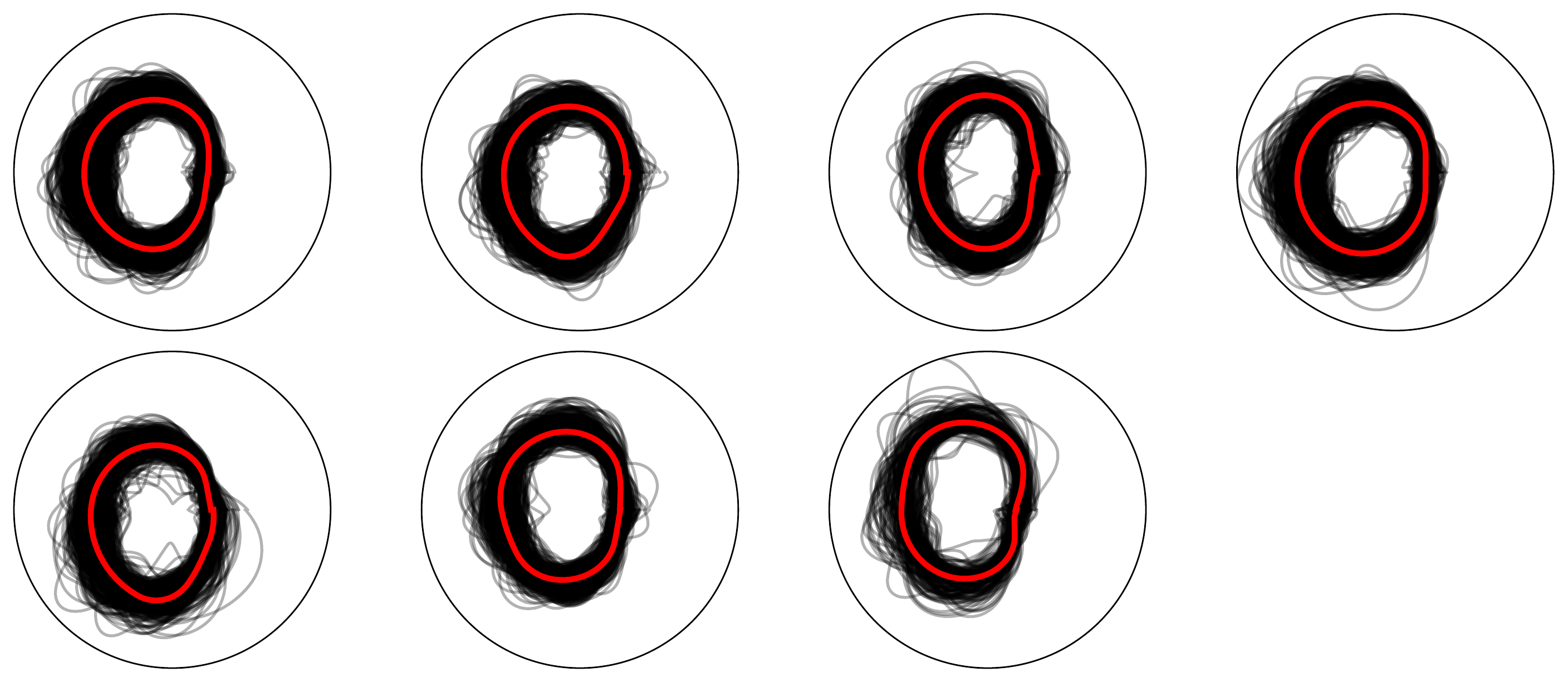}
    \caption{Eye dataset subdivided in seven clusters.}
	\label{fig:eye_7clus}
\end{figure}

\begin{figure}[t]
    \centering
	\begin{subfigure}{0.5\linewidth}
		\includegraphics[width=\linewidth]{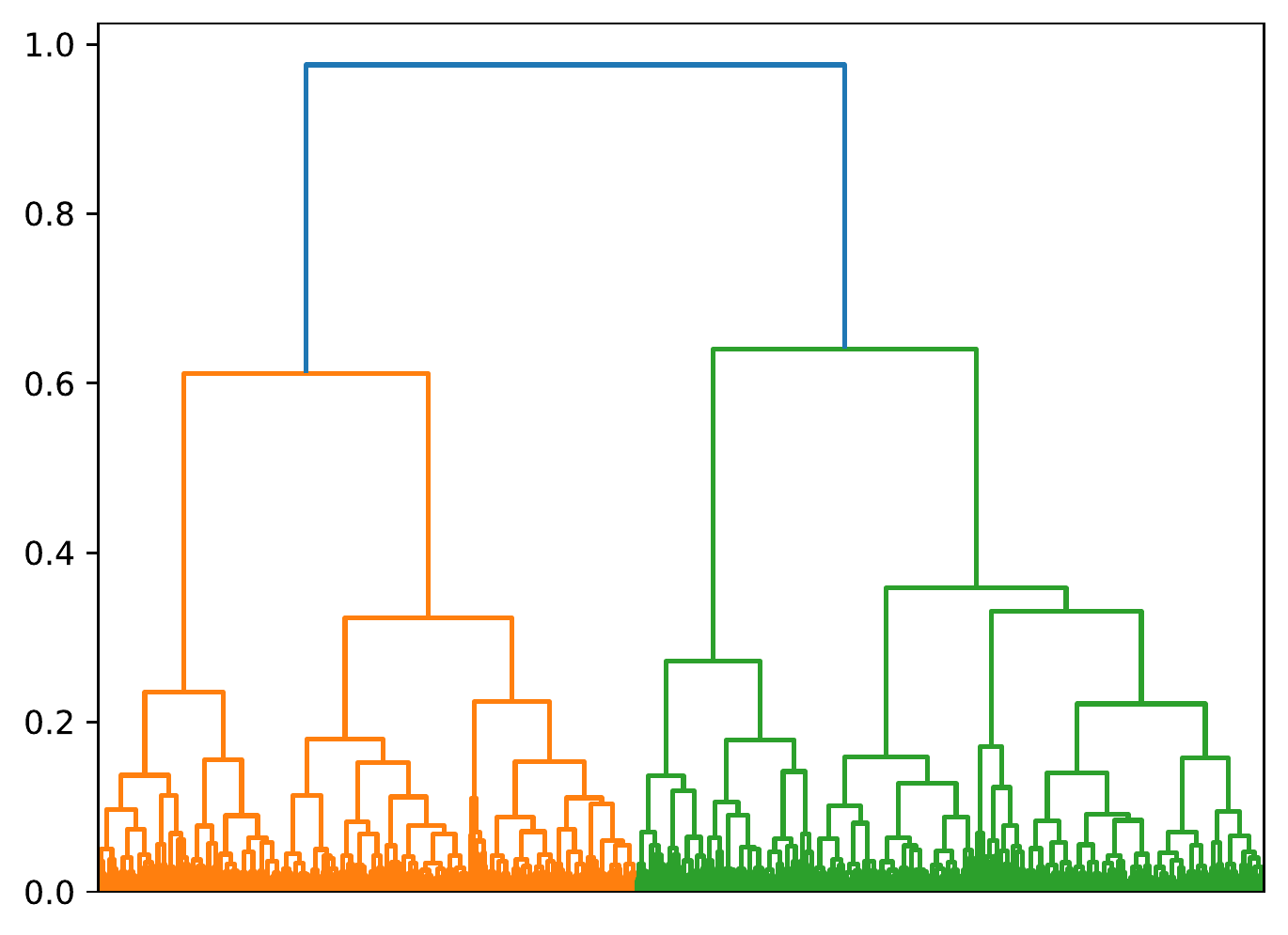}
	\end{subfigure}%
	\begin{subfigure}{0.5\linewidth}
		\includegraphics[width=\linewidth]{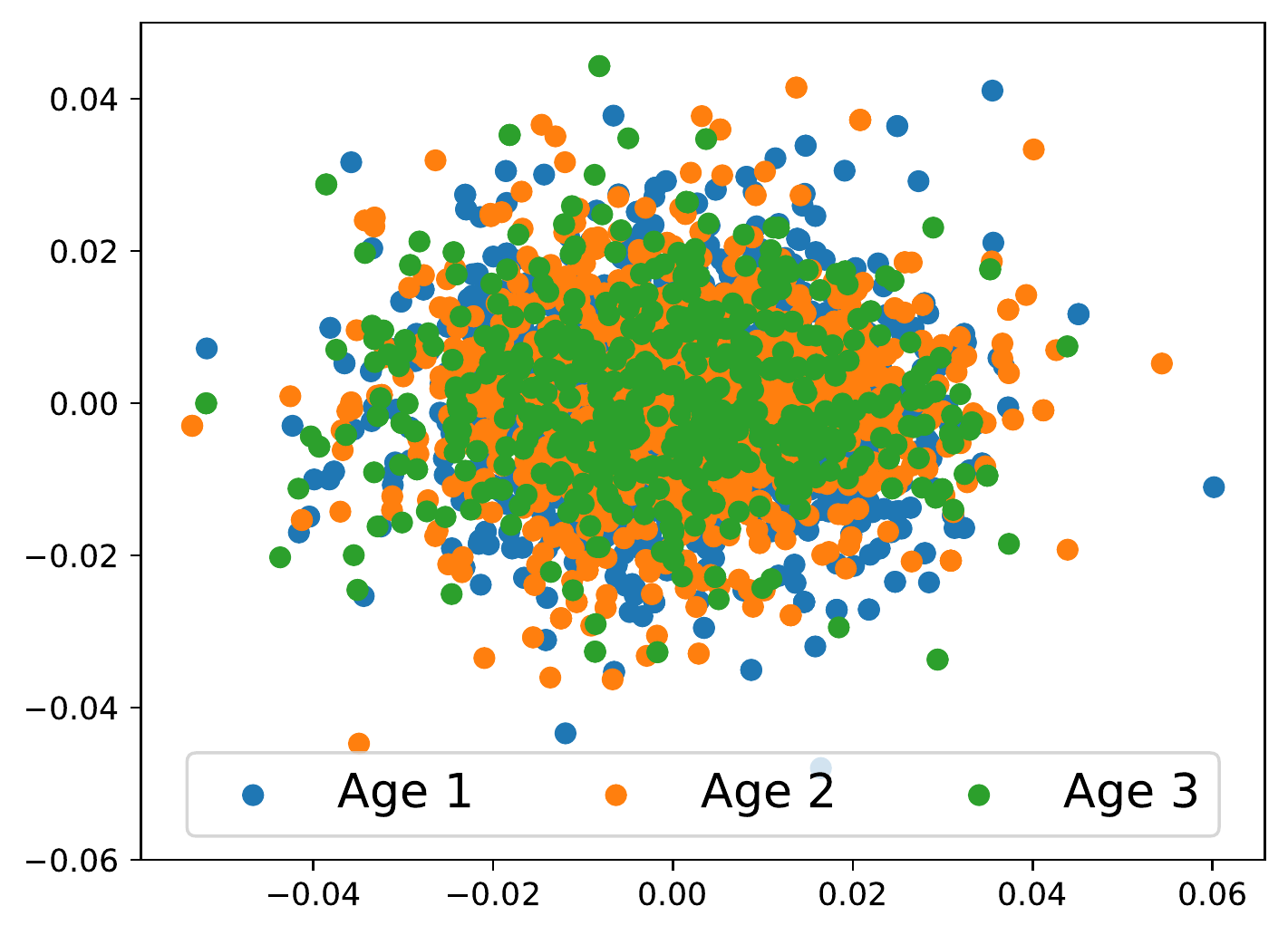}
	\end{subfigure}
	\caption{Hierarchical clustering dendrogram (left) and scatter plot of the scores along the first two principal directions, stratified by age group (colour) for the analysis of the OCT measurements dataset analysed in \Cref{sec:eye}.}
	\label{fig:eye_dend_scores}
\end{figure}

\end{document}